\documentclass[11pt]{amsart}
\usepackage{amsmath,amssymb,graphicx}
\usepackage[english]{babel}
\usepackage[T1]{fontenc}
\usepackage{amsmath, amscd,amsthm, amsfonts, amssymb}
\usepackage{epsfig}
\usepackage{hhline, latexsym}
\usepackage{float}

\newlength{\dinwidth}
\newlength{\dinmargin}
\setlength{\dinwidth}{21.0cm}
\textheight22.0cm   
\textwidth15.0cm
\setlength{\dinmargin}{\dinwidth}
\addtolength{\dinmargin}{-\textwidth}
\setlength{\dinmargin}{0.5\dinmargin}
\oddsidemargin -1.0in
\addtolength{\oddsidemargin}{\dinmargin}
\setlength{\evensidemargin}{\oddsidemargin}
\setlength{\marginparwidth}{0.9\dinmargin}
\marginparsep 8pt \marginparpush 5pt
\topmargin -12pt
\headheight 12pt
\headsep 30pt
\footskip 24pt

\newcommand{\M}{\mathcal{M}}
\newcommand{\R}{\mathbb{R}}

\newcommand{\C}{\mathbb{C}}
\newcommand{\Z}{\mathbb{Z}}
\newcommand{\B}{\mathbb{B}}
\newcommand{\z}{\mathbf{z}}

\newcommand{\ud}{\,\mathrm{d}}
\newcommand{\Rs}{\mathcal{R}}

\newtheorem{proposition}{Proposition}[section]
\newtheorem{corollary}{Corollary}[section]
\newtheorem{remark}{Remark}[section]
\newtheorem{lemma}{Lemma}[section]
\newtheorem{example}{Example}[section]

\begin{document}

\def\theequation {\thesection.\arabic{equation}}
\makeatletter\@addtoreset {equation}{section}\makeatother

\title[]{On the numerical evaluation of\\ algebro-geometric solutions \\ to integrable  equations} 

\author{C.~Kalla}
\address{Institut de Math\'ematiques de Bourgogne,
		Universit\'e de Bourgogne, 9 avenue Alain Savary, 21078 Dijon
		Cedex, France}
    \email{Caroline.Kalla@u-bourgogne.fr}

\author{C.~Klein}
\address{Institut de Math\'ematiques de Bourgogne,
		Universit\'e de Bourgogne, 9 avenue Alain Savary, 21078 Dijon
		Cedex, France}
    \email{Christian.Klein@u-bourgogne.fr}

\date{\today}    

\begin{abstract}

Physically meaningful periodic solutions to certain integrable partial differential 
equations are given in terms of multi-dimensional theta functions 
associated to real Riemann surfaces. Typical analytical problems in the numerical 
evaluation of these solutions are studied. In the case of 
hyperelliptic surfaces efficient algorithms exist even for almost 
degenerate surfaces. This allows the numerical study of solitonic 
limits. For general real Riemann surfaces, the choice of a homology 
basis adapted to the anti-holomorphic involution is important for a 
convenient formulation of the solutions and smoothness conditions. Since 
existing algorithms for algebraic curves produce a homology basis not 
related to automorphisms of the curve, we study symplectic 
transformations to an adapted basis and give explicit formulae for M-curves. As 
examples we discuss solutions of the Davey-Stewartson and the multi-component 
nonlinear Schrödinger equations.
\end{abstract}

\keywords{}

\thanks{We thank D.~Korotkin and V.~Shramchenko for useful discussions and hints. 
This work has been supported in part by the project FroM-PDE funded by the European
Research Council through the Advanced Investigator Grant Scheme, the Conseil R\'egional de Bourgogne
via a FABER grant and the ANR via the program ANR-09-BLAN-0117-01. }

\maketitle

\section{Introduction}

The importance of Riemann surfaces for the construction of almost 
periodic solutions to various integrable partial differential 
equations (PDEs) was realized at the beginning of the 1970s by Novikov, 
Dubrovin and Its, Matveev. The latter found the Its-Matveev 
formula for the Korteweg-de Vries (KdV) equation in terms of 
multi-dimensional theta functions on hyperelliptic Riemann surfaces. 
Similar formulae were later obtained for other integrable PDEs as nonlinear 
Schr\"odinger  (NLS) and sine-Gordon equations. For the history of the topic the 
reader is referred to the reviews \cite{BBEIM} 
and \cite{Dintro}.
Krichever \cite{K} showed that theta-functional solutions to the 
Kadomtsev-Petviashvili equation can be obtained on arbitrary 
Riemann  surfaces. The problems of real-valuedness and smoothness of 
these solutions were solved by Dubrovin and Natanzon in  \cite{DN}.

Novikov criticized the practical relevance of theta functions 
since no numerical algorithms existed at the time to actually compute 
the found solutions. He suggested an effective treatment of theta 
functions (see, for instance,  \cite{Dintro}) by a suitable 
parametrization of the characteristic quantities of a Riemann 
surface, i.e., the periods of holomorphic and certain meromorphic 
differentials on the given surface. This program is limited to genera 
smaller than 4 since so-called Schottky relations exist for higher 
genus between the components of the period matrix of a Riemann 
surface. The task to find such relations is known as the Schottky 
problem. This led to the 
famous Novikov conjecture for the Schottky problem that a Riemann 
matrix (a symmetric matrix with negative definite real part) is the 
matrix of $\mathcal{B}$-periods of the normalized holomorphic differentials of 
a Riemann surface if and only if  Krichever's formula with this 
matrix yields a solution to the KP equation. The conjecture was 
finally proven by Shiota \cite{Shiota}.

First plots of KP solutions appeared in \cite{Mum} and via 
Schottky uniformizations in \cite{bobo}. Since all compact Riemann 
surfaces can be defined via non-singular plane algebraic curves of the form
\begin{equation}
    F(x,y) := \sum_{n=1}^{N}\sum_{m=1}^{M}a_{mn}x^{m}y^{n}=0, \quad     x,y\in \mathbb{C},
    \label{algcur}
\end{equation}
with constant complex coefficients $a_{nm}$, Deconinck and van Hoeij 
developed an approach to the symbolic-numerical treatment of 
algebraic curves. This approach is distributed as the 
\textit{algcurves} package with Maple, see \cite{deco1,deco2,deho}. A purely numerical approach to real 
hyperelliptic Riemann surfaces was given in \cite{FK1,FK2}, and for general 
Riemann surfaces in \cite{FK}. For a review on computational 
approaches to Riemann surfaces the reader is referred to \cite{Bob}.

In this paper we want to address typical analytical problems appearing in the 
numerical study of theta-functional solutions to integrable PDEs, and 
present the state of the art of the field by considering 
concrete examples. The 
case of hyperelliptic Riemann surfaces ($N=2$ in (\ref{algcur})) 
is the most accessible, since equation (\ref{algcur}) can be solved 
explicitly for $y$, and  
since a basis for differentials and homology can be given 
a priori.  Families of hyperelliptic curves can be conveniently parametrized by 
their branch points. The codes \cite{FK1,FK2} are 
able to treat 
effectively numerically collisions of branch points, a limit in 
which certain periods of the  corresponding hyperelliptic surface diverge. If the 
limiting  Riemann surface has genus $0$, the theta series breaks down to a 
finite sum which gives for an appropriate choice of the 
characteristic well 
known solitonic solutions to the studied equation.

For solutions 
defined on general real
algebraic curves, i.e., curves (\ref{algcur}) with all $a_{nm}$ real, 
an important point in applications are reality and smoothness 
conditions. 
These are conveniently formulated for a homology basis 
for which the $\mathcal{A}$-cycles are invariant under the action of 
the anti-holomorphic involution. However, the existing 
algorithms for the computational treatment of algebraic curves 
produce a basis of the homology that is in general not 
related to possible automorphisms of the curve. 
To implement the reality and smoothness requirements, a 
transformation to the basis for which the conditions are formulated 
has to be constructed. We study the necessary symplectic 
transformations and give explicit relations for so-called M-curves, 
curves with the maximal number of real ovals.

To illustrate these concepts, we study for the first time numerically 
theta-functional solutions to integrable equations from 
the family of NLS equations, namely,
the multi-component nonlinear Schrödinger equation
\begin{equation}		
\mathrm{i}\,\frac{\partial \psi_{j}}{\partial t}+\frac{\partial^{2} \psi_{j}}{\partial x^{2}}+2\left(\sum_{k=1}^{n}s_{k}|\psi_{k}|^{2}\right)\psi_{j} =0,  \qquad  j=1,\ldots,n,   \label{n-NLS}
\end{equation}
denoted by $n$-NLS$^{s}$,  where 
$s=(s_{1},\ldots,s_{n})$,  $s_{k}=\pm 1$, 
and the $(2+1)$-dimensional 
 Davey-Stewartson (DS) equations, 
\begin{align}		
\mathrm{i}\,\psi_{t}+\psi_{xx}-\alpha^{2} \,\psi_{yy}+2\,(\Phi+\rho\,|\psi|^{2})\,\psi &=0,     \nonumber  \\	
\Phi_{xx}+\alpha^{2} \,\Phi_{yy}+2\rho\,|\psi|^{2}_{xx} & =0,  \label{DSintro} 
\end{align}
where $\alpha=\mathrm{i}$ or $\alpha=1$ and where $\rho=\pm1$.
Both equations (\ref{n-NLS}) and (\ref{DSintro}) reduce to the NLS equation under certain conditions: 
the former obviously in the case $n=1$, the latter  if $\psi$ is independent 
of the variable $y$ and satisfies certain boundary conditions,  for instance that 
$\Phi+\rho\,|\psi|^{2}$ tends to zero when $x$ tends to 
infinity.

Integrability of the NLS 
equation  was shown by Zakharov and Shabat 
\cite{ZSh} and algebro-geometric solutions were 
given by Its \cite{Its}. 
The multi-component nonlinear 
Schrödinger equation (\ref{n-NLS}) in the case $n=2, \,s=(1,1),$ is called the vector NLS or 
Manakov system. Manakov \cite{Man} first examined this
equation as an asymptotic model for the propagation of the electric
field in a waveguide. Its integrability was shown for $n=2$ by 
Zakharov and Schulman in \cite{ZS} and  for the general case in 
\cite{RSL}. Algebro-geometric solutions to the 2-NLS$^{s}$ equation with $s=(1,1)$ were 
presented in \cite{EEI}, and for the general case in \cite{Kalla}.
The DS equation (\ref{DSintro}) was introduced in \cite{DS} to describe 
the evolution of a three-dimensional wave packet on water of finite 
depth. Its  integrability was shown in \cite{AF}, and solutions in terms of 
multi-dimensional theta functions on general Riemann surfaces were 
given in \cite{Mal, Kalla}.

To ensure the correct numerical implementation of the formulae of \cite{Kalla}, we 
check for each point in the spacetime whether 
certain identities for theta functions are satisfied. Since these 
identities are not used in the code, they provide a strong test for 
the computed quantities.  Numerically the identities are never exactly satisfied, but 
to high precision. The code reports a warning if the residual of the 
test relations is larger than $10^{-6}$ which is well below plotting 
accuracy. Typically the conditions are satisfied to machine 
precision\footnote{We work with double precision, i.e., a precision 
of $10^{-16}$; due to rounding errors this is typically reduced to 
$10^{-12}$ to $10^{-14}$.}.  In addition we compute the solutions on 
a numerical grid and numerically differentiate them. We check in this way for low genus that the solutions to $n$-NLS$^{s}$ and 
DS in terms of multi-dimensional theta functions satisfy the 
respective equations to better than $10^{-6}$. These two completely independent 
tests ensure that the presented plots are showing the correct solutions to better than 
plotting accuracy.

The paper is organized as follows: in Section 2 we recall some facts 
from the theory of multi-dimensional theta functions and the theory of real 
Riemann surfaces, necessary to give theta-functional solutions to the 
$n$-NLS$^{s}$ and DS equations. In Section 3 we consider the hyperelliptic 
case and study concrete examples of low genus,  also in almost 
degenerate situations. In Section 4 we consider examples of 
non-hyperelliptic real Riemann surfaces and discuss symplectic 
transformations needed to obtain smooth solutions.  We add some 
concluding remarks in Section 5.

\section{Theta functions and real Riemann surfaces}

In this section we recall basic facts on Riemann surfaces, in 
particular real surfaces, and multi-dimensional theta functions 
defined on them. Solutions to the $n$-NLS$^{s}$ and the DS
equations in terms of theta functions will be given following \cite{Kalla}.

\subsection{Theta functions}

Let $\Rs _{g}$ be a 
compact Riemann surface of genus $g>0$.  Denote by $(\mathcal{A},\mathcal{B}):=(\mathcal{A}_{1},\ldots,\mathcal{A}_{g},\mathcal{B}_{1},\ldots,\mathcal{B}_{g})$ a canonical homology basis, 
 and by $(\omega_{1},\ldots,\omega_{g})$ the  basis of holomorphic differentials normalized via
\begin{equation}
\int_{\mathcal{A}_{k}}\omega_{j}=2\mathrm{i}\pi\delta_{kj}, \qquad k,j=1,\ldots,g. \label{norm hol diff}
\end{equation}
The matrix $\B$ with entries $\mathbb{B}_{kj}=\int_{\mathcal{B}_{k}}\omega_{j}$ of $\mathcal{B}$-periods of the normalized holomorphic differentials  $\omega_{j}, \,j=1,\ldots,g$,
is symmetric and has a negative definite real part. The theta function with (half integer) characteristic $\delta=[\delta_1, \delta_2]$ is defined by
\begin{equation}
\Theta_{\B}[\delta](\z)=\sum_{\mathbf{m}\in\Z^{g}}\exp\left\{\tfrac{1}{2}\langle \B(\mathbf{m}+\delta_1),\mathbf{m}+\delta_1\rangle+\langle \mathbf{m}+\delta_1,\z+2\mathrm{i}\pi\delta_2\rangle\right\},\label{theta}
\end{equation}
for any $\z\in\C^{g}$; here $\delta_1,\delta_2\in \left\{0,\frac{1}{2}\right\}^{g}$ 
are the vectors of the characteristic $\delta$; $\langle.,.\rangle$ denotes the 
scalar product $\left\langle \mathbf{u},\mathbf{v} \right\rangle=\sum_{i}u_{i}\,v_{i}$ for any $\mathbf{u},\mathbf{v}\in\C^{g}$. The theta function $\Theta[\delta](\z)$ is even if 
the characteristic $\delta$ is even, i.e., if $4\left\langle 
\delta_1,\delta_2 \right\rangle$ is even, and odd if the characteristic 
$\delta$ is odd, i.e., if $4\left\langle \delta_1,\delta_2 
\right\rangle$ is odd. An even characteristic is called non-singular if 
$\Theta[\delta](0)\neq 0$, and an odd characteristic is called 
non-singular if the gradient $\nabla\Theta[\delta](0)$ is non-zero. The theta function with characteristic is related to 
the theta function with zero characteristic (the Riemann theta 
function denoted by $\Theta$) as 
follows
\begin{equation}
\Theta[\delta](\z)=\Theta(\z+2\mathrm{i}\pi\delta_2+\B\delta_1)\,\exp\left\{\tfrac{1}{2}\langle \B\delta_1,\delta_1\rangle+\langle\z+2\mathrm{i}\pi\delta_2,\delta_1\rangle\right\}.\label{2.3}
\end{equation}

Denote by $\Lambda$ the lattice $\Lambda=\{2\mathrm{i}\pi \mathbf{N}+\B 
\mathbf{M}, \,\,\mathbf{N},\mathbf{M}\in\Z^{g}\}$ generated by the 
$\mathcal{A}$ and $\mathcal{B}$-periods of the normalized holomorphic differentials $\omega_{j}, \,j=1,\ldots,g$. The complex torus $J(\Rs_{g})=\C^{g} / \Lambda$ is called the Jacobian of the Riemann surface $\Rs_{g}$. The theta function (\ref{theta}) has the following quasi-periodicity property  with respect to the lattice $\Lambda$:
\begin{equation}
\Theta[\delta](\mathbf{z}+2\mathrm{i}\pi \mathbf{N}+\B \mathbf{M}) \nonumber
\end{equation}
\begin{equation}
=\Theta[\delta](\z)  \exp\left\{-\tfrac{1}{2}\langle \B\mathbf{M},\mathbf{M}\rangle-\langle \z,\mathbf{M}\rangle+2\mathrm{i}\pi(\langle\delta_1,\mathbf{N}\rangle-\langle\delta_2,\mathbf{M}\rangle)\right\}.\label{2.4}
\end{equation}

For the formulation of solutions to  physically relevant integrable 
equations in terms of multi-dimensional theta functions, there is 
typically a preferred homology basis in which the solution takes a 
simple form. Let 
$(\mathcal{A},\mathcal{B})$ and 
$(\tilde{\mathcal{A}},\tilde{\mathcal{B}})$ be arbitrary canonical homology basis defined on $\Rs_{g}$, represented here
 by $2g$-dimensional vectors.  Under the change of homology basis 
\begin{equation}
\left(\begin{matrix}
A&B\\
C&D
\end{matrix}\right)
\left(\begin{matrix}
\mathbf{\tilde{\mathcal{A}}}\\
\mathbf{\tilde{\mathcal{B}}}
\end{matrix}\right) 
=
\left(\begin{matrix}
\mathbf{\mathcal{A}}\\
\mathbf{\mathcal{B}}
\end{matrix}\right),  \label{transf Vinn2}
\end{equation}
where $\left(\begin{matrix}
A&B\\
C&D
\end{matrix}\right)\in Sp(2g,\Z)$ is a symplectic matrix,
 the theta function (\ref{theta}) transforms as 
\begin{equation}
\Theta_{\mathbb{B}}[\delta](\z)=\,\kappa\,\sqrt{\det \tilde{\mathbb{K}}}\,\exp\left\{\tfrac{1}{2}\,\tilde{\z}^{t}\,(\tilde{\mathbb{K}}^{t})^{-1}B\,\tilde{\z}\right\}\,\Theta_{\tilde{\mathbb{B}}}[\tilde{\delta}](\tilde{\z}), \label{transf theta}
\end{equation}
where $\tilde{\mathbb{K}}=2\mathrm{i}\pi A+B\,\tilde{\mathbb{B}}$ and
\begin{align}
\mathbb{B}&=2\mathrm{i}\pi\,(2\mathrm{i}\pi\,C+D\,\tilde{\mathbb{B}})\,\tilde{\mathbb{K}}^{-1},\\
\tilde{\z}&=(2\mathrm{i}\pi)^{-1}\,\tilde{\mathbb{K}}^{t}\,\z,\\
\left(\begin{matrix}
\delta_{1}\\
\delta_{2}
\end{matrix}\right)
&=
\left(\begin{array}{rr}
A&-B\\
-C&D
\end{array}\right)\left(\begin{matrix}
\tilde{\delta}_{1}\\
\tilde{\delta}_{2}
\end{matrix}\right)+\frac{1}{2}\,\text{Diag}\left(\begin{matrix}
B\,A^{t}\\
D\,C^{t}
\end{matrix}\right), \label{transf caract}
\end{align}
for any $\z\in\C^{g}$, where $\mbox{Diag}(.)$ denotes the column vector of the 
diagonal entries of the matrix.
Here $\kappa$ is a constant independent of $\z$ and 
$\tilde{\mathbb{B}}$ (the exact value of $\kappa$ is not needed for our purposes).

The Abel map $\Rs_{g}\longrightarrow J(\Rs_{g})$ is defined by 
\begin{equation}
    \int_{p_{0}}^{p}:=\int_{p_{0}}^{p}\omega,  \label{abel}
\end{equation} 
for any $p\in\Rs_{g}$, where $p_{0}\in\Rs_{g}$ is the base point of 
the application, and where $\omega=(\omega_{1},\ldots,\omega_{g})^{t}$ is the vector of the normalized holomorphic differentials.

Now let $k_{a}$ denote a local parameter near $a\in\Rs_{g}$ and consider 
the following expansion of the normalized holomorphic differentials  $\omega_{j}, \, j=1,\ldots,g$,
\begin{equation} 
\omega_{j}(p)= \left(V_{a,j}+W_{a,j}\,k_{a}(p)+\ldots\right)\,\ud k_{a}(p), \label{exp hol diff}
\end{equation}
for any point $p\in\Rs_{g}$ lying in a neighbourhood of $a$, where $V_{a,j},\,W_{a,j}\in\C$.
Let us denote by $D_{a}$ (resp. $D'_{a}$) the operator of the directional derivative along the vector $\mathbf{V}_{a}=(V_{a,1},\ldots,V_{a,g})^{t}$ (resp.  $\mathbf{W}_{a}$).
According to \cite{Mum} and \cite{Kalla}, the theta function satisfies the following identities derived from Fay's identity \cite{Fay}:
\begin{equation}
D_{a}D_{b}\ln\Theta(\z)\,=\,q_{1}+q_{2}\,\frac{\Theta(\z+\int^{b}_{a})\,\Theta(\z-\int^{b}_{a})}{\Theta(\z)^{2}}\,,
\label{cor Fay2}
\end{equation}
\begin{equation}
D'_{a}\ln\frac{\Theta(\z+\int^{b}_{a})}{\Theta(\z)}+D_{a}^{2}\ln\frac{\Theta(\z+\int^{b}_{a})}{\Theta(\z)}+\Big(D_{a}\ln\frac{\Theta(\z+\int^{b}_{a})}{\Theta(\z)}-K_{1}\Big)^{2}+2D^{2}_{a}\ln\Theta(\z)+K_{2}=0,\label{my corol}
\end{equation}
for any $\z\in\C^{g}$ and any distinct points $a,b\in\Rs_{g}$; here the scalars $q_{i},K_{i}$, $i=1,2$ depend on the points $a,b$ and are given by
\begin{equation}
q_{1}(a,b)=D_{a}D_{b}\ln\Theta[\delta](\textstyle\int^{b}_{a}), \label{q1}
\end{equation}
\begin{equation}
q_{2}(a,b)=\frac{D_{a}\,\Theta[\delta](0)\,D_{b}\,\Theta[\delta](0)}{\Theta[\delta](\int^{b}_{a})^{2}},\label{q2}
\end{equation}
\begin{equation}
K_{1}(a,b)=\frac{1}{2}\,\frac{D_{a}'\,\Theta[\delta](0)}{D_{a}\,\Theta[\delta](0)}+D_{a}\ln\Theta[\delta](\textstyle\int^{b}_{a})\,, \label{K1}
\end{equation}
\begin{equation}
K_{2}(a,b)=-\,D'_{a}\ln\Theta(\textstyle\int^{b}_{a})-D_{a}^{2}\ln\left(\Theta(\textstyle\int^{b}_{a})\,\Theta(0)\right)-\Big(D_{a}\ln\Theta(\textstyle\int^{b}_{a})-K_{1}(a,b)\Big)^{2}, \label{K2}
\end{equation}
where $\delta$ is a non-singular odd characteristic.

\subsection{Real Riemann surfaces}

A Riemann surface $\mathcal{R}_{g}$ is called real  if it admits an 
anti-holomorphic involution 
$\tau:\mathcal{R}_{g}\rightarrow\mathcal{R}_{g}, \,\tau^{2}=\mbox{id}$. The 
connected components of the set of
fixed points of the anti-involution $\tau$ are called real ovals of $\tau$. We denote 
by $\mathcal{R}_{g}(\R)$ the set of fixed points. 
According to Harnack's inequality \cite{Harnack}, the number $k$ of real ovals of a real Riemann surface of genus $g$ cannot exceed 
$g+1$:   $0\leq k \leq g+1$. 
Curves with the maximal number $k=g+1$ of real ovals are called M-curves.

The complement $\mathcal{R}_{g}\setminus 
\mathcal{R}_{g}(\R)$ has either one or two connected components. The curve $\mathcal{R}_{g}$ is called a \textit{dividing} curve 
if $\mathcal{R}_{g}\setminus\mathcal{R}_{g}(\R)$ has two components, and $\mathcal{R}_{g}$ is called \textit{non-dividing} if $\mathcal{R}_{g}\setminus \mathcal{R}_{g}(\R)$ is connected (notice that an M-curve is always a dividing curve).

\begin{example}
Consider the hyperelliptic curve of genus $g$ defined by the equation
\begin{equation}
\mu^{2}= \prod_{i=1}^{2g+2} (\lambda-\lambda_{i}), \label{hyp real1}
\end{equation}
where the branch points $\lambda_{i}\in\R$ are ordered such that $\lambda_{1}<\ldots<\lambda_{2g+2}$. On such a curve, we can define two anti-holomorphic involutions $\tau_{1}$ and $\tau_{2}$, given respectively by $\tau_{1}(\lambda,\mu)=(\overline{\lambda},\overline{\mu})$ and $\tau_{2}(\lambda,\mu)=(\overline{\lambda},-\overline{\mu})$. Projections of real ovals of $\tau_{1}$  on the $\lambda$-plane coincide with  the intervals $[\lambda_{2g+2},\lambda_{1}],\ldots,[\lambda_{2g},\lambda_{2g+1}]$, whereas projections of real ovals of $\tau_{2}$ on the $\lambda$-plane coincide with  the intervals $[\lambda_{1},\lambda_{2}],\ldots,[\lambda_{2g+1},\lambda_{2g+2}]$. Hence the curve (\ref{hyp real1}) is an M-curve with respect to both anti-involutions $\tau_{1}$ and $\tau_{2}$.  \label{example rea hyp}
\end{example}

Let  $(\mathcal{A},\mathcal{B})$ be a basis of the homology group $H_{1}(\Rs_{g})$. 
According to Proposition 2.2 in  Vinnikov's paper 
\cite{Vin}, there exists a canonical homology basis (that we call for 
simplicity
`Vinnikov basis' in the following) such that
\begin{equation}
\left(\begin{matrix}
\tau \mathbf{\mathcal{A}}\\
\tau \mathbf{\mathcal{B}}
\end{matrix}\right)
=
\left(\begin{matrix}
\mathbb{I}_{g}&0\\
\mathbb{H}\,\,&-\mathbb{I}_{g}\,\,
\end{matrix}\right)
\left(\begin{matrix}
\mathbf{\mathcal{A}}\\
\mathbf{\mathcal{B}}
\end{matrix}\right), \label{hom basis}
\end{equation}
where $\mathbb{I}_{g}$ is the $g\times g$ unit matrix, and $\mathbb{H}$ is a  block diagonal $g\times g$ matrix, defined as follows:
\\\\
1) if $\mathcal{R}_{g}(\R)\neq \emptyset$, 
\[\mathbb{H}={\left(\begin{matrix}
0&1&&&&&&\\
1&0&&&&&&\\
&&\ddots&&&&&\\
&&&0&1&&&\\
&&&1&0&&&\\
&&&&&0&&\\
&&&&&&\ddots&\\
&&&&&&&0
\end{matrix}\right)} \quad \text{if $\mathcal{R}_{g}$ is dividing},\]
\[\mathbb{H}={\left(\begin{matrix}
1&&&&&\\
&\ddots&&&&\\
&&1&&&\\
&&&0&&\\
&&&&\ddots&\\
&&&&&0
\end{matrix}\right)} \quad \text{if $\mathcal{R}_{g}$ is non-dividing};\]
rank$(\mathbb{H})=g+1-k$ in both cases.
\\\\
2) if $\mathcal{R}_{g}(\R)= \emptyset$, (i.e. the curve does not have real oval), then
\[\mathbb{H}={\left(\begin{matrix}
0&1&&&\\
1&0&&&\\
&&\ddots&&\\
&&&0&1\\
&&&1&0
\end{matrix}\right)}
\quad \text{or} \quad 
\mathbb{H}={\left(\begin{matrix}
0&1&&&&\\
1&0&&&&\\
&&\ddots&&&\\
&&&0&1&\\
&&&1&0&\\
&&&&&0
\end{matrix}\right)};\]
rank$(\mathbb{H})=g$ if $g$ is even, rank$(\mathbb{H})=g-1$ if $g$ is odd.

Now let us choose the canonical homology basis in $H_{1}(\Rs_{g})$ 
satisfying (\ref{hom basis}), take $a,b\in\Rs_{g}$ and assume that 
$\tau a=a$ and $\tau b=b$. Denote by  $\ell$ a contour connecting the 
points $a$ and $b$ which does not intersect the canonical homology basis. Then
 the action of $\tau$ on the generators $\left(\mathbf{\mathcal{A}},\mathbf{\mathcal{B}},\ell\right)$ of the relative homology group $H_{1}(\Rs_{g},\{a,b\})$ is given by
\begin{equation}
\left(\begin{matrix}
\tau \mathbf{\mathcal{A}}\\
\tau \mathbf{\mathcal{B}}\\
\tau \ell
\end{matrix}\right)
=
\left(\begin{array}{ccc}
\mathbb{I}_{g}&0&0\\
\mathbb{H}\,\,&-\mathbb{I}_{g}\,\,&0\\
\mathbf{N}^{t}&\,\,\mathbf{M}^{t}&1
\end{array}\right)
\left(\begin{matrix}
\mathbf{\mathcal{A}}\\
\mathbf{\mathcal{B}} \\
\ell
\end{matrix}\right), \label{hom basis 3}
\end{equation}
where the vectors $\mathbf{N},\mathbf{M}\in\Z^{g}$ are related 
by (see \cite{Kalla})
\begin{equation}
2\,\mathbf{N}+\mathbb{H}\mathbf{M}=0. \label{NM stable}
\end{equation}

\subsection{Theta-functional solutions of the $n$-NLS$^{s}$ equation}

Algebro-geometric data associated to smooth theta-functional 
solutions of the $n$-NLS$^{s}$ equation (\ref{n-NLS}) consist of
$\{\Rs_{g},\tau,f,z_{a}\}$, where
$\mathcal{R}_{g}$ is a compact Riemann surface of genus $g>0$ 
dividing with respect to an anti-holomorphic involution $\tau$, and  
admitting a real meromorphic function  $f$ of degree $n+1$; here $z_{a}\in\R$ is a non critical value of $f$ such that the fiber $f^{-1}(z_{a})=\left\{a_{1},\ldots,a_{n+1}\right\}$ over $z_{a}$ belongs to the set $\Rs_{g}(\R)$. 
Let us choose natural local parameters $k_{a_{j}}$ near  $a_{j}$ given by the projection map $f$, namely, $k_{a_{j}}(p)=f(p)-z_{a}$ 
for any point $p\in\Rs_{g}$ lying in a neighbourhood of $a_{j}$.

Denote by  $\left(\mathbf{\mathcal{A}},\mathbf{\mathcal{B}},\ell_{j}\right)$ the generators of the relative homology group $H_{1}(\Rs_{g},\{a_{n+1},a_{j}\})$.
Let $\mathbf{d}\in\R^{g}$ and $\theta\in\R$. Then the following 
functions   $\psi_{j}, \,j=1,\ldots,n$,  give smooth solutions
of the $n$-NLS$^{s}$ equation  (\ref{n-NLS}), see \cite{Kalla},
\begin{equation}    
\psi_{j}(x,t)=|A_{j}|\,e^{\mathrm{i}\theta}\,\frac{\Theta(\mathbf{Z}-\mathbf{d}+\mathbf{r}_{j})}{\Theta(\mathbf{Z}-\mathbf{d})}\,\exp \left\{-\mathrm{i}\,(E_{j}\,x-F_{j}\,t)\right\},\label{sol n-NLS}
\end{equation}
where $|A_{j}|=|q_{2}(a_{n+1},a_{j})|^{1/2}\,\exp\left\{\tfrac{1}{2}\left\langle\mathbf{d},\mathbf{M}_{j}\right\rangle\right\}.$
The vector $\mathbf{M}_{j}\in\Z^{g}$ is defined by the action of $\tau$ on the relative homology group $H_{1}\left(\Rs_{g},\{a_{n+1},a_{j}\}\right)$ (see (\ref{hom basis 3})). 
Moreover, $\mathbf{r}_{j}=\int_{\ell_{j}}\omega$, and the vector $\mathbf{Z}$ reads
\[\mathbf{Z}=\mathrm{i}\,\mathbf{V}_{a_{n+1}}\,x+\mathrm{i}\,\mathbf{W}_{a_{n+1}}\,t,\] 
where vectors $\mathbf{V}_{a_{n+1}}$ and $\mathbf{W}_{a_{n+1}}$ are defined in (\ref{exp hol diff}). The scalars $E_{j},F_{j}$ are given by
\begin{equation}
E_{j}=K_{1}(a_{n+1},a_{j}),\qquad F_{j}=K_{2}(a_{n+1},a_{j})-2\,\sum_{k=1}^{n}q_{1}(a_{n+1},a_{k}), \label{E,N n-NLS}
\end{equation}
and scalars $q_{i},K_{i}$, $i=1,2$ are defined in (\ref{q1})-(\ref{K2}). 
According to \cite{Kalla}, necessary conditions for the functions $\psi_{j}$ in (\ref{sol n-NLS}) to solve the $n$-NLS$^{s}$ equation are the identities (\ref{cor Fay2}) and (\ref{my corol}) with $(a,b):=(a_{n+1},a_{j})$.

The signs $s_{1},\ldots,s_{n}$ in (\ref{n-NLS}) are given by 
\begin{equation}
s_{j}= \exp\left\{\mathrm{i}\pi(1+\alpha_{j})\right\}, \label{sj}
\end{equation} 
where  $\alpha_{j}\in\Z$ denote certain intersection indices on $\Rs_{g}$ defined as follows:
let $\tilde{a}_{n+1},\tilde{a}_{j}\in\Rs_{g}(\R)$ lie in a neighbourhood 
of $a_{n+1}$ and $a_{j}$ respectively such that $f(\tilde{a}_{n+1})=f(\tilde{a}_{j})$. Denote by $\tilde{\ell}_{j}$ an oriented contour connecting $\tilde{a}_{n+1}$ and $\tilde{a}_{j}$.
Then 
\begin{equation}
\alpha_{j}=(\tau \tilde{\ell}_{j}-\tilde{\ell}_{j})\circ\ell_{j}  \label{alpha}
\end{equation}
is the intersection index of the closed contour $\tau 
\tilde{\ell}_{j}-\tilde{\ell}_{j}$ and the contour $\ell_{j}$; this 
index is computed in the relative homology group $H_{1}(\Rs_g,\{a_{n+1},a_{j}\})$.

In particular, it was shown in \cite{Kalla} that solutions of the 
focusing $n$-NLS$^{s}$ equation, i.e., for $s=(1,\ldots,1)$, are obtained when the branch points of the meromorphic function $f$ are pairwise conjugate.

\subsection{Theta-functional solutions of the DS equations}

Now let us introduce smooth solutions of the DS equations.
In characteristic coordinates 
\[\xi=\frac{1}{2}(x-\mathrm{i}\alpha \,y),\quad \eta=\frac{1}{2}(x+\mathrm{i}\alpha \,
y),\qquad  \alpha=\mathrm{i} \, \text{or}\, 1,\]
the DS equations  (\ref{DSintro}) take the form
\begin{align}	
\mathrm{i}\psi_{t}+\frac{1}{2}(\partial_{\xi\xi}+\partial_{\eta\eta})\psi+2\,\phi\,\psi &=0,     \nonumber  \\
\partial_{\xi}\partial_{\eta}\phi+\rho\, 
\frac{1}{2}(\partial_{\xi\xi}+\partial_{\eta\eta})|\psi|^{2} & =0,  \label{DS} 
\end{align}
where $\phi:=\Phi+\rho\,|\psi|^{2}$, $\rho=\pm1$.
Recall that DS1$^{\rho}$ denotes the Davey-Stewartson 
equation when $\alpha=\mathrm{i}$ (in this case $\xi$ and $\eta$ are both 
real), and DS2$^{\rho}$ when $\alpha=1$ (in this case $\xi$ and $\eta$ are pairwise conjugate).

In both cases, for the DS1$^{\rho}$ and 
DS2$^{\rho}$ equations, the solutions have the form \cite{Mal, Kalla}:
\begin{equation}
\psi(\xi,\eta,t)=|A|\,e^{\mathrm{i}\theta}\,\frac{\Theta(\mathbf{Z}-\mathbf{d}+\mathbf{r})}{\Theta(\mathbf{Z}-\mathbf{d})}\,\exp\left\{ -\mathrm{i}\left(G_{1}\,\xi+G_{2}\,\eta-G_{3}\,\tfrac{t}{2}\right)\right\}, \label{psi DS}
\end{equation}
\begin{equation}
\phi(\xi,\eta,t)=\frac{1}{2}\,(\ln\,\Theta(\mathbf{Z}-\mathbf{d}))_{\xi\xi}+\frac{1}{2}\,(\ln\,\Theta(\mathbf{Z}-\mathbf{d}))_{\eta\eta}+\frac{h}{4}.\label{phi DS}
\end{equation}
Here $\mathbf{r}=\int_{a}^{b}\omega$ for some distinct points $a,b\in\Rs_{g}$, and the vector $\mathbf{Z}$ is 
defined as
\begin{equation}
\mathbf{Z}=\mathrm{i}\left(\kappa_{1}\,\mathbf{V}_{a}\,\xi-\kappa_{2}\,\mathbf{V}_{b}\,\eta+(\kappa^{2}_{1}\,\mathbf{W}_{a}-\kappa^{2}_{2}\,\mathbf{W}_{b})\,\tfrac{t}{2}\right). \label{Z DS}
\end{equation}
Moreover, the scalars $G_{1},\,G_{2}$ and $G_{3}$ read
\begin{equation}
G_{1}=\kappa_{1}\,K_{1}(a,b),\qquad G_{2}=\kappa_{2}\,K_{1}(b,a), \label{N12 DS}
\end{equation}
\begin{equation}
G_{3}=\kappa^{2}_{1}\,K_{2}(a,b)+\kappa^{2}_{2}\,K_{2}(b,a)+h, \label{N3 DS}
\end{equation}
where the scalars $K_{1},K_{2}$ are defined in (\ref{K1}) and (\ref{K2}). 
As shown in \cite{Kalla}, necessary conditions for the functions $\psi$ (\ref{psi DS}) and $\phi$ (\ref{phi DS}) to solve the DS equations are  the identities (\ref{cor Fay2}) and (\ref{my corol}).

Algebro-geometric data associated to smooth solutions (\ref{psi DS}), 
(\ref{phi DS}) 
of the DS1$^{\rho}$ equation consist of 
$\{\Rs_{g},\tau,a,b,k_{a},k_{b}\}$, where $\Rs_{g}$ is a compact Riemann surface of genus $g>0$, dividing with respect to an 
anti-holomorphic involution $\tau$, $a,b$ are two distinct points  in 
$\Rs_{g}(\R)$, and $k_{a},k_{b}$ denote local parameters near $a$ and 
$b$ respectively which satisfy $\overline{k_{a}(\tau p)}=k_{a}(p)$ 
for any $p$ lying  in a neighbourhood of $a$, and $\overline{k_{b}(\tau 
p)}=k_{b}(p)$ for any $p$ lying  in a neighbourhood of $b$. The remaining 
quantities satisfy the conditions: $\mathbf{d}\in\R^{g}$, 
$\theta,h\in\R$, $\kappa_{2}\in\R\setminus\left\{0\right\}$, and
\begin{equation}
\kappa_{1}=-\rho\,\tilde{\kappa}_{1}^{2}\,\kappa_{2}\,q_{2}(a,b)\,\exp\left\{\tfrac{1}{2}\left\langle \B \mathbf{M},\mathbf{M}\right\rangle+\left\langle \mathbf{r}+\mathbf{d},\mathbf{M}\right\rangle\right\}, \label{kappa DS1}
\end{equation}
for some $\tilde{\kappa}_{1}\in\R$, where $\mathbf{M}\in\Z^{g}$ is 
defined in (\ref{hom basis 3}). The scalar $|A|$ is given by
\[|A|=\left|\tilde{\kappa}_{1}\,\kappa_{2}\,q_{2}(a,b)\right|\,\exp\left\{\left\langle \mathbf{d},\mathbf{M}\right\rangle\right\},\]
where the quantity $q_{2}$ is defined in (\ref{q2}).

Algebro-geometric data associated to smooth solutions (\ref{psi DS}), 
(\ref{phi DS}) of the DS2$^{\rho}$ equation consist of 
$\{\Rs_{g},\tau,a,b,k_{a},k_{b}\}$, where $\Rs_{g}$ is a compact Riemann surface of genus $g>0$ with an 
anti-holomorphic involution $\tau$,  $a,b$ are two distinct points such that $\tau a=b$, and $k_{a},k_{b}$ denote local parameters near $a$ and $b$ respectively which satisfy $\overline{k_{b}(\tau p)}=k_{a}(p)$ for any point $p$ lying in a neighbourhood of $a$. Moreover, $\mathbf{d}\in\mathrm{i}\R^{g}$, $\theta,h\in\R$, $\kappa_{1},\kappa_{2}\in\C\setminus\left\{0\right\}$ satisfy  $\overline{\kappa_{1}}=\kappa_{2}$, and the scalar $|A|$ is given by
\[|A|=|\kappa_{1}|\,|q_{2}(a,b)|^{1/2}.\]
Smooth solutions of the DS2$^{+}$ equation are obtained when the curve $\Rs_{g}$ is an M-curve with respect to $\tau$, whereas solutions to DS2$^{-}$ are smooth if  the associated Riemann surface does not have real oval with respect to $\tau$, 
and if there is no pseudo-real function of degree $g-1$ on it (i.e., 
function which satisfies $\overline{f(\tau p)}=-f(p)^{-1}$), see 
\cite{Mal}.

\begin{remark}
    The symmetric structure of the DS equations (\ref{DS}) with respect to $\xi$ 
    and $\eta$  implies that a solution $\psi=\Psi(\xi,\eta,t)$ to 
    DS1$^{+}$ leads to a solution $\Psi(-\xi,\eta,t)$ of DS1$^{-}$. 
\end{remark}

\section{Hyperelliptic case}
Here we consider concrete examples for the solutions, in 
terms of multi-dimensional theta functions, to DS and $n$-NLS$^{s}$ on 
hyperelliptic  Riemann surfaces. We first review the numerical methods to visualize 
the solutions and discuss how the accuracy is tested.

\subsection{Computation on real hyperelliptic curves}

The simplest example of algebraic curves are hyperelliptic 
curves, 
\begin{equation}\mu^{2}=
\begin{cases}
     &   \prod_{i=1}^{2g+2}(\lambda-\lambda_{i}), \mbox{ without 
     branching at infinity}\\
     & \prod_{i=1}^{2g+1}(\lambda-\lambda_{i}), \mbox{ with 
     branching at infinity}
\end{cases},  \label{hyperel}
\end{equation}
where $g$ 
is the genus of the Riemann surface, and where we have for the branch 
points $\lambda_{i}\in \mathbb{C}$ the relations 
$\lambda_{i}\neq\lambda_{j}$  for $i\neq j$. If the number of finite branch points 
is odd, the curve is 
branched at infinity. Recall that all Riemann surfaces of 
genus $2$ are hyperelliptic, and that the involution $\sigma$ which interchanges 
the sheets, $\sigma(\lambda,\mu)=(\lambda,-\mu)$, is an automorphism on  any
hyperelliptic curve in the  form (\ref{hyperel}). A vector of 
holomorphic differentials for these  surfaces is given by 
$(1,\lambda,\ldots,\lambda^{g-1})^{t}\ud\lambda/\mu$.  
For a real hyperelliptic  curve, the 
branch points are either real or pairwise conjugate. As we saw in Example 2.1,  if all branch points $\lambda_{i}$ are real and ordered such that  
$\lambda_{1}<\ldots<\lambda_{2g+2}$, the hyperelliptic curve is an M-curve with respect to both
anti-holomorphic involutions  $\tau_{1}$ and $\tau_{2}$ defined in 
the example. The other case of 
interest in the context of smooth solutions to $n$-NLS$^{s}$ and DS are real
curves without real branch point. For the involution $\tau_{1}$, a curve given by 
$\mu^{2}=\prod_{i=1}^{g+1}(\lambda-\lambda_{i})(\lambda-\overline{\lambda}_{i})$, with $\lambda_{i}\in \mathbb{C}\setminus\mathbb{R}$, $i=1,\ldots,g+1$, in this case is 
dividing (two points whose projections onto $\mathbb{C}$ have 
respectively a positive and a negative imaginary part cannot be connected by a contour which does not cross a real oval), whereas a curve given by 
$\mu^{2}=-\prod_{i=1}^{g+1}(\lambda-\lambda_{i})(\lambda-\overline{\lambda}_{i})$ has no real oval, 
and vice versa for the involution $\tau_{2}$.

In the following, we will only consider real  
hyperelliptic curves without branching 
at infinity and write the defining equation in the form 
$\mu^{2}= (\lambda-\xi)(\lambda-\eta)\prod_{i=1}^{g}(\lambda-E_{i})(\lambda-F_{i})$. 
It is  possible to introduce a convenient homology basis 
on the related surfaces, see 
Fig.~\ref{cutsystem} for the case $\eta=\overline{\xi}$.
\begin{figure}[htb!]
\begin{center}
  \includegraphics[width=0.7\textwidth]{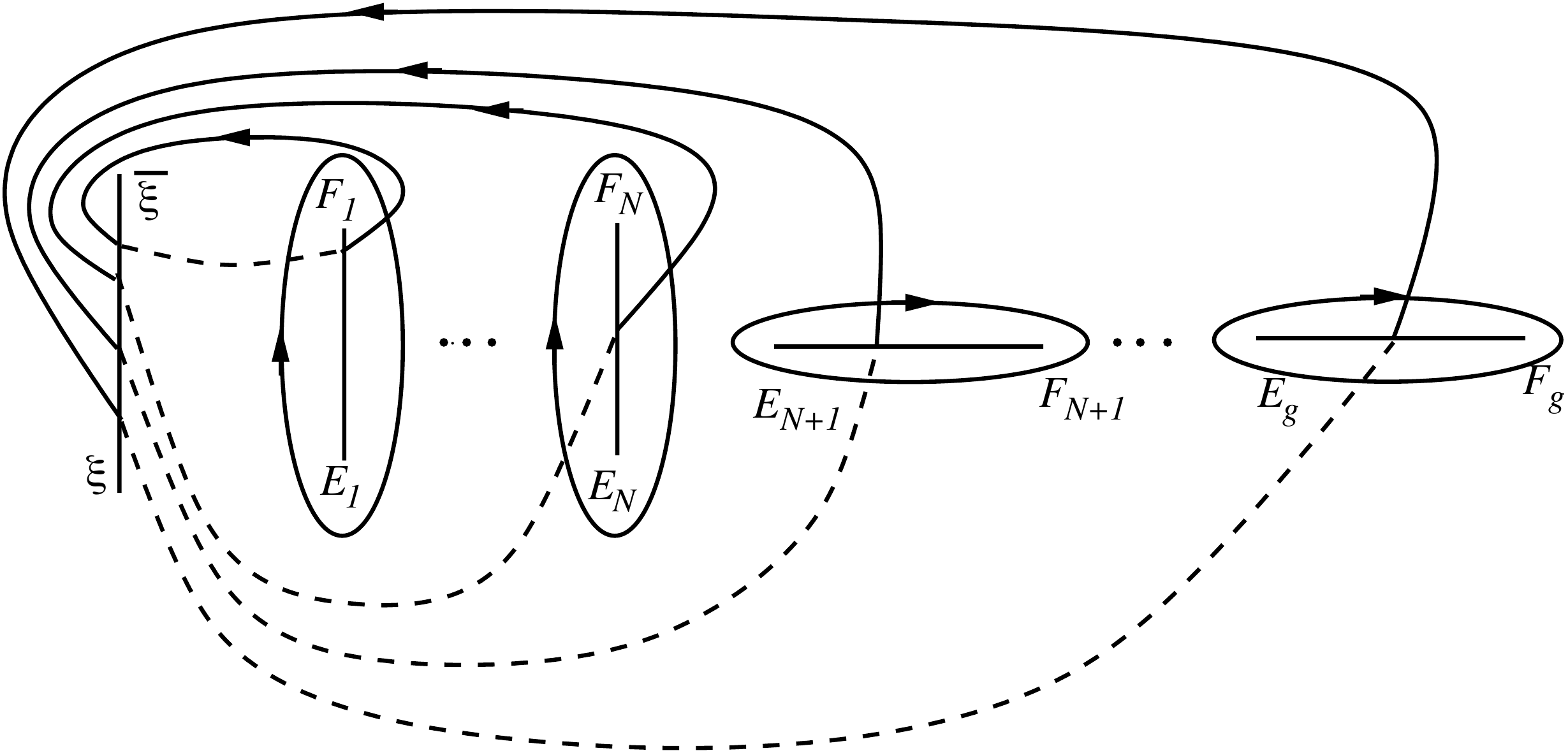}
\end{center}
 \caption{\textit{Homology basis on real hyperelliptic 
 curves, contours on sheet 1 are solid, contours on sheet 2 are dashed.
 $\mathcal{A}$-cycles are the closed contours entirely on sheet 1.}}
   \label{cutsystem}
\end{figure}

The simple form of 
the algebraic relation between $\mu$ and $\lambda$ for hyperelliptic 
curves makes the generation of very efficient numerical codes possible, see, 
for instance, \cite{FK1,FK2} for details.  These codes allow the treatment of almost degenerate 
Riemann surfaces, i.e., the case where the branch points almost 
collide pairwise,  where the 
distance of the branch points is of the order of machine 
precision: $|E_{i}- F_{i}|\sim 10^{-14}$. The homology basis Fig.~\ref{cutsystem} is adapted to this kind of degeneration.

The Abel map $\int_{a}^{b}\omega$ between two points $a$ and $b$ is computed in the following way: 
the sheet identified at the point $a=(\lambda(a),\mu(a))$ (where we take 
for $\mu$ the root computed by Matlab) is labeled sheet $1$, and at the point $(\lambda(a),-\mu(a))$, sheet $2$. Then the 
ramification point whose projection to the $\lambda$-sphere has the minimal distance to $\lambda(a)$ is determined. For 
simplicity we assume always that this is the point $\xi$ in 
Fig.~\ref{cutsystem} (for another branch point, this leads to the 
addition of half-periods, see e.g.~\cite{BBEIM}). This means we 
compute $\int_{a}^{b}\omega$ as 
$\int_{a}^{b}\omega=\int_{\xi}^{b}\omega-\int_{\xi}^{a}\omega$. The choice of a 
branch point as the base point of the Abel map has the advantage that 
a change of sheet of a point $a$ just implies a change of sign of the 
integral: 
$\int_{\xi}^{(\lambda(a),\mu(a))}\omega=-\int_{\xi}^{(\lambda(a),-\mu(a))}\omega$. 
To compute the integral $\int_{\xi}^{a}\omega$, one has to 
analytically continue $\mu$ on the connecting line between 
$\lambda(a)$ and $\xi$ onto the $\lambda$-sphere. Whereas the root $\mu$ is not 
supposed to have any branching on the considered path,  the square root in Matlab is 
branched on the negative real axis.  To analytically continue $\mu$ 
on the path $[\lambda(a),\xi]$, we compute  the Matlab root  at some  
$\lambda_{j}\in[\lambda(a),\xi] $, $j=0,\ldots,N_{c}$ and 
analytically continue it starting from $\mu(a)$ by demanding that 
$|\mu(\lambda_{j+1})-\mu(\lambda_{j})|<|\mu(\lambda_{j+1})+\mu(\lambda_{j})|$.
The 
so defined sheets will be denoted here and in the following by 
numbers, i.e., a point on sheet 1 with projection $\lambda(a)$ into the base 
is denoted by $(\lambda(a))^{(1)}$.

Thus the computation of the Abel map is reduced to the computation of 
line integrals on the 
connecting line between $\lambda(a)$ and $\xi$ in the complex 
$\lambda$-plane. For the numerical computation of such integrals we use 
Clenshaw-Curtis integration (see, for instance, \cite{tref}): to compute 
an integral $\int_{-1}^{1}h(x)\ud x$, this algorithm
samples the integrand on the $N_{c}+1$ Chebyshev 
collocation points $x_{j}=\cos(j\pi/N_{c})$, 
$j=0,\ldots,N_{c}$. The integral is approximated as the sum: 
$\int_{-1}^{1}h(x)\ud x\sim \sum_{j=0}^{N_{c}}w_{j}\,h(x_{j})$ (see 
\cite{tref} on how to obtain the weights $w_{j}$). It can be shown 
that the convergence of the integral is exponential for analytic 
functions $h$ as the ones considered here. To compute the Abel map, 
one uses the transformation $\lambda \to 
\lambda(a)(1+x)/2+\xi(1-x)/2$,  to the  Clenshaw-Curtis integration 
variable. The same 
procedure is then carried out for the integral from $\xi$ to $b$.

The theta functions are approximated as in \cite{FK1} as a sum,
\begin{equation}
    \Theta_{\B}[\delta](\mathbf{z}) \sim 
    \sum_{m_{1}=-N_{\theta}}^{N_{\theta}}\ldots 
    \sum_{m_{g}=-N_{\theta}}^{N_{\theta}}\exp\left\{
    \tfrac{1}{2}\langle \B(\mathbf{m}+\delta_1),\mathbf{m}+\delta_1\rangle+\langle \mathbf{m}+\delta_1,\z+2\mathrm{i}\pi\delta_2\rangle\right\}
    \label{thetanum}.
\end{equation}
The periodicity properties of the theta function (\ref{2.4}) 
make it possible to write $\mathbf{z}=\mathbf{z}_{0}+2\mathrm{i}\pi 
\mathbf{N}+\mathbb{B}\mathbf{M}$  for some $\mathbf{N},\mathbf{M}\in\Z^{g}$, where $\mathbf{z}_{0}=2\mathrm{i}\pi 
\mathbf{\alpha}+\mathbb{B}\mathbf{\beta}$ with 
$\alpha_{i}, \beta_{i}\in\,\, ]-\frac{1}{2},\frac{1}{2}]$ for $i=1,\ldots,g$. The 
value of $ N_{\theta}$ is determined by the condition that all terms 
in (\ref{theta}) with $|m_{i}|>N_{\theta}$ are smaller than machine precision, which 
is controlled by the largest eigenvalue of the real part of the 
Riemann matrix  (the one with minimal 
absolute value since the real part is negative definite), see \cite{FK1,FK}. 

To control the accuracy of the numerical solutions, we use 
essentially two approaches. First we check the theta identity (\ref{my 
corol}), which is the underlying reason for the studied functions being 
solutions to $n$-NLS$^{s}$ and DS, at each point in the spacetime. This test
requires the computation of theta  derivatives not needed in the 
solution (which slightly reduces the efficiency of the code since 
additional quantities are computed), 
but provides an immediate check whether the solution satisfies (\ref{my 
corol}) with the required accuracy. Since this identity is not implemented in the code, it 
provides a strong test. This ensures that all quantities entering the solution are 
computed with the necessary precision.
In addition, the solutions are computed on Chebyshev 
collocation points (see, for instance, \cite{tref}) for each of the 
physical variables. This can be used for an expansion of the 
computed solution in terms of Chebyshev polynomials, a so-called 
spectral method having in 
practice exponential convergence for analytic functions as the ones 
considered here. Since the 
derivatives of the Chebyshev polynomials can be expressed linearly in terms of Chebyshev 
polynomials, a derivative acts on the space of polynomials via a so 
called differentiation matrix. With these standard Chebyshev differentiation 
matrices (see \cite{tref}), the solution can be numerically 
differentiated. The computed derivatives allow to check with which 
numerical precision the PDE is satisfied by a numerical solution. With these two independent tests, we ensure that the 
shown solutions are correct to much better than plotting accuracy 
(the code reports a warning if the above tests are not satisfied to 
better than $10^{-6}$).

\subsection{Solutions to the DS equations}

The elliptic solutions are the well known travelling wave solutions 
and will not be discussed here. 
The simplest examples we will consider for the DS solutions are 
given on hyperelliptic  curves of genus $2$. As we saw in Section 2.4, for DS1$^{\rho}$ reality and 
smoothness conditions  imply that the branch points of the 
 curve are either all real (M-curve) or all pairwise conjugate (dividing curve). The points $a$ 
and $b$ must project to real points on the $\lambda$-sphere and must be stable under the 
anti-holomorphic involution $\tau$ (we use here $\tau=\tau_{1}$, as defined 
in Example 2.1, except for DS2$^{-}$). For DS2$^{\rho}$, we have $\tau a=b$ 
where the projection of $a$ onto the $\lambda$-sphere is the conjugate of the 
projection of $b$. For DS2$^{+}$ the curve must have only real branch 
points (M-curve), whereas for DS2$^{-}$ it must have 
no real oval. 

For DS we will mainly give plots for fixed time since for low genus, 
the solution is essentially travelling in one direction. For higher 
genus, we show a more interesting time dependence in Fig.~\ref{figDS2mg4}.

We first consider the defocusing variants, DS1$^{+}$ and DS2$^{+}$ on 
M-curves. In genus $2$ we study the family of curves with the branch 
points $-2,-1,0,\epsilon,2,2+\epsilon$ for  $\epsilon=1$ 
and $\epsilon=10^{-10}$. In the former case the solutions will be 
periodic in the $(x,y)$-plane, in the latter almost solitonic since the Riemann surface is 
almost degenerate (in the limit $\epsilon\to 0$ the surface degenerates 
to a surface of genus  $0$; the resulting solutions are discussed 
in more detail in \cite{Kalla2}). To obtain non-trivial solutions in the solitonic limit, 
we use  $\mathbf{d}=\frac{1}{2}
\left[\begin{smallmatrix}
    1 & 1  \\
    0 & 0
\end{smallmatrix}\right]^{t}
$ in all examples.
In Fig.~\ref{figDS1pg2} it can be seen that these are 
in fact \emph{dark solitons}, i.e., the solutions tend asymptotically 
to a non-zero constant and the solitons thus represent `shadows' on a background of light. The 
well known features from soliton collisions for (1+1)-dimensional integrable equations, 
namely, the propagation without change of shape, and the 
unchanged shape and phase shift after the collision, can be seen here 
in the $(x,y)$-plane. 

The corresponding solutions to DS2$^{+}$ can be seen in 
Fig.~\ref{figDS2pg2}. We only show the square modulus of the solution here for simplicity. 
For the real and the imaginary part of such a solution for the DS1$^{-}$-case, see 
Fig.~\ref{figDS1mg2}. Because of remark 2.1
all solutions shown for DS1$^{+}$ on M-curves are after the
    change of coordinate $\xi\to-\xi$ solutions to DS1$^{-}$.  For this reason 
    DS1$^{-}$  solutions on M-curves will not be presented here.

\begin{figure}[htb!]
\begin{center}
\includegraphics[width=0.45\textwidth]{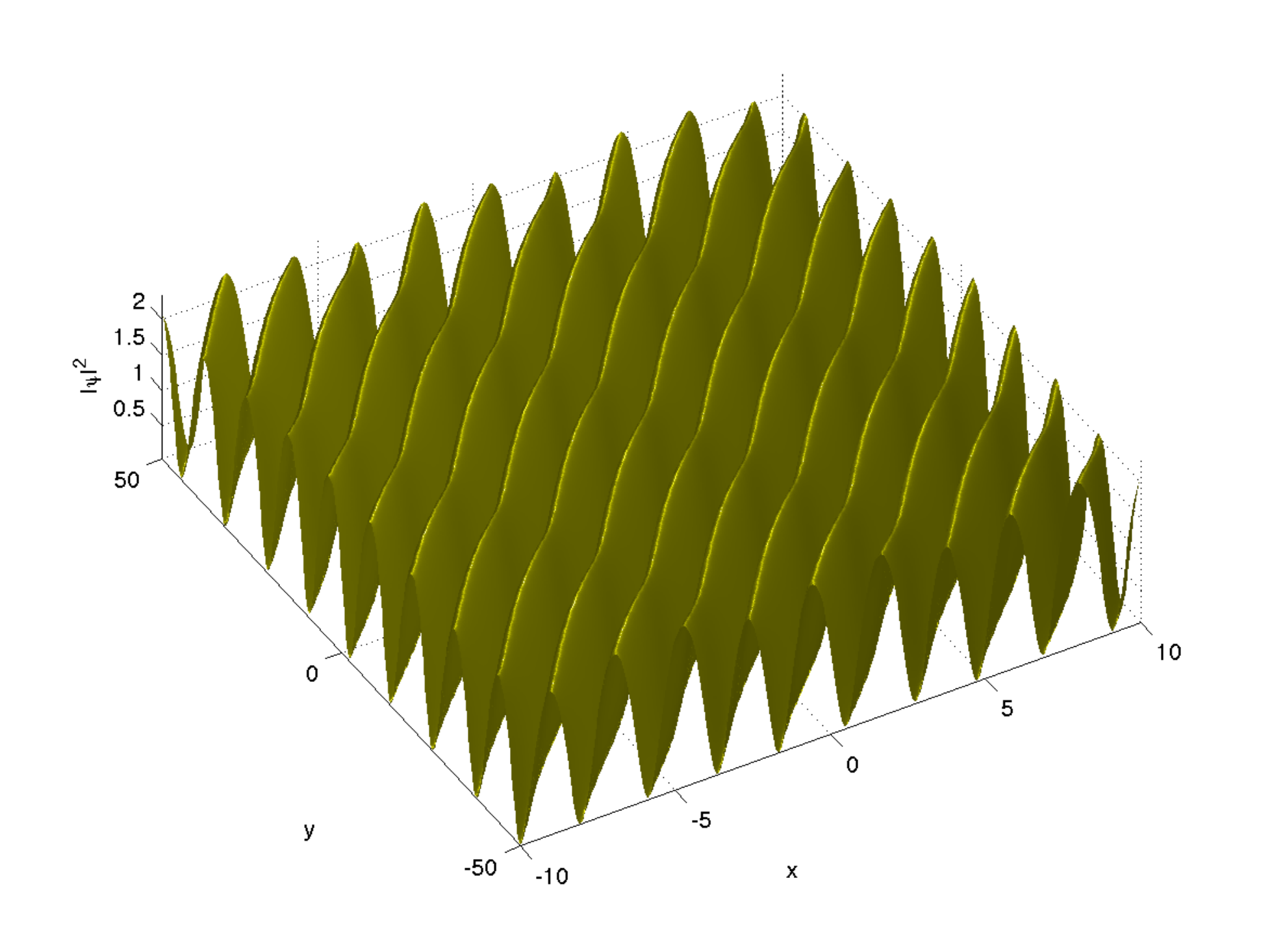}
\includegraphics[width=0.45\textwidth]{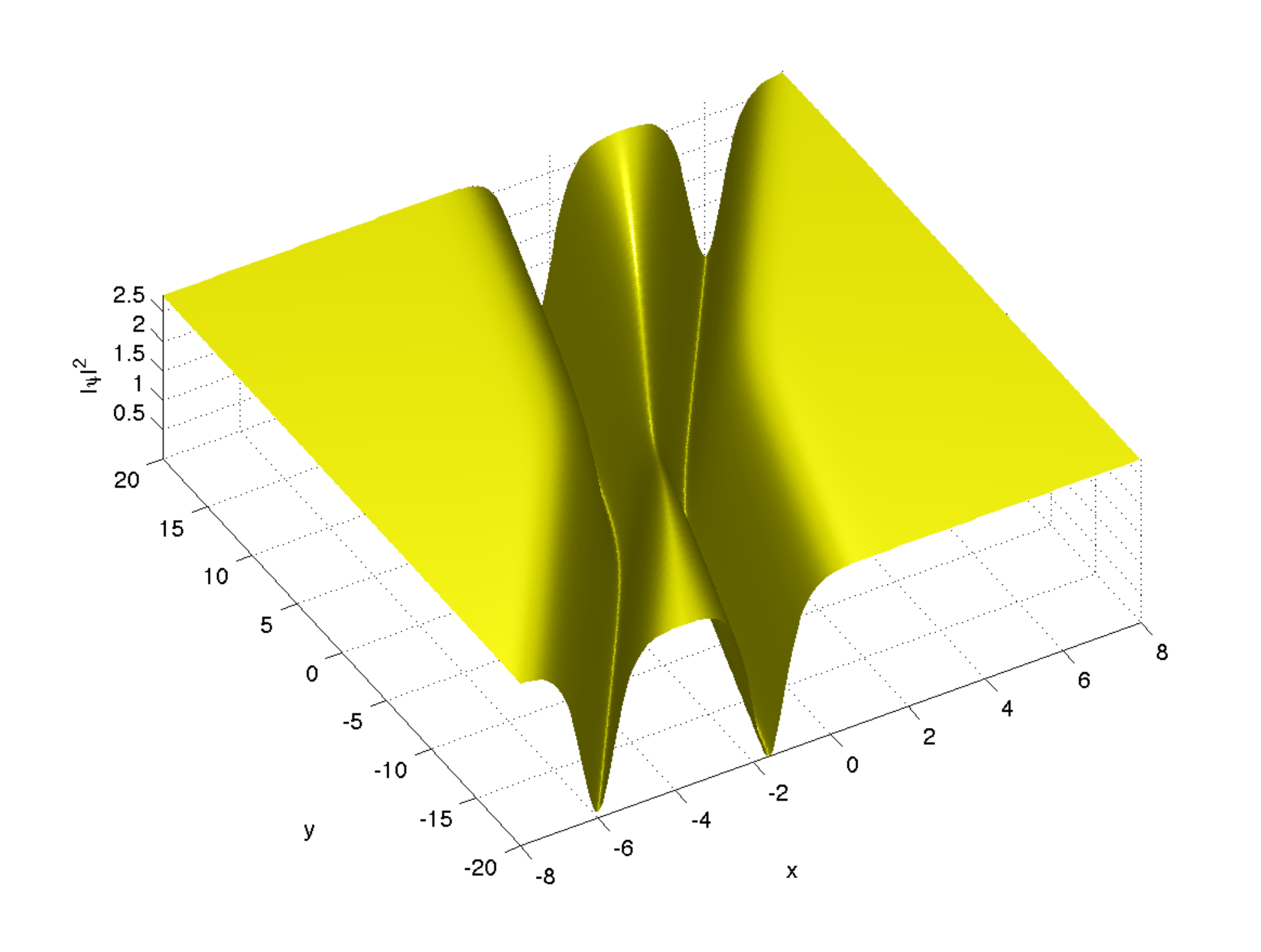}
\end{center}
 \caption{\textit{Solution (\ref{psi DS}) to the DS1$^{+}$ equation at $t=0$ 
 on a hyperelliptic curve of 
 genus 2 with branch points $-2,-1,0,\epsilon,2,2+\epsilon$ and 
 $a=(-1.9)^{(1)}$, $b=(-1.1)^{(2)}$ for $\epsilon=1$ on the left and 
 $\epsilon=10^{-10}$, the almost solitonic limit, on the right.}}
   \label{figDS1pg2}
\end{figure}

\begin{figure}[htb!]
\begin{center}
\includegraphics[width=0.45\textwidth]{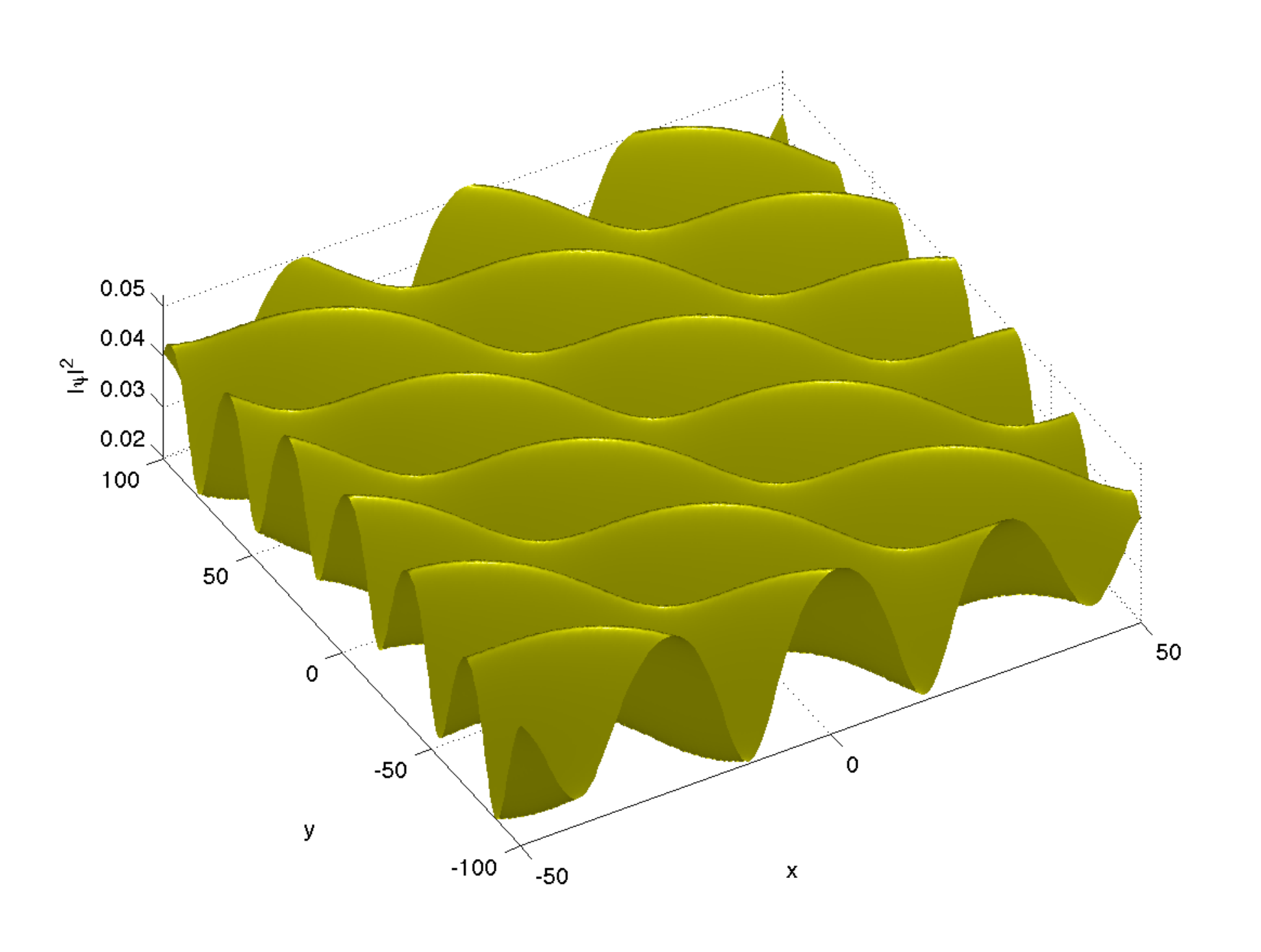}
\includegraphics[width=0.45\textwidth]{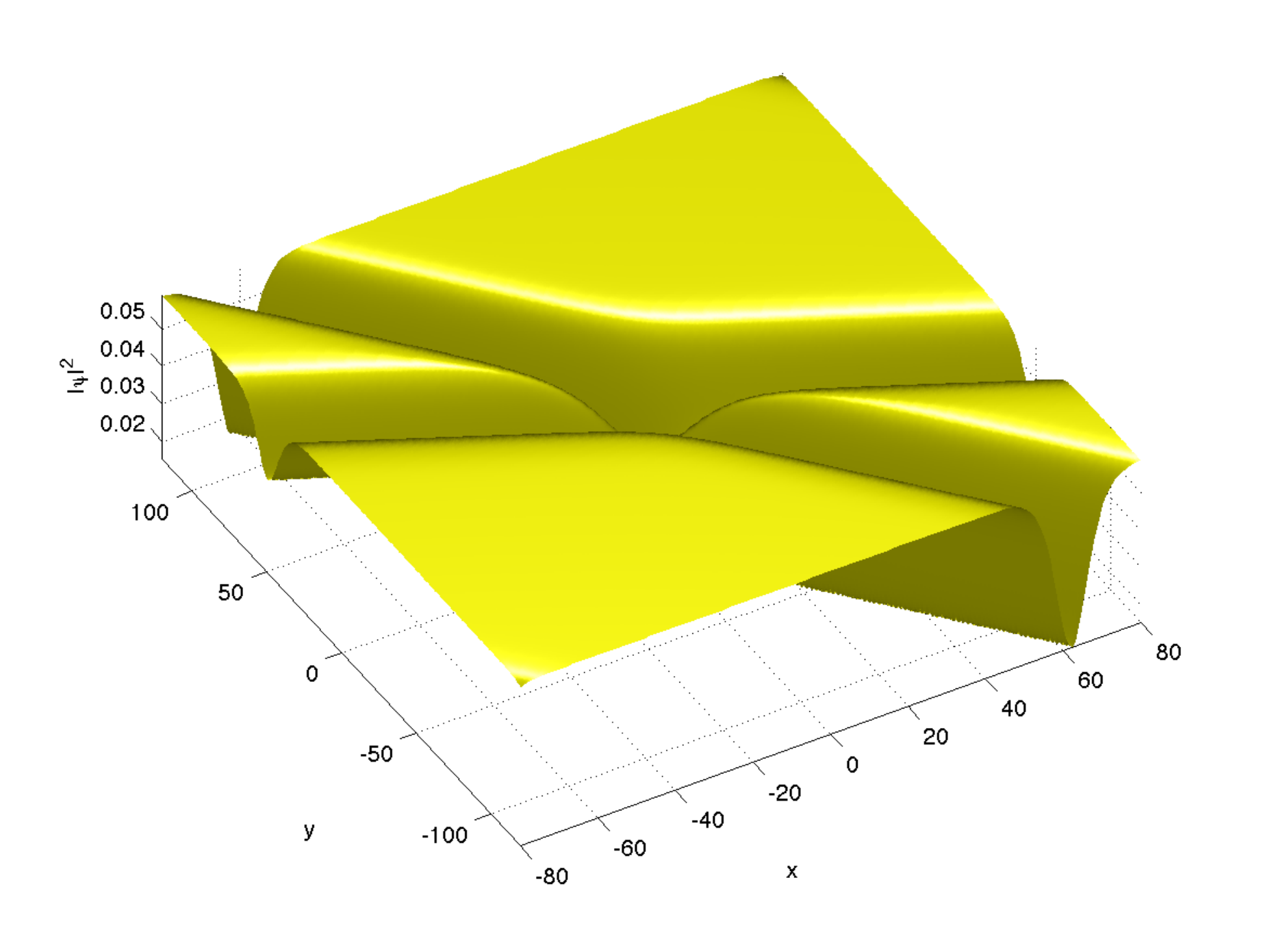}
\end{center}
 \caption{\textit{Solution (\ref{psi DS}) to the DS$2^{+}$ equation  at $t=0$ on a hyperelliptic curve of 
 genus 2 with branch points $-2,-1,0,\epsilon,2,2+\epsilon$ and 
 $a=(-1.5+2\mathrm{i})^{(1)}$, $b=(-1.5-2\mathrm{i})^{(2)}$ for $\epsilon=1$ on the left and 
 $\epsilon=10^{-10}$, the almost solitonic limit, on the right.}}
   \label{figDS2pg2}
\end{figure}

In the same way one can study, on a genus $4$ hyperelliptic curve, the formation of 
the dark 4-soliton for these two equations. We consider the curve
with branch points 
$-4,-3,-2,-2+\epsilon,0,\epsilon,2,2+\epsilon,4,4+\epsilon$ for 
$\epsilon=1$ and $\epsilon=10^{-10}$, and use  
$\mathbf{d}=\frac{1}{2}
\left[\begin{smallmatrix}
    1 & 1 & 1 & 1 \\
    0 & 0 & 0 & 0
\end{smallmatrix}\right]^{t}
$. The DS1$^{+}$ solutions for 
this  curve can be seen in Fig.~\ref{figDS1pg4}. The corresponding solutions to DS2$^{+}$  is shown in 
Fig.~\ref{figDS2pg4}.

\begin{figure}[htb!]
\begin{center}
\includegraphics[width=0.45\textwidth]{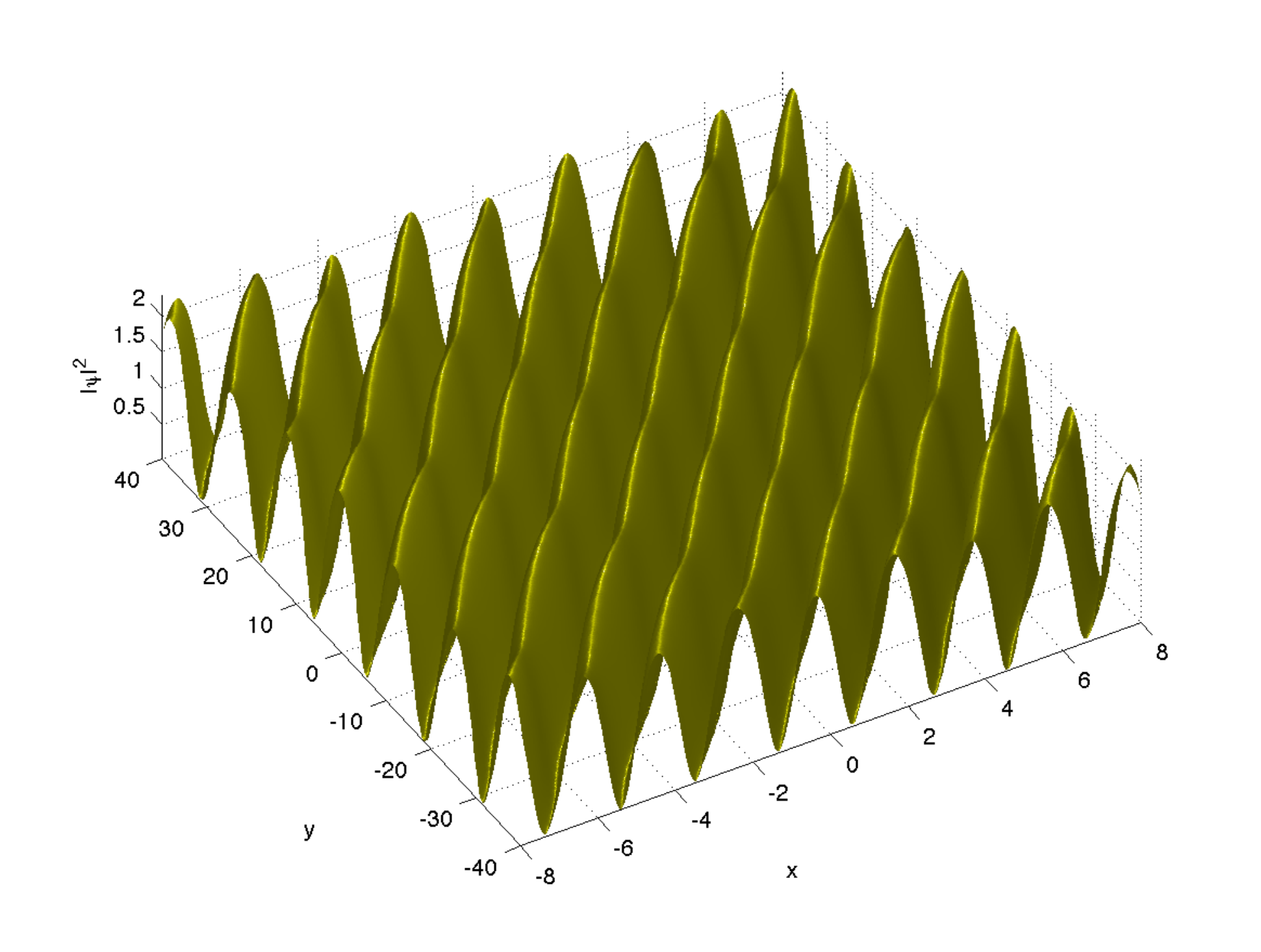}
\includegraphics[width=0.45\textwidth]{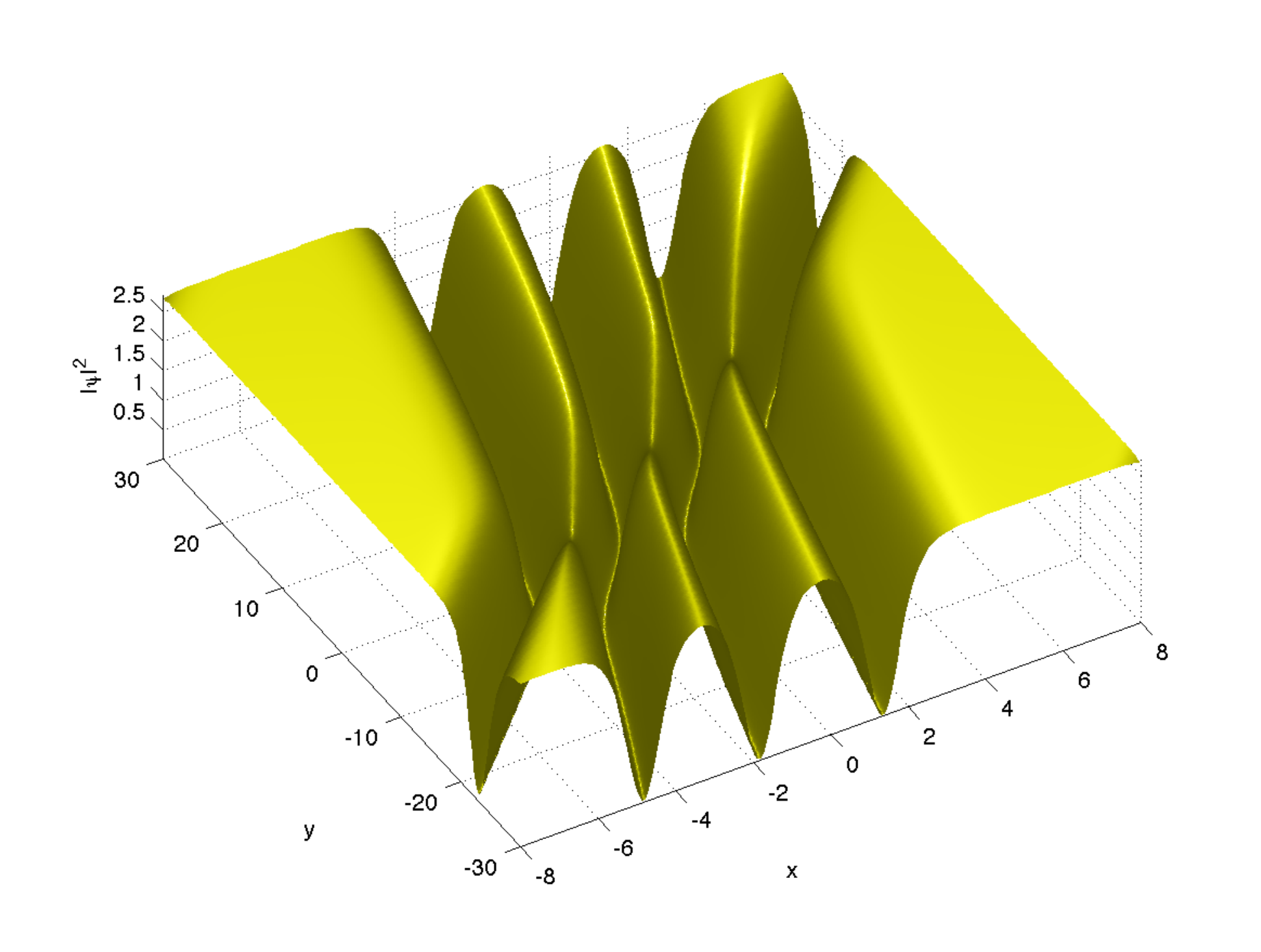}
\end{center}
 \caption{\textit{Solution (\ref{psi DS}) to the DS1$^{+}$ equation  at $t=0$ 
 on a  hyperelliptic curve of 
 genus 4 with branch points $-4,-3,-2,-2+\epsilon,0,\epsilon,2,2+\epsilon,4,4+\epsilon$ and 
 $a=(-3.9)^{(1)}$, $b=(-3.1)^{(2)}$ for $\epsilon=1$ on the left and 
 $\epsilon=10^{-10}$, the almost solitonic limit, on the right.}}
   \label{figDS1pg4}
\end{figure}

\begin{figure}[htb!]
\begin{center}
\includegraphics[width=0.45\textwidth]{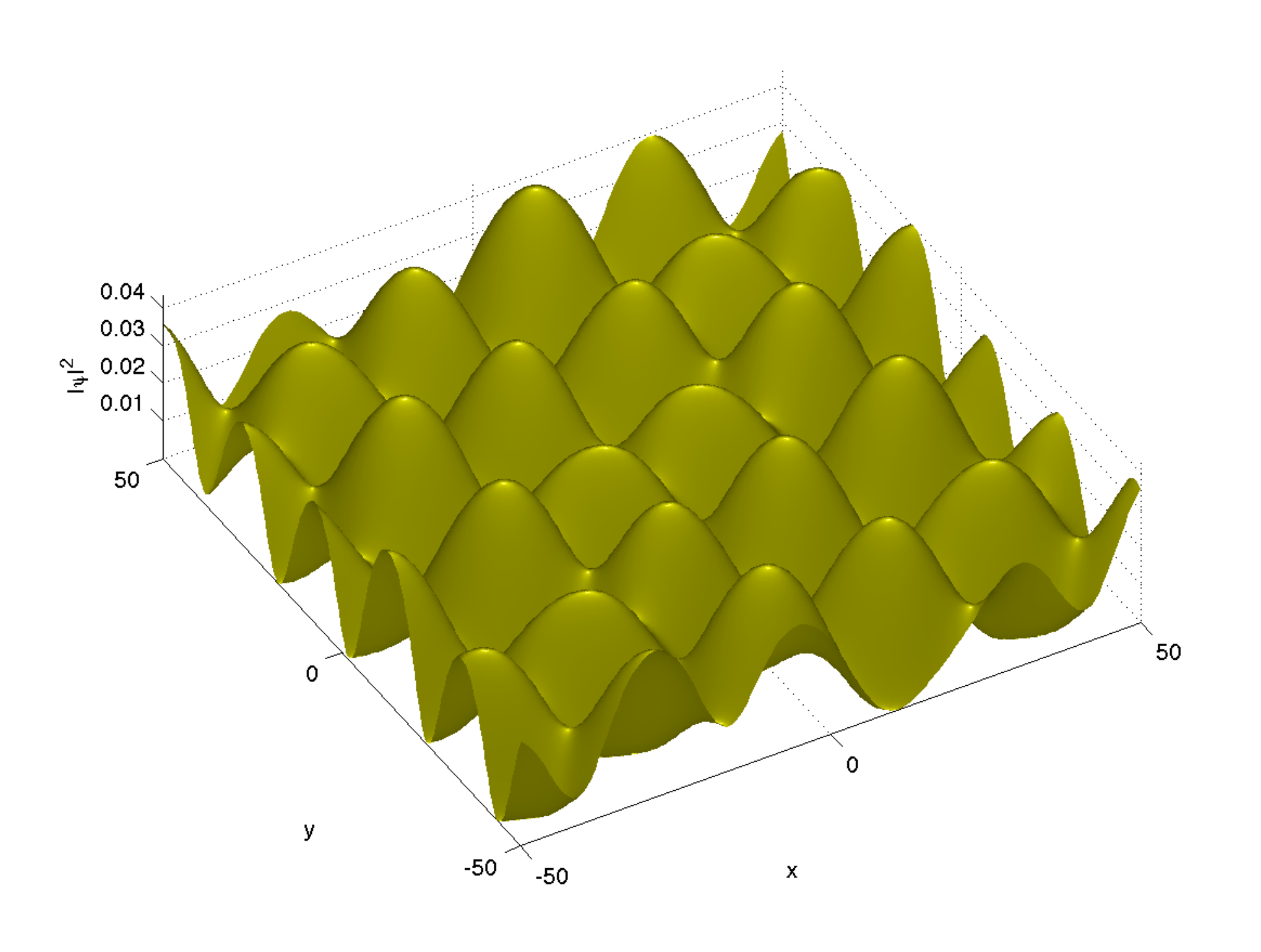}
\includegraphics[width=0.45\textwidth]{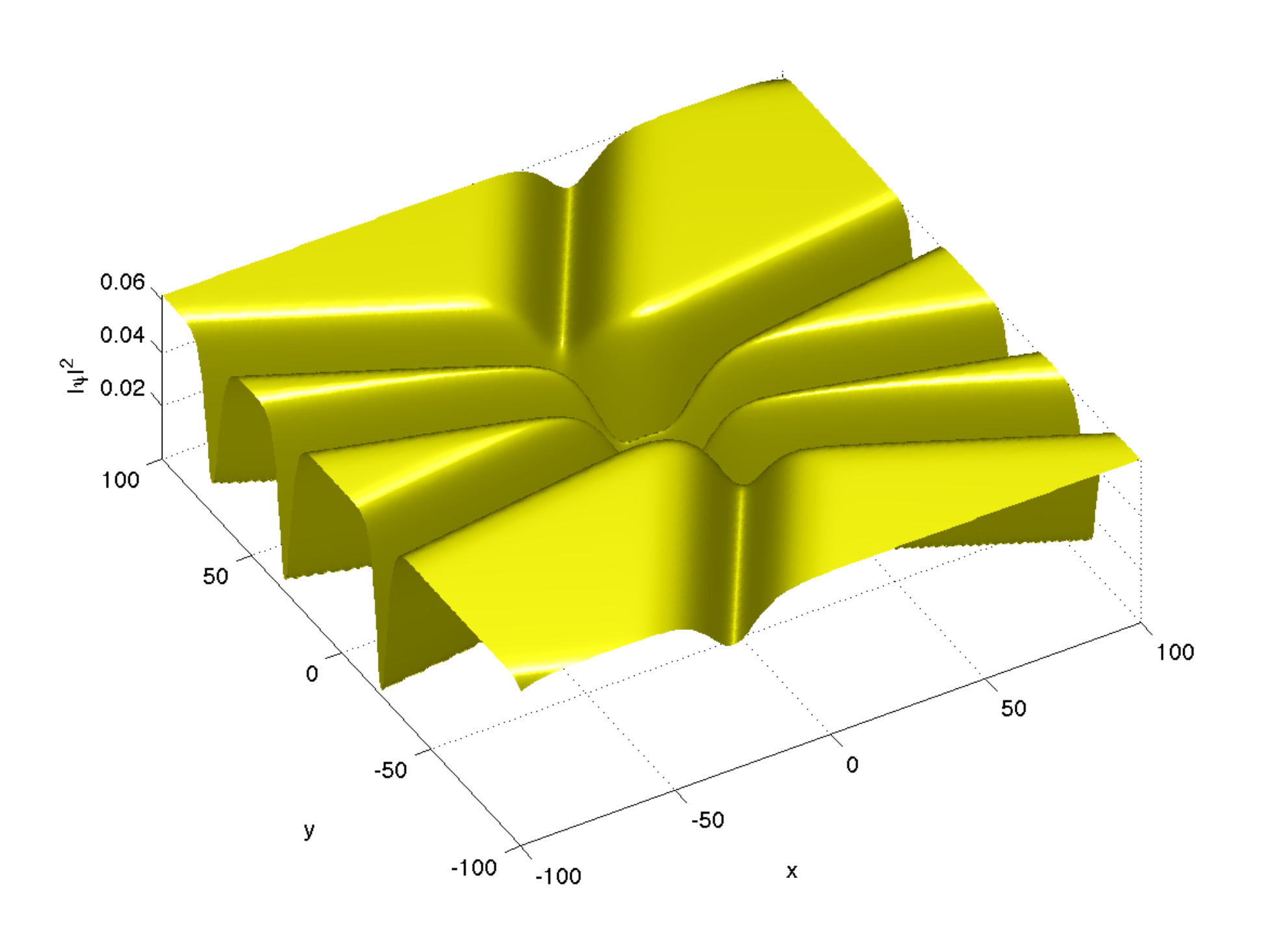}
\end{center}
 \caption{\textit{Solution (\ref{psi DS}) to the DS2$^{+}$ equation at $t=0$ on a hyperelliptic curve of 
 genus 4 with branch points $-4,-3,-2,-2+\epsilon,0,\epsilon,2,2+\epsilon,4,4+\epsilon$ 
 and 
 $a=(-1.5+2\mathrm{i})^{(1)}$, $b=(-1.5-2\mathrm{i})^{(1)}$ for $\epsilon=1$ on the left and 
 $\epsilon=10^{-10}$, the almost solitonic limit, on the right.}}
   \label{figDS2pg4}
\end{figure}

Solutions to the focusing variants of these equations can be obtained 
on  hyperelliptic curves with pairwise conjugate branch points. For such solutions
the solitonic limit cannot be obtained as above since the quotient of 
theta functions in (\ref{psi DS})
tends to a constant in this case. To obtain the well-known \emph{bright 
solitons} (solutions tend to zero at spatial infinity) in this way, the  hyperelliptic curve has to be completely degenerated 
(all branch points must collide pairwise to double points) which leads to limits of 
the form '$0/0$' in the expression for the solution (\ref{psi DS})
which are not convenient for a numerical treatment; 
see \cite{Kalla2} for an analytic discussion. Therefore we only 
consider non-degenerate  hyperelliptic curves here. To obtain smooth 
solutions, we use $\mathbf{d}=0$. 
A solution in genus 2 of the DS1$^{-}$ equation is studied on 
the curve with the branch points $-2\pm \mathrm{i}, -1\pm \mathrm{i}, 1\pm \mathrm{i}$ 
in Fig.~\ref{figDS1mg2}.

\begin{figure}[htb!]
\begin{center}
  \includegraphics[width=0.45\textwidth]{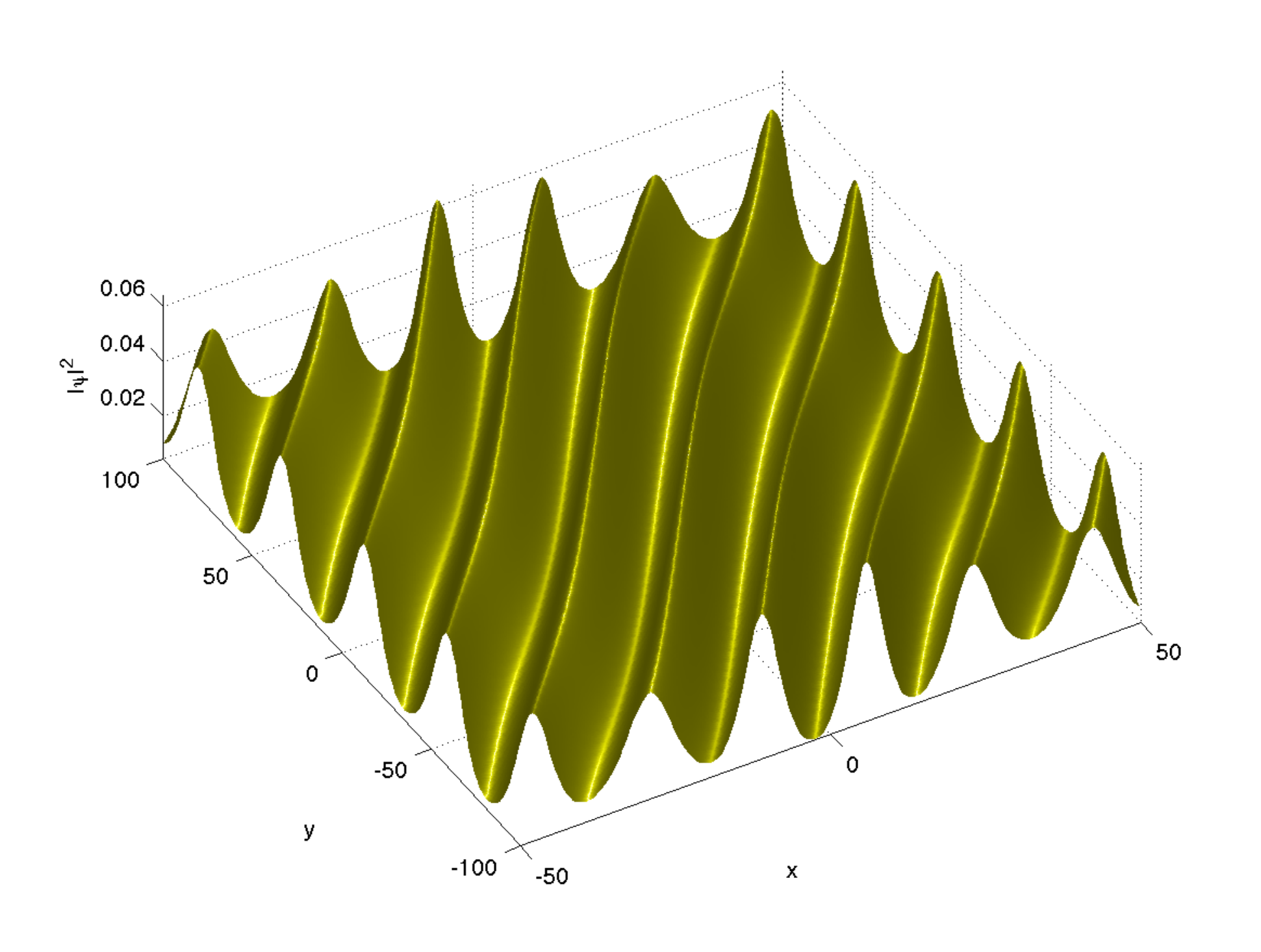}
  \includegraphics[width=0.45\textwidth]{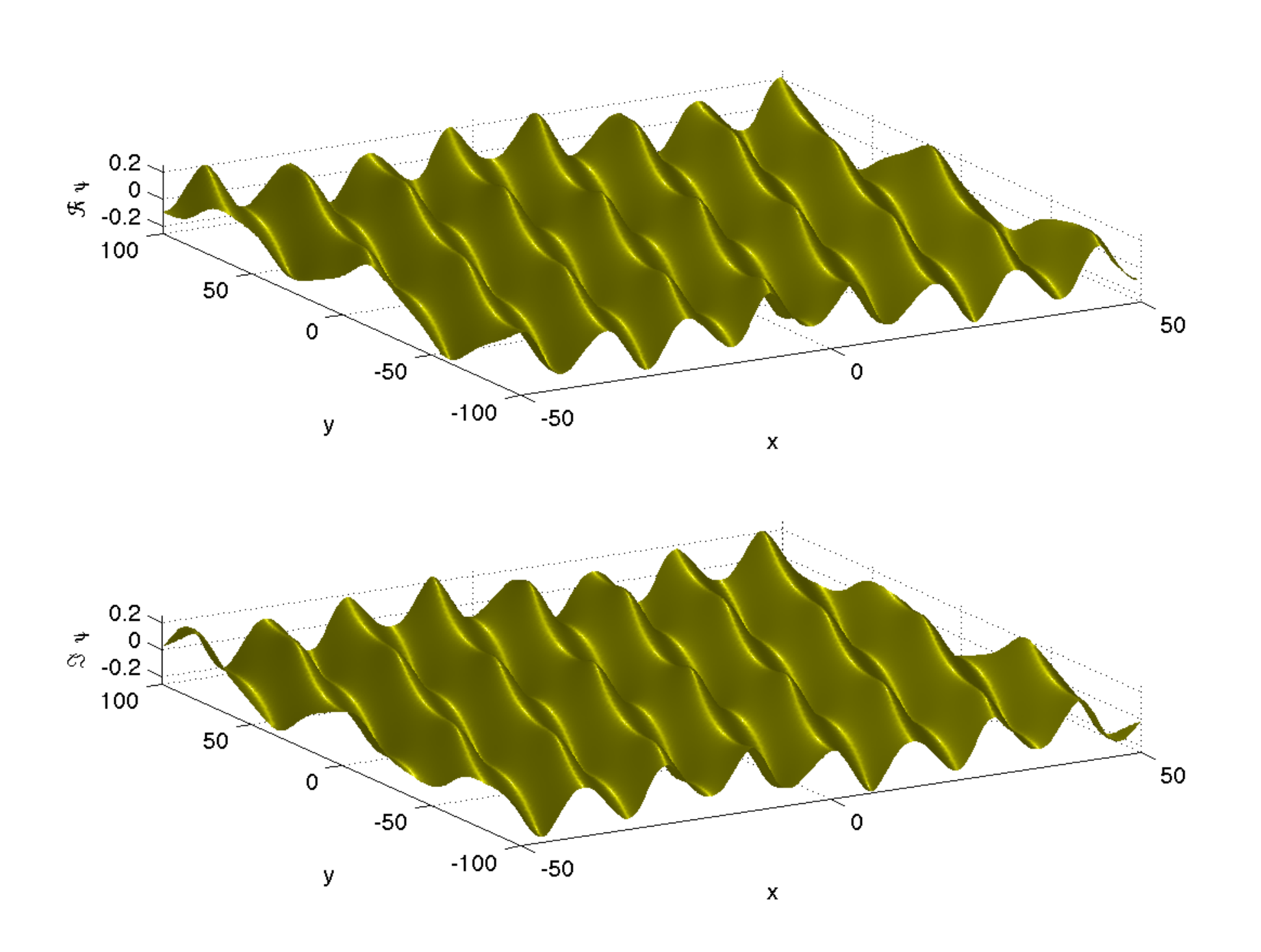}
\end{center}
 \caption{\textit{Solution (\ref{psi DS}) to the DS1$^{-}$ equation at $t=0$ on a hyperelliptic curve of 
 genus 2 with branch points $-2\pm \mathrm{i}$, $-1\pm \mathrm{i}$, $1\pm \mathrm{i}$ and 
 $a=(-4)^{(1)}$, $b=(-3)^{(2)}$. The square modulus of the 
 solution is shown on the left, real and imaginary parts on the right.}}
   \label{figDS1mg2}
\end{figure}

A typical example of 
a DS1$^{-}$ solution  on a 
 hyperelliptic curve of  genus 4 with branch points $-2\pm \mathrm{i}, -1\pm \mathrm{i}, \pm \mathrm{i}, 1\pm \mathrm{i}, 2\pm \mathrm{i}$ 
is shown in Fig.~\ref{figDS1mg4}. 
\begin{figure}[htb!]
\begin{center}
  \includegraphics[width=0.6\textwidth]{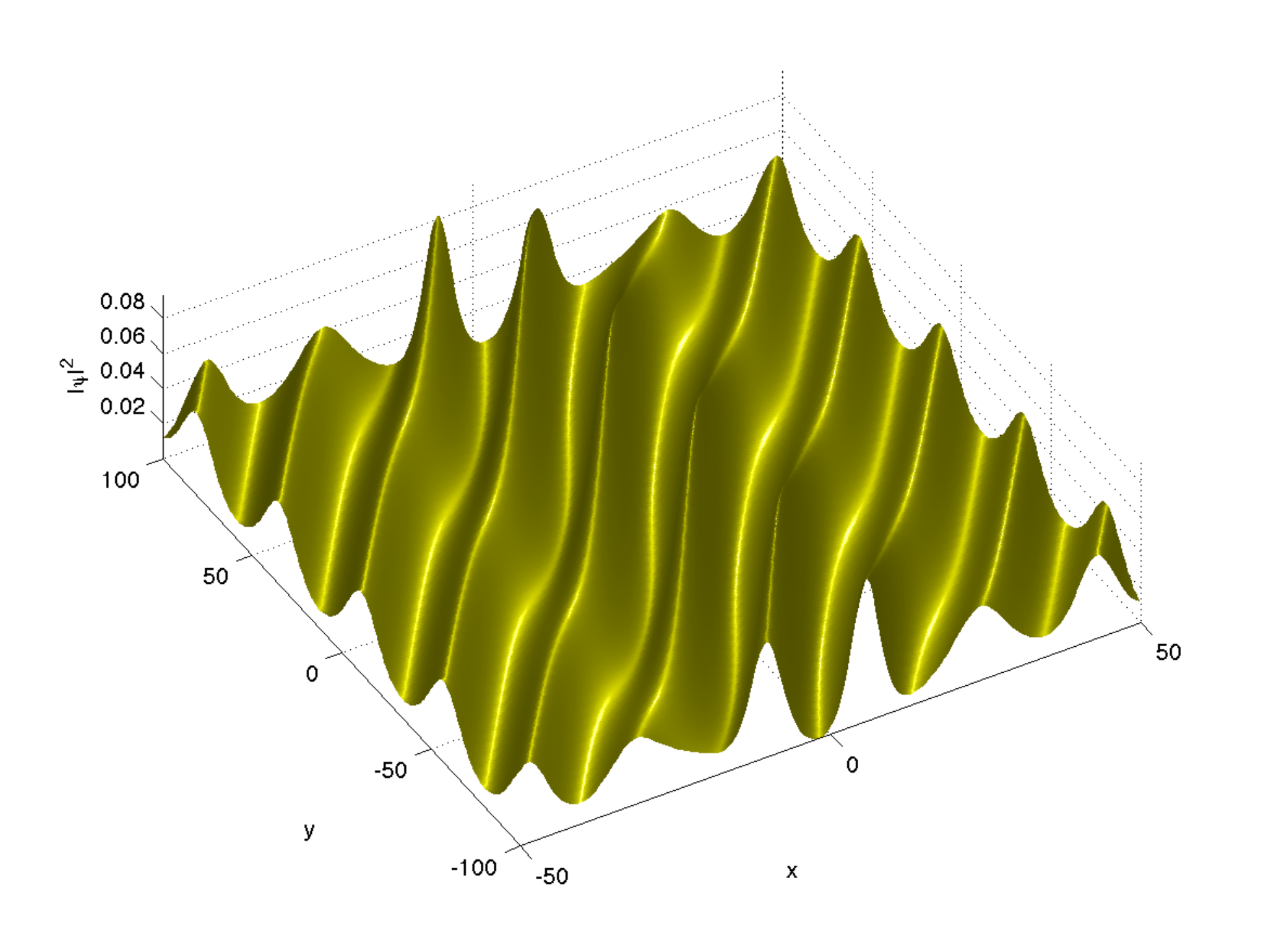}
\end{center}
 \caption{\textit{Solution (\ref{psi DS}) to the DS1$^{-}$ equation at $t=0$ on a hyperelliptic curve of 
 genus 4 with branch points $-2\pm \mathrm{i}$, $-1\pm \mathrm{i}$, $\pm \mathrm{i}$, $1\pm \mathrm{i}$, 
 $2\pm \mathrm{i}$ and 
 $a=(-4)^{(1)}$, $b=(-3)^{(2)}$.}}
   \label{figDS1mg4}
\end{figure}


Smooth solutions to DS2$^{-}$ can be obtained on  Riemann surfaces without 
real oval for points $a$ and $b$ satisfying $\tau a =b$. 
As mentioned above, hyperelliptic curves of the form  
$\mu^{2}=-\prod_{i=1}^{2g+2}(\lambda-\lambda_{i})$ with pairwise 
conjugate branch points have no real oval
for the standard involution $\tau_{1}$ as defined in Example 2.1. On the other hand, surfaces defined 
by the algebraic equation 
$\mu^{2}=\prod_{i=1}^{2g+2}(\lambda-\lambda_{i})$ have no real oval 
for the involution 
$\tau_{2}$ (see Example 2.1). 
We will consider here $\tau_{2}$ for the same curves as for DS1$^{-}$. An example for genus 
2 can be seen in Fig.~\ref{figDS2mg2}. An example for a DS2$^{-}$ solution of genus 4 can be seen in 
Fig.~\ref{figDS2mg4}.

\begin{figure}[htb!]
\begin{center}
  \includegraphics[width=0.6\textwidth]{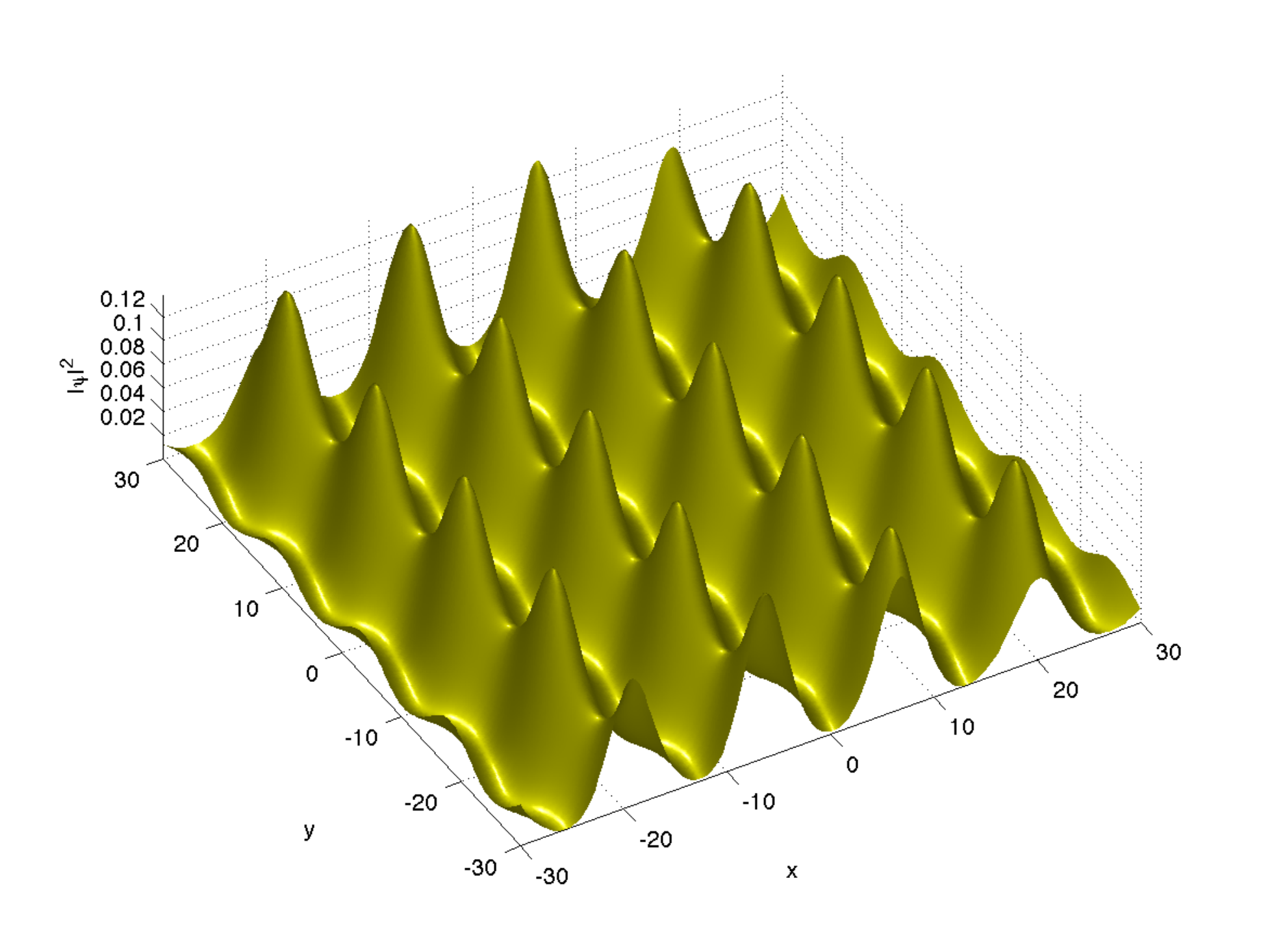}
\end{center}
 \caption{\textit{Solution to the DS$2^{-}$ equation at $t=0$ on a hyperelliptic curve of 
 genus 2 with branch points $-2\pm \mathrm{i}$, $-1\pm \mathrm{i}$, $1\pm \mathrm{i}$ and 
 $a=(-1.5+2\mathrm{i})^{(1)}$, $b=(-1.5-2\mathrm{i})^{(2)}$.} }
   \label{figDS2mg2}
\end{figure}

\begin{figure}[htb!]
\begin{center}
\includegraphics[width=0.9\textwidth]{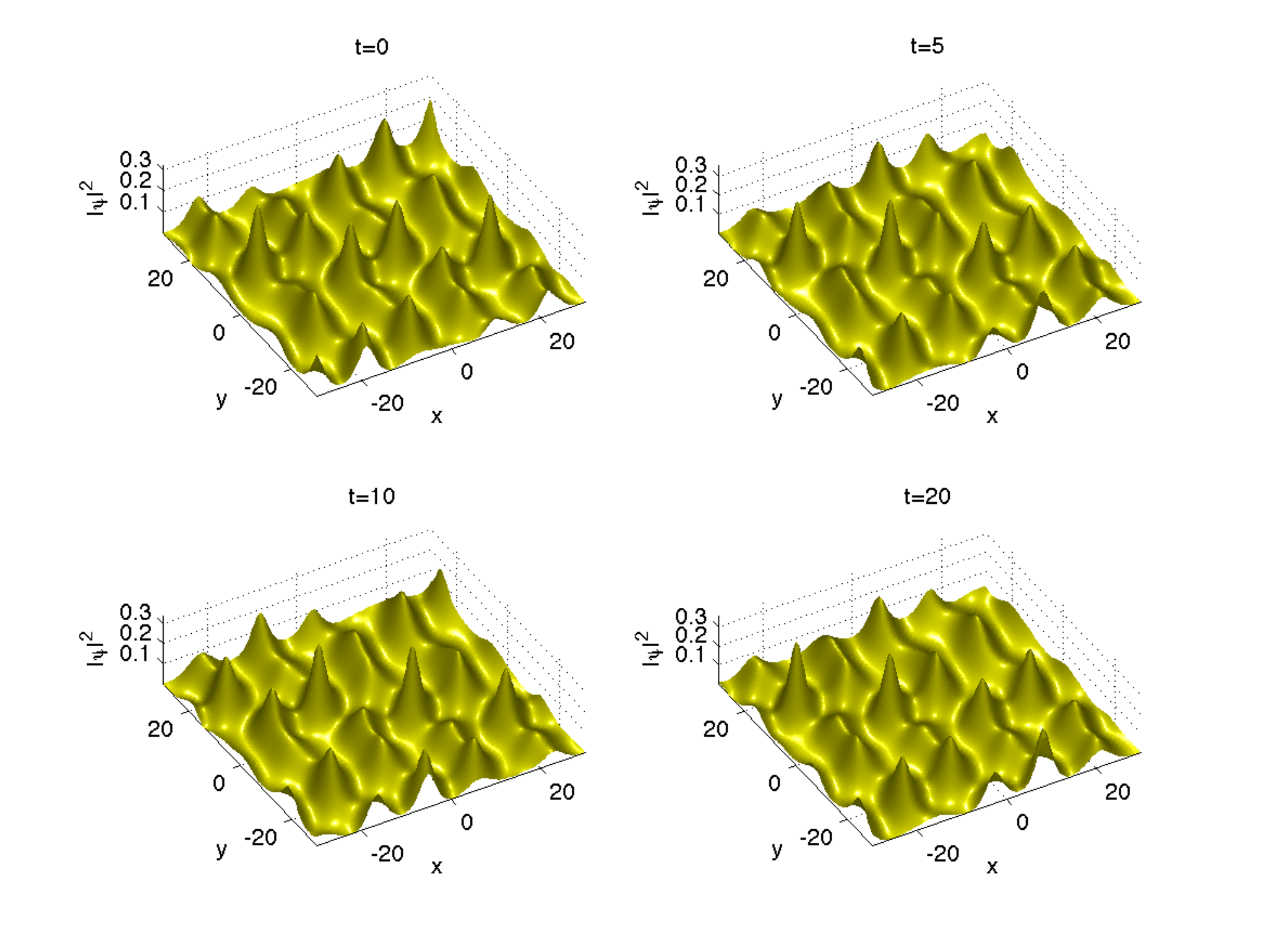}
\end{center}
 \caption{\textit{Solution to the DS$2^{-}$ equation for several values of $t$ on a hyperelliptic curve of 
 genus 4 with branch points $-2\pm \mathrm{i}$, $-1\pm \mathrm{i}$, $\pm \mathrm{i}$, $1\pm \mathrm{i}$, 
 $2\pm \mathrm{i}$ and 
 $a=(-1.5+2\mathrm{i})^{(1)}$, $b=(-1.5-2\mathrm{i})^{(2)}$.} }
   \label{figDS2mg4}
\end{figure}

\subsection{Solutions to the $n$-NLS$^{s}$ equation}

A straightforward way to obtain solutions (\ref{sol n-NLS}) to the 
$n$-NLS$^{s}$ equation is given on an $(n+1)$-sheeted branched covering of the complex plane, 
an approach that will be studied in more detail in the next section. 
As can be seen from the proof of Theorem 4.1 in \cite{Kalla}, the 
crucial point in the construction of these solutions is the fact that 
$\sum_{k=1}^{n+1}\mathbf{V}_{a_{k}}=0$. This implies that it is also possible 
to construct theta-functional $n$-NLS$^{s}$ solutions on hyperelliptic 
surfaces by introducing constants $\gamma_{k}$ via 
$\sum_{k=1}^{n+1}\gamma_{k}\mathbf{V}_{a_{k}}=0$ in the following corollary of Theorem 4.1 in \cite{Kalla}:

\begin{corollary} 
Let $\Rs_{g}$  be a real hyperelliptic curve of genus $g>0$ and denote by $\tau$ an 
anti-holomorphic involution. Choose the canonical homology basis 
which satisfies (\ref{hom basis}). Take $n\geq g$ and let 
$a_{1},\ldots,a_{n+1}\in\Rs_{g}(\R)$ not ramification points having distinct projection 
$\lambda(a_{j}), \, j=1,\ldots,n+1$, onto the $\lambda$-sphere.
Denote by $\ell_{j}$ an oriented contour between $a_{n+1}$ and $a_{j}$ which does not intersect cycles of the canonical homology basis. Let $\mathbf{d}_{R}\in\R^{g}$, $\mathbf{T}\in\Z^{g}$, and define
$ \mathbf{d}=\mathbf{d}_{R}+\frac{\mathrm{i}\pi}{2}(\text{diag}(\mathbb{H})-2\,\mathbf{T}). $
Take $\theta\in\R$ and let $\gamma_{g+1},\ldots,\gamma_{n}\in\R$ be 
arbitrary constants with $\gamma_{n+1}=1$. Put 
$\hat{s}=(\mbox{sign}(\gamma_{1})\,s_{1},\ldots,\mbox{sign}(\gamma_{n})\,s_{n})$ where $s_{j}$ is  given in (\ref{sj}), and the scalars  $\gamma_{j},\,j=1,\ldots,g$, are defined by
$\sum_{k=1}^{n+1}\gamma_{k}\mathbf{V}_{a_{k}}=0$.
Then  the following functions 
$\psi_{j}, \, j=1,\ldots,n$, give solutions of the $n$-NLS$^{\hat{s}}$  
equation (\ref{n-NLS})
\begin{equation}    
\psi_{j}(x,t)=|\gamma_{j}|^{1/2}\,|A_{j}|\,e^{\mathrm{i}\theta}\,\frac{\Theta(\mathbf{Z}-\mathbf{d}+\mathbf{r}_{j})}{\Theta(\mathbf{Z}-\mathbf{d})}\,\exp\{ -\mathrm{i}\,(E_{j}\,x-F_{j}\,t)\},\label{sol hyp $n$-NLS^{s}}
\end{equation}
where $|A_{j}|=|q_{2}(a_{n+1},a_{j})|^{1/2}\,\exp\left\{\tfrac{1}{2}\left\langle\mathbf{d},\mathbf{M}_{j}\right\rangle\right\}$. Here  $\mathbf{Z}=\mathrm{i}\,\mathbf{V}_{a_{n+1}}\,x+\mathrm{i}\,\mathbf{W}_{a_{n+1}}\,t$, where
the vectors $\mathbf{V}_{a_{n+1}}$ and $\mathbf{W}_{a_{n+1}}$ were introduced in (\ref{exp hol diff}), and $\mathbf{r}_{j}=\int_{\ell_{j}}\omega$. The scalars $E_{j},F_{j}$ are given by
\[E_{j}=K_{1}(a_{n+1},a_{j}), \qquad F_{j}=K_{2}(a_{n+1},a_{j})-2\sum_{k=1}^{n}\gamma_{k}\,q_{1}(a_{n+1},a_{k}),\]
where $q_{i},K_{i}$ for $i=1,2$ are defined in (\ref{q1})-(\ref{K2}).
If $\Rs_{g}$ is dividing and if $\mathbf{d}\in\R^{g}$, functions (\ref{sol hyp $n$-NLS^{s}})  give smooth solutions of $n$-NLS$^{\hat{s}}$. \label{theo hyp sol n-NLS}
\end{corollary}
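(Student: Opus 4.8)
The plan is to reduce this corollary to Theorem 4.1 of \cite{Kalla} by verifying that the modified data $\{\Rs_g,\tau,f,\gamma_k\}$ still satisfy the hypotheses needed there, with the only genuinely new ingredient being the introduction of the real scaling constants $\gamma_k$. First I would recall the structure of the $n$-NLS$^s$ solutions (\ref{sol n-NLS}): they come from the Fay-type identities (\ref{cor Fay2}) and (\ref{my corol}) applied with $(a,b)=(a_{n+1},a_j)$, and the key compatibility that makes the corresponding functions solve \emph{the same} equation for all $j$ is the relation $\sum_{k=1}^{n+1}\mathbf{V}_{a_k}=0$, which forces the dispersion coefficients to be consistent. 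On a hyperelliptic curve the points $a_1,\dots,a_{n+1}$ with distinct projections need not be a fiber of a degree-$(n+1)$ meromorphic function, so $\sum_k \mathbf{V}_{a_k}$ need not vanish; the device is to replace this by $\sum_{k=1}^{n+1}\gamma_k\mathbf{V}_{a_k}=0$. Since $n\geq g$ and the $\mathbf{V}_{a_k}\in\C^g$, the linear system for $\gamma_1,\dots,\gamma_g$ (with $\gamma_{g+1},\dots,\gamma_{n+1}$ prescribed and $\gamma_{n+1}=1$) is solvable; one should note it is solvable \emph{over $\R$} because on a real curve with $a_k\in\Rs_g(\R)$ and the Vinnikov basis (\ref{hom basis}) the vectors $\mathbf{V}_{a_k}$ have the appropriate reality, so the $\gamma_j$ for $j\le g$ come out real as claimed.

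Next I would carry out the substitution in the proof of \cite[Thm.~4.1]{Kalla} term by term. The functions $\psi_j$ in (\ref{sol hyp $n$-NLS^{s}}) differ from (\ref{sol n-NLS}) only by the extra factor $|\gamma_j|^{1/2}$ in the amplitude and by the replacement of $\sum_{k=1}^n q_1(a_{n+1},a_k)$ with $\sum_{k=1}^n\gamma_k\,q_1(a_{n+1},a_k)$ in $F_j$. Plugging $\psi_j$ into $n$-NLS$^{\hat s}$, one uses (\ref{cor Fay2}) to rewrite the nonlinear term $\sum_k \hat s_k|\psi_k|^2$: each $|\psi_k|^2$ contributes $|\gamma_k||A_k|^2 |\Theta(\cdots)|^2/|\Theta(\cdots)|^2$, and the Fay identity (\ref{cor Fay2}) with $(a,b)=(a_{n+1},a_k)$ expresses $q_2(a_{n+1},a_k)\,\Theta(\z+\int)\Theta(\z-\int)/\Theta(\z)^2$ as $D_{a_{n+1}}D_{a_k}\ln\Theta(\z)-q_1(a_{n+1},a_k)$. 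The factor $|\gamma_k|$ from the amplitude, together with the sign $\mbox{sign}(\gamma_k)$ absorbed into $\hat s_k$, reconstitutes exactly $\gamma_k$ in front of this Fay combination, so the nonlinearity becomes $2\sum_k \gamma_k\big(D_{a_{n+1}}D_{a_k}\ln\Theta(\z)-q_1(a_{n+1},a_k)\big)$ up to the amplitude constants; the $-2\sum\gamma_k q_1$ piece is absorbed into $F_j$ and the $\sum_k\gamma_k D_{a_{n+1}}D_{a_k}\ln\Theta = D_{a_{n+1}}D_{(\sum\gamma_k a_k)}\ln\Theta$ piece vanishes precisely because $\sum_{k=1}^{n+1}\gamma_k\mathbf{V}_{a_k}=0$ (the $k=n+1$ term being the diagonal $D^2_{a_{n+1}}\ln\Theta$ term that pairs with (\ref{my corol})). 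The linear part — the $\psi_{xx}$ and $\mathrm{i}\psi_t$ terms together with the $E_j, F_j$ phases — is handled exactly as in \cite{Kalla} via identity (\ref{my corol}) with $(a,b)=(a_{n+1},a_j)$, since neither the directional derivative vector $\mathbf{V}_{a_{n+1}}$ nor the cross-ratio vector $\mathbf{r}_j$ is altered by the $\gamma$'s. So the equation $n$-NLS$^{\hat s}$ holds term by term.

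For the smoothness claim I would invoke the real-Riemann-surface part of \cite{Kalla}: when $\Rs_g$ is dividing and the homology basis is the Vinnikov basis (\ref{hom basis}) with $a_k\in\Rs_g(\R)$, the quasi-periodicity relation (\ref{2.4}) together with the reality constraint $2\mathbf{N}+\mathbb{H}\mathbf{M}=0$ (\ref{NM stable}) and the structure of $\mathbf{d}=\mathbf{d}_R+\frac{\mathrm{i}\pi}{2}(\mathrm{diag}(\mathbb{H})-2\mathbf{T})$ guarantees that the denominator $\Theta(\mathbf{Z}-\mathbf{d})$ has no real zeros for $\mathbf{Z}\in\mathrm{i}\R^g$, exactly as in the proof of the smoothness statement in \cite{Kalla}; the extra real factor $|\gamma_j|^{1/2}$ does not affect this. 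One should check that multiplying $|A_j|$ by $|\gamma_j|^{1/2}$ keeps $|A_j|$ real and positive, which it does since $\gamma_j\in\R$. The main obstacle — really the only place requiring care rather than bookkeeping — is the bookkeeping of signs: one must verify that the sign of $\gamma_k$ gets correctly transferred into $\hat s_k=\mathrm{sign}(\gamma_k)s_k$ so that the defocusing/focusing character of each component is consistent, i.e.\ that $|\gamma_k|\cdot\hat s_k$ equals $\gamma_k\cdot s_k$ as it must for the Fay substitution above to close; this is immediate from $|\gamma_k|\,\mathrm{sign}(\gamma_k)=\gamma_k$, but it is the conceptual crux of why the construction produces $n$-NLS$^{\hat s}$ rather than $n$-NLS$^{s}$. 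Everything else is a direct transcription of the argument in \cite{Kalla} with the scalar $1$ replaced by $\gamma_k$ wherever a branch point $a_k$ with $k\le n$ enters.
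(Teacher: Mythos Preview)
Your proposal is correct and follows exactly the approach the paper indicates: the paper does not give a formal proof of this corollary, but simply states in the preceding paragraph that the key ingredient in the proof of Theorem~4.1 of \cite{Kalla} is the relation $\sum_{k=1}^{n+1}\mathbf{V}_{a_k}=0$, and that on a hyperelliptic curve one may instead impose $\sum_{k=1}^{n+1}\gamma_k\mathbf{V}_{a_k}=0$ by introducing the constants $\gamma_k$. Your write-up correctly fleshes out how the factor $|\gamma_k|^{1/2}$ in the amplitude combines with $\hat{s}_k=\mathrm{sign}(\gamma_k)s_k$ to reproduce $\gamma_k$ in the Fay substitution, and how the linear and smoothness parts carry over verbatim from \cite{Kalla}; this is precisely the intended argument.
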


As an example we consider, as for DS in genus $2$, the family of curves with the branch 
points $-2,-1,0,\epsilon,2,2+\epsilon$ for  $\epsilon=1$ 
and $\epsilon=10^{-10}$. In the former case the solutions will be 
periodic in the ($x,t$)-plane, in the latter almost solitonic. To obtain non-trivial solutions in the solitonic limit, 
we use $\mathbf{d}=\frac{1}{2}
\left[\begin{smallmatrix}
    1 & 1  \\
    0 & 0
\end{smallmatrix}\right]^{t}
$ in all examples.

In Fig.~\ref{fignlsh2mm} we show the case $a_{1}=(-1.9)^{(1)}$, 
$a_{2}=(-1.1)^{(1)}$ 
and $a_{3}=(-1.8)^{(1)}$, which leads to  a  solution of 2-NLS$^{\hat{s}}$ with $\hat{s}=(-1,-1)$. 
Interchanging $a_{2}$ and $a_{3}$ in the above example, we obtain a 
solution to 2-NLS$^{\hat{s}}$ with $\hat{s}=(1,-1)$ in Fig.~\ref{fignlsh2pm}.

\begin{figure}[htb!]
\begin{center}
\includegraphics[width=0.45\textwidth]{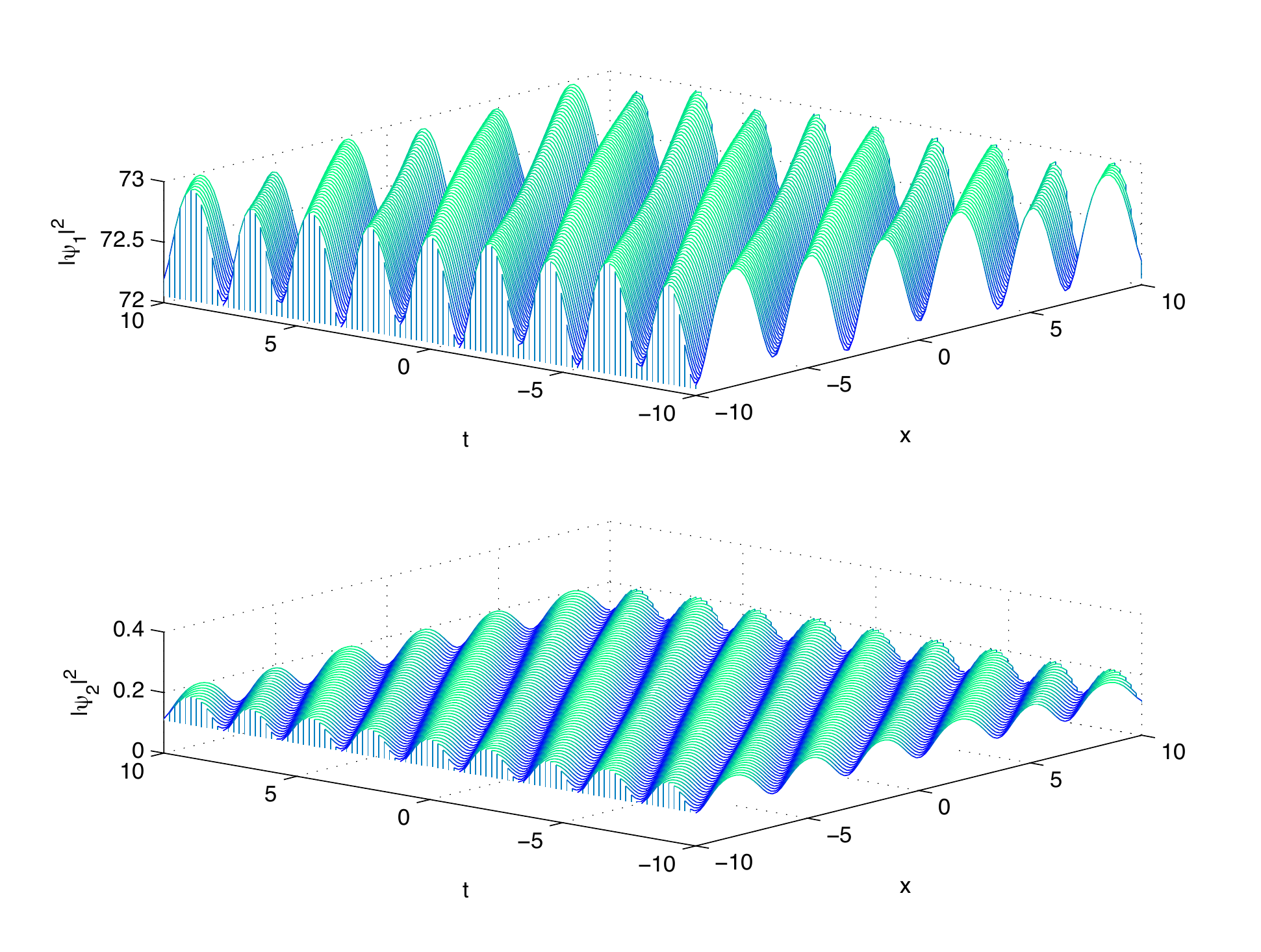}
\includegraphics[width=0.45\textwidth]{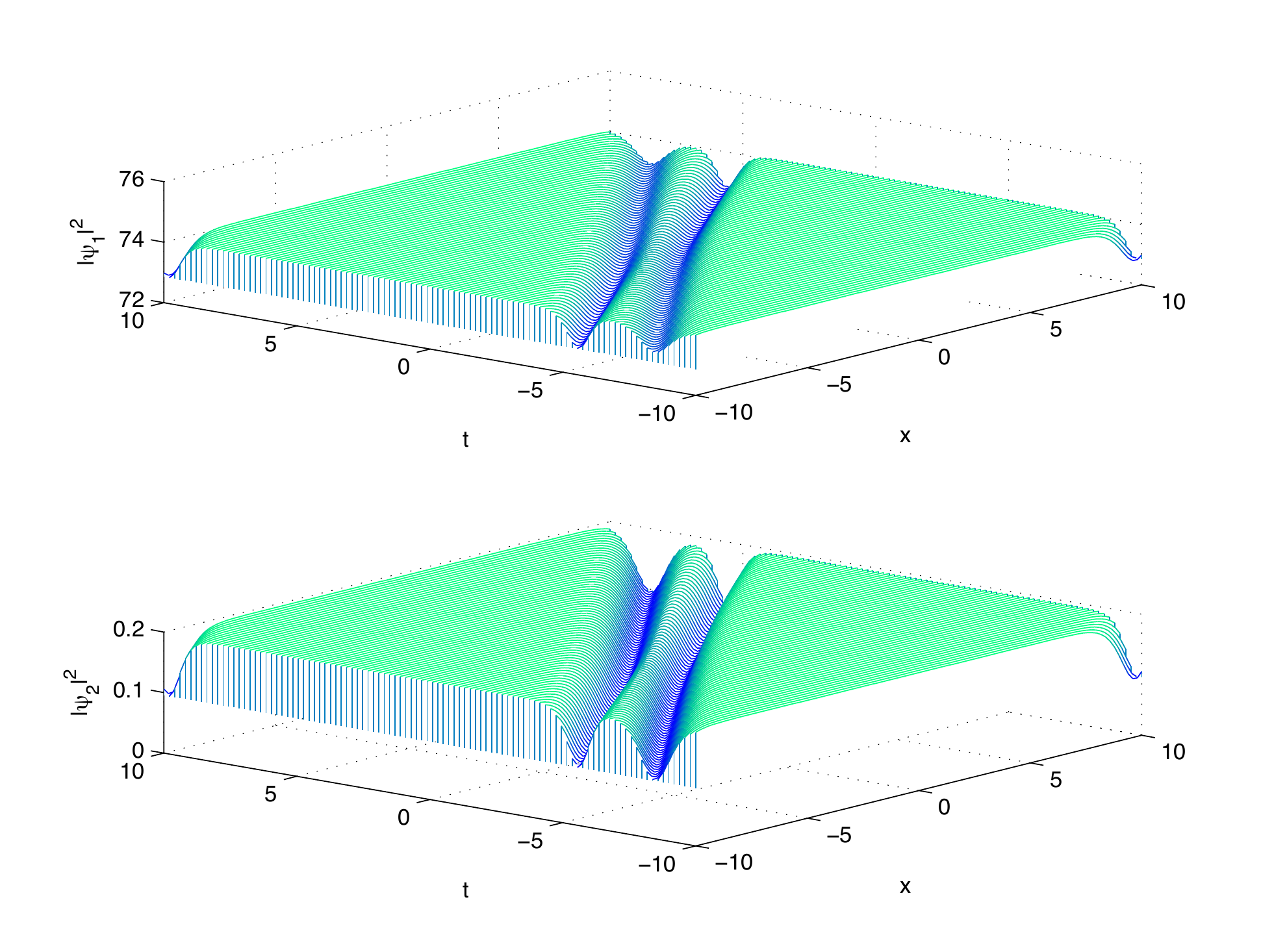}
\end{center}
 \caption{\textit{Solution (\ref{sol hyp $n$-NLS^{s}}) to the 2-NLS$^{\hat{s}}$ equation with $\hat{s}=(-1,-1)$ on a hyperelliptic curve of 
 genus 2 with branch points $-2,-1,0,\epsilon,2,2+\epsilon$ and 
 $a_{1}=(-1.9)^{(1)}$,  $a_{2}=(-1.1)^{(1)}$ 
 and $a_{3}=(-1.8)^{(1)}$ for $\epsilon=1$ on the left and 
 $\epsilon=10^{-10}$, the almost solitonic limit, on the right.}}
   \label{fignlsh2mm}
\end{figure}

Solutions  of 4-NLS$^{\hat{s}}$ can be studied in the same way on the  hyperelliptic curve of genus 4
with branch points 
$-4,-3,-2,-2+\epsilon,0,\epsilon,2,2+\epsilon,4,4+\epsilon$. We use 
$\mathbf{d}=\frac{1}{2}
\left[\begin{smallmatrix}
    1 & 1 & 1 & 1 \\
    0 & 0 & 0 & 0
\end{smallmatrix}\right]^{t}
$ and the points $a_{1}=(-3.9)^{(1)}$, $a_{2}=(-3.7)^{(1)}$, 
$a_{3}=(-3.5)^{(1)}$, 
$a_{4}=(-3.3)^{(1)}$ and $a_{5}=(-3.1)^{(1)}$. Since the vectors $\mathbf{V}_{a_{j}}$ and 
$\mathbf{W}_{a_{j}}$ are very 
similar in this case, the same is true for the  functions
$\psi_{j}$. Therefore, we will only show the square modulus of the first component  $\psi_{1}$ in 
Fig.~\ref{fignlsh4} for $\hat{s}=(1,-1,1,-1)$ on the left. Interchanging $a_{4}$ 
and $a_{5}$ in this case, one gets a solution to 4-NLS$^{\hat{s}}$ with 
$\hat{s}=(-1,1,-1,-1)$ which can be seen on the right of 
Fig.~\ref{fignlsh4}. 
\begin{figure}[htb!]
\begin{center}
\includegraphics[width=0.45\textwidth]{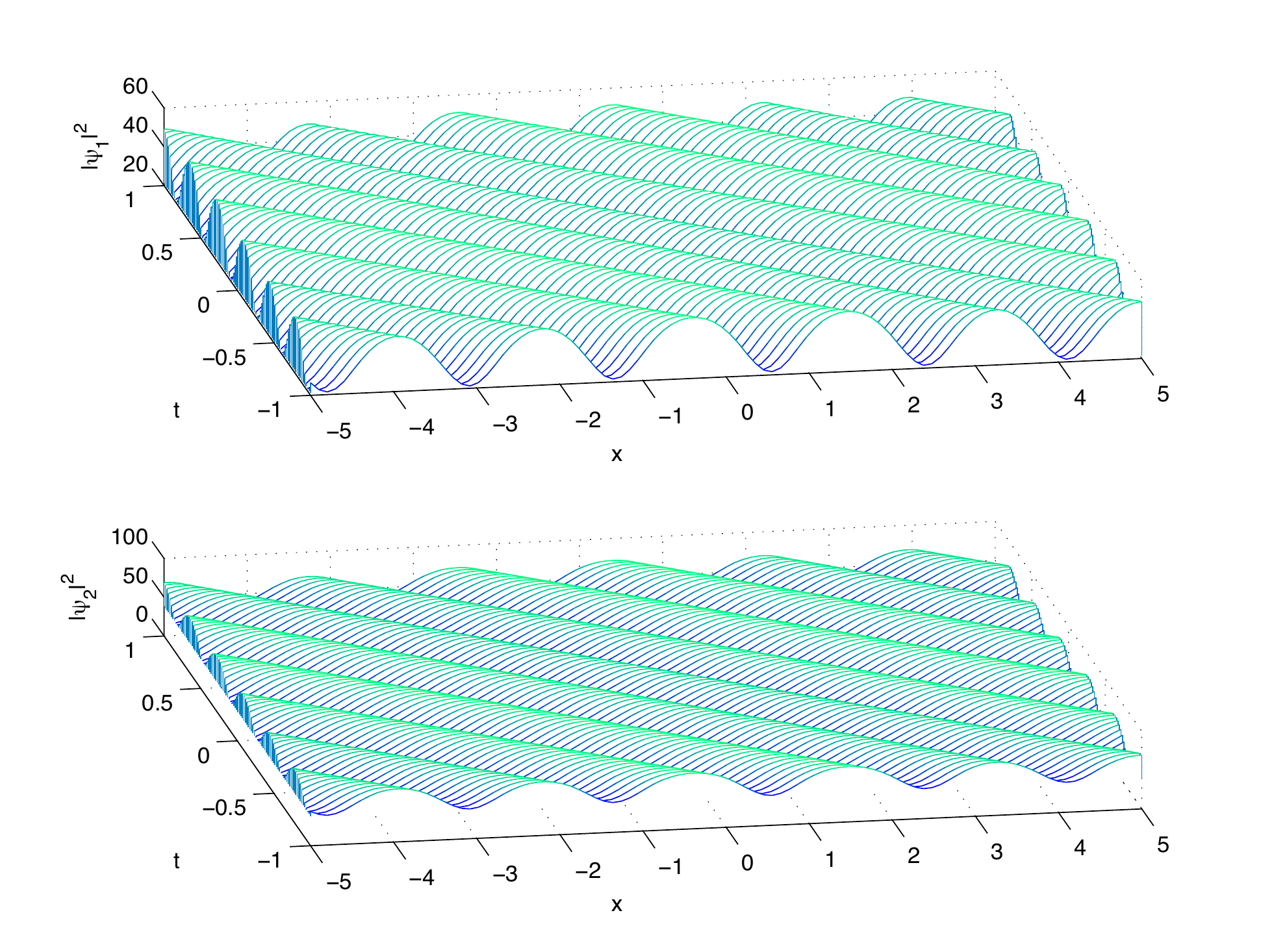}
\includegraphics[width=0.45\textwidth]{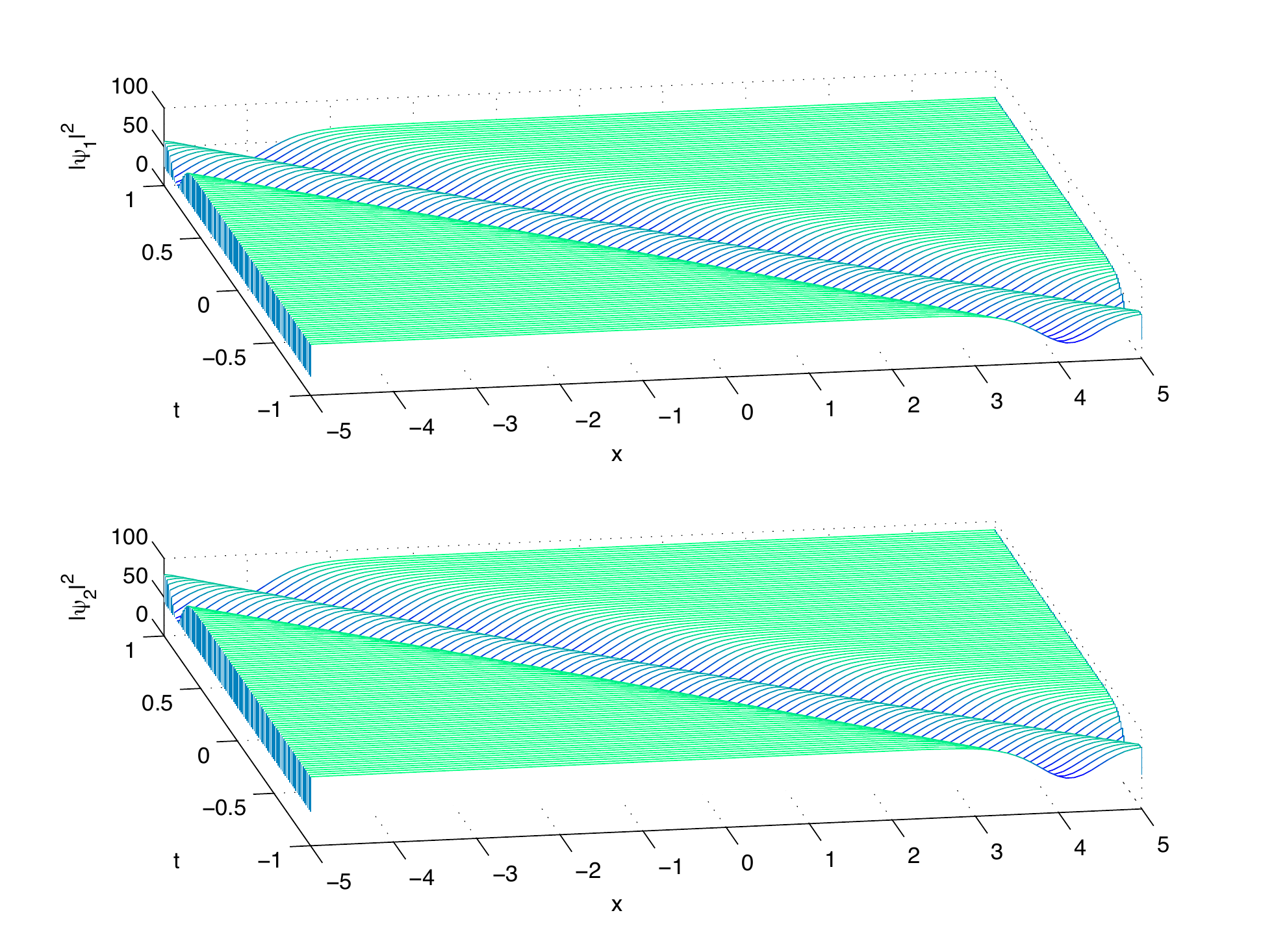}
\end{center}
 \caption{\textit{Solution (\ref{sol hyp $n$-NLS^{s}}) to the 2-NLS$^{\hat{s}}$ equation with $\hat{s}=(1,-1)$ on a hyperelliptic curve of 
 genus 2 with branch points $-2,-1,0,\epsilon,2,2+\epsilon$ and 
 $a_{1}=(-1.9)^{(1)}$, $a_{2}=(-1.8)^{(1)}$ 
 and $a_{3}=(-1.1)^{(1)}$ for $\epsilon=1$ on the left and 
 $\epsilon=10^{-10}$, the almost solitonic limit, on the right.}}
   \label{fignlsh2pm}
\end{figure}
\begin{figure}[htb!]
\begin{center}
\includegraphics[width=0.45\textwidth]{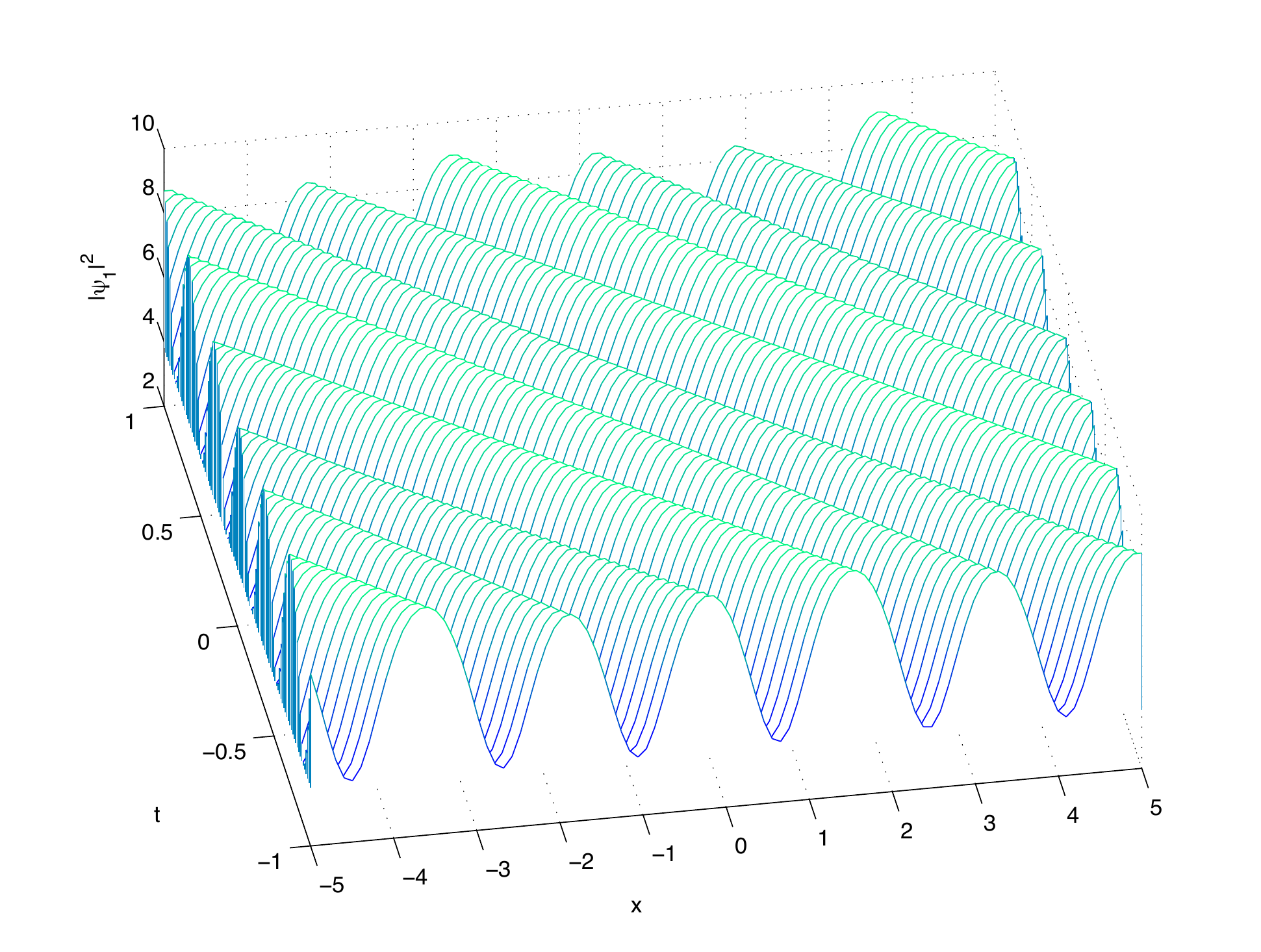}
\includegraphics[width=0.45\textwidth]{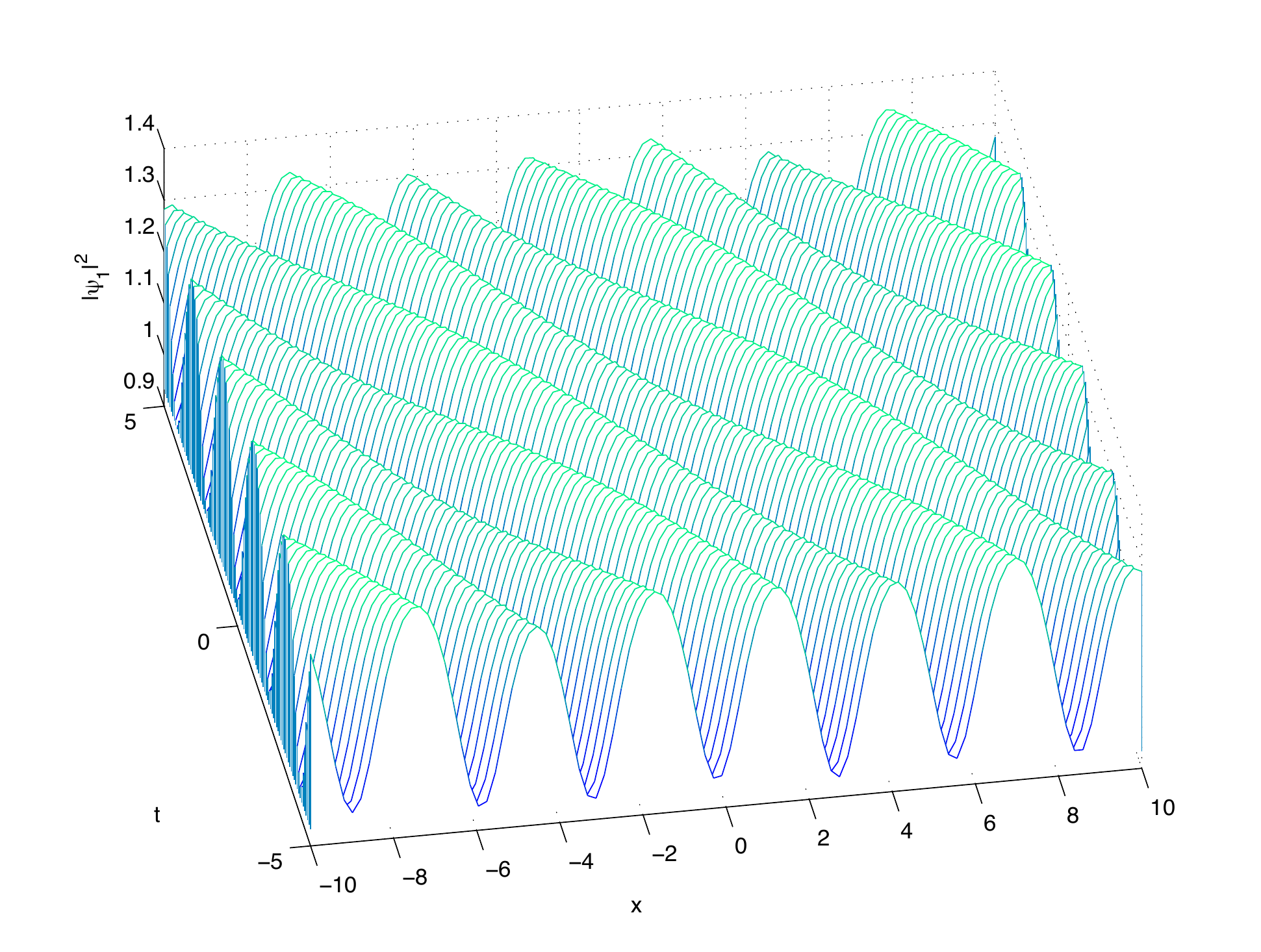}
\end{center}
 \caption{\textit{Solution to the 4-NLS$^{\hat{s}}$ equation  on a hyperelliptic curve of 
 genus 4 with branch points $-4,-3,-2,-2+\epsilon,0,\epsilon,2,2+\epsilon,4,4+\epsilon$ and 
 $\epsilon=1$ for  $a_{1}=(-3.9)^{(1)}$, $a_{2}=(-3.7)^{(1)}$, 
 $a_{3}=(-3.5)^{(1)}$, 
$a_{4}=(-3.3)^{(1)}$ and $a_{5}=(-3.1)^{(1)}$, which leads to  
$\hat{s}=(1,-1,1,-1)$, on 
the left, and for  $a_{1}=(-3.9)^{(1)}$, $a_{2}=(-3.7)^{(1)}$, 
$a_{3}=(-3.5)^{(1)}$, 
$a_{4}=(-3.1)^{(1)}$ and $a_{5}=(-3.3)^{(1)}$, which leads to  $\hat{s}=(-1,1,-1,-1)$ on the right.}}
 \label{fignlsh4}
\end{figure}
The almost solitonic limit $\epsilon=10^{-10}$ 
produces well known solitonic patterns as shown for instance for the 
DS equation in the previous subsection.

Hyperelliptic solutions to the $n$-NLS$^{\hat{s}}$ equation with all  signs $\hat{s}_{j}$ satisfying $\hat{s}_{j}=1$, can 
be constructed on a   curve without real branch points.  To obtain smooth 
solutions, we use $\mathbf{d}=0$. 
A solution of the 2-NLS$^{\hat{s}}$ equation is studied on 
the curve  of genus 2 with the branch points $-2\pm \mathrm{i}, -1\pm \mathrm{i}, 1\pm \mathrm{i}$ 
in Fig.~\ref{fignlsh2pp}.

\begin{figure}[htb!]
\begin{center}
  \includegraphics[width=0.6\textwidth]{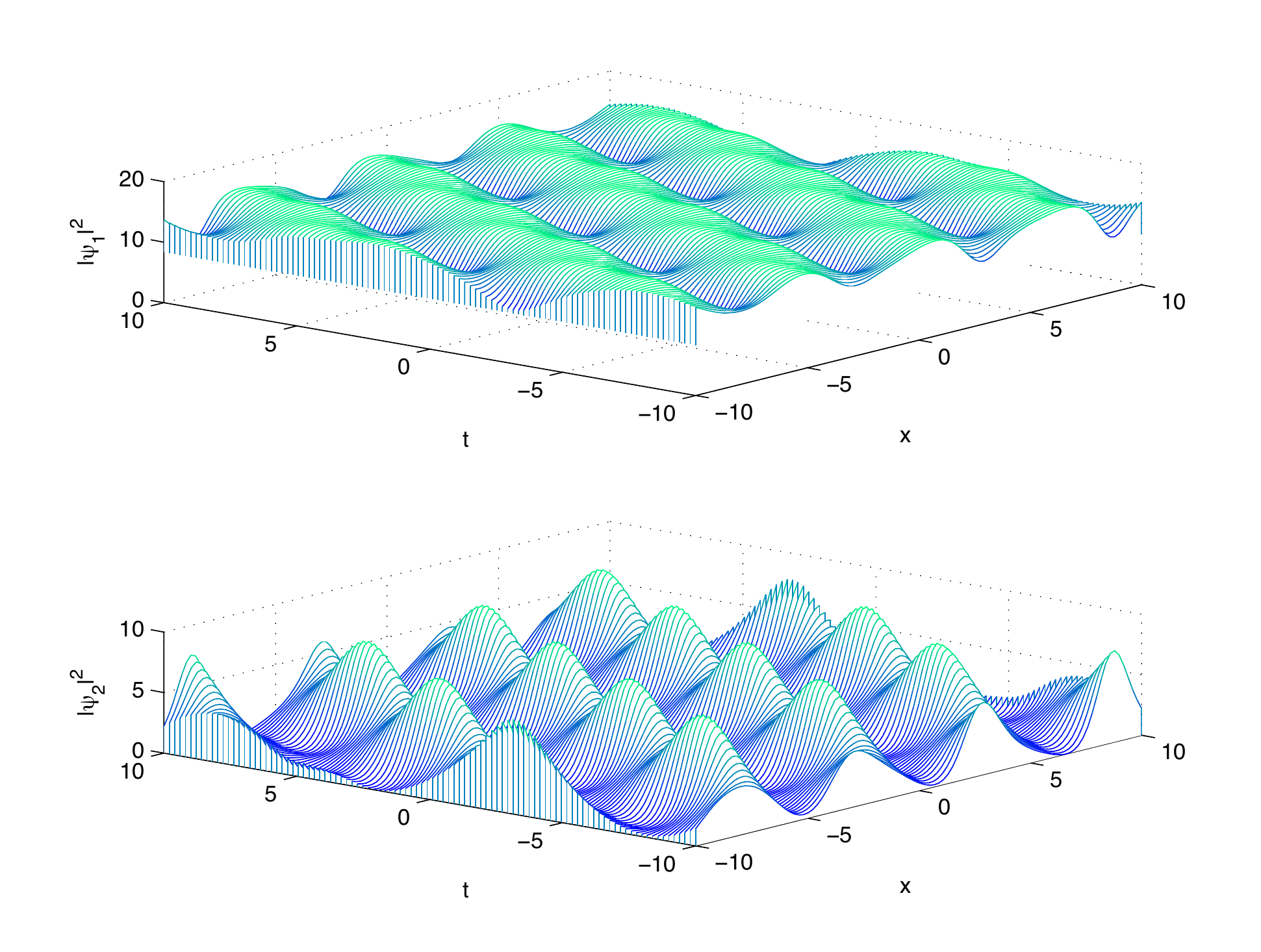}
\end{center}
 \caption{\textit{Solution to the 2-NLS$^{\hat{s}}$ equation with $\hat{s}=(1,1)$ on a hyperelliptic curve of 
 genus 2 with branch points $-2\pm \mathrm{i}$, $-1\pm \mathrm{i}$, $1\pm \mathrm{i}$ and 
 $a_{1}=(-1.9)^{(1)}$, $a_{2}=(-1.8)^{(2)}$ and $a_{3}=(-1.1)^{(1)}$.}}
   \label{fignlsh2pp}
\end{figure}

A typical example for a hyperelliptic 4-NLS$^{\hat{s}}$ solution with $\hat{s}=(1,1,1,1)$  can be obtained on a 
 curve  of
genus 4 with branch points $-2\pm \mathrm{i}, -1\pm \mathrm{i}, \pm \mathrm{i}, 1\pm \mathrm{i}, 2\pm \mathrm{i}$,
as shown in Fig.~\ref{fignlsh4p}. 
\begin{figure}[htb!]
\begin{center}
  \includegraphics[width=0.6\textwidth]{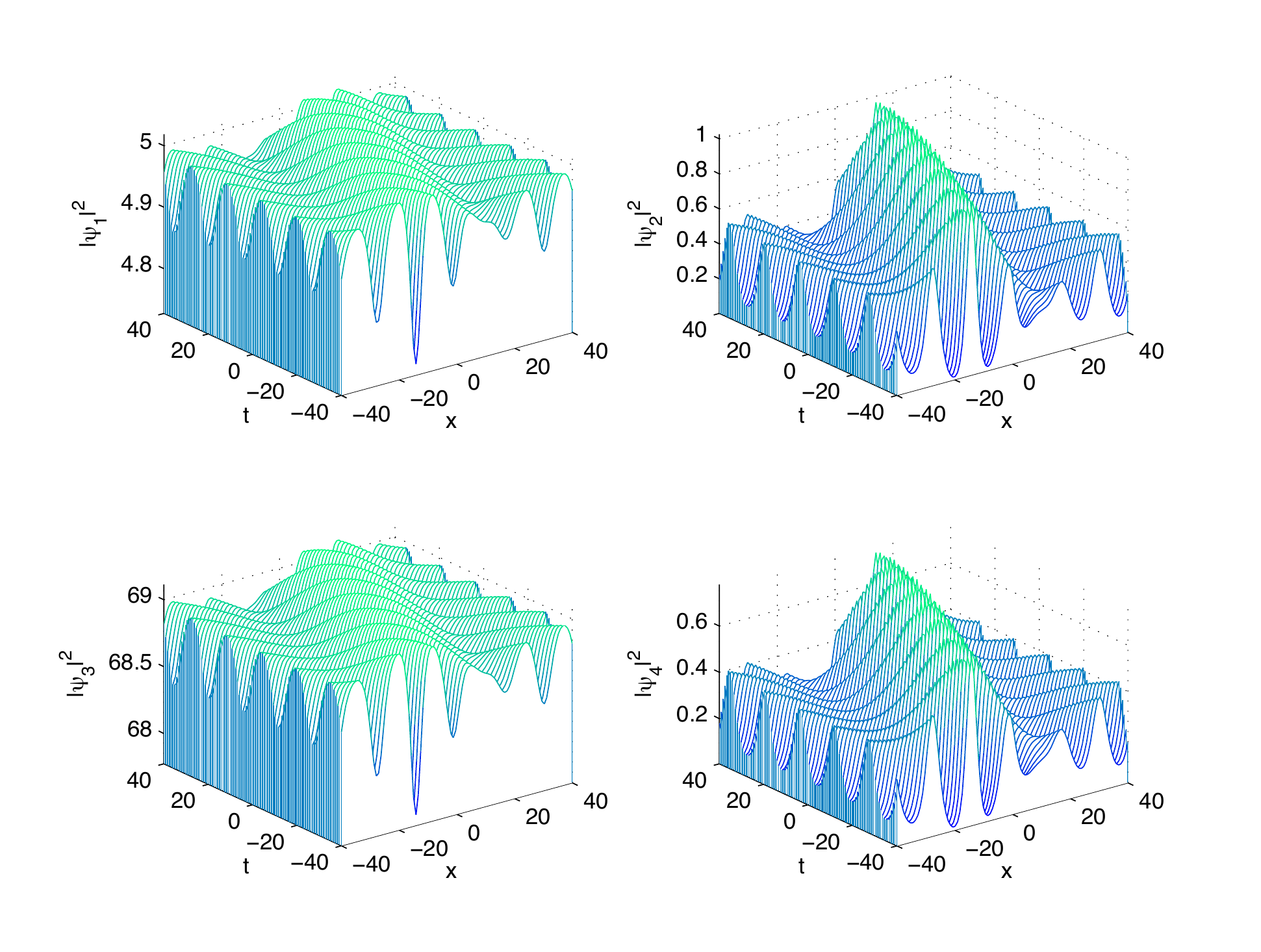}
\end{center}
 \caption{\textit{Solution to the 4-NLS$^{\hat{s}}$  equation  with $\hat{s}=(1,1,1,1)$ on a hyperelliptic curve of 
 genus 4 with branch points $-2\pm \mathrm{i}$, $-1\pm \mathrm{i}$, $\pm \mathrm{i}$, $1\pm \mathrm{i}$, 
 $2\pm \mathrm{i}$ and 
 $a_{1}=(-3.9)^{(1)}$, $a_{2}=(-3.7)^{(2)}$, $a_{3}=(-3.5)^{(1)}$, 
 $a_{4}=(-3.3)^{(2)}$ and $a_{5}=(-3.1)^{(1)}$.}}
   \label{fignlsh4p}
\end{figure}

\section{General real algebraic curves}

The quantities entering theta-functional solutions of the DS and $n$-NLS$^{s}$ equations are related to compact Riemann surfaces. Since 
all compact Riemann surfaces can be defined via compactified non-singular 
algebraic curves, convenient computational approaches as 
\cite{deco1, deco2} and \cite{FK} are based on algebraic curves: 
differentials, homology basis and periods of the Riemann surface can 
be obtained in an algorithmic way. We refer the reader to the cited 
literature for details. The identification  of the sheets of the 
covering   defined by the algebraic curve (\ref{algcur}) via the projection map $(x,y) \mapsto x$, is 
done, as in the hyperelliptic case, by analytic continuation of the 
roots $y_{i}$, $i=1,\ldots,N$  for some  non-critical point 
$x_{b}$ on the 
$x$-sphere,  along a set of 
contours specified in \cite{FK}. In the context of real  algebraic curves for 
which solutions of 
$n$-NLS$^{s}$ and DS are discussed here, an additional problem is to 
establish the action of the anti-holomorphic involution $\tau$ on 
points on different sheets. A typical problem is to find points $a\in\Rs_{g}$ 
and $b\in\Rs_{g}$ with the same projection onto the $x$-sphere such 
that $\tau a=b$; here $\tau$ is defined  via $\tau a = 
(\overline{x}(a),\overline{y}(a))$. To this end, the roots $y_{i}$, $i=1,\ldots,N,$ 
identified at $x=x_{b}$, are analytically continued to the points 
projecting to $x(a)$ on the $x$-sphere. It is then established which 
pairs of points in the different sheets satisfy  $\tau a=b$.

In contrast to the hyperelliptic curves of 
the previous section, it is not possible for general curves to 
introduce a priori a basis of the homology. Thus the cited codes use 
an algorithm by Tretkoff and Tretkoff \cite{tretalg} which produces a 
homology basis for a given branching structure of the covering which is in general not adapted to possible 
automorphisms of the curve. In the context of theta-functional 
solutions to integrable PDEs one is often interested in real curves. As 
discussed in \cite{Kalla},  the Vinnikov basis (i.e., the canonical homology basis  which satisfies (\ref{hom basis})) is 
convenient in this context. Since solutions and smoothness conditions for  
 $n$-NLS$^{s}$ and DS equations are formulated in this basis, a symplectic 
transformation relating the computed basis to the  Vinnikov basis needs 
to be worked out. This transformation is discussed in the present section and will 
be applied to examples of real algebraic curves.

\subsection{Symplectic transformation}

Let $\Rs_{g}$ be a real compact Riemann surface of genus $g$ and $\tau$ an anti-holomorphic involution defined on it.
Let $(\nu_{1},\ldots,\nu_{g})$ be a basis of holomorphic differentials such that
\begin{equation}
\overline{\tau^{*}\nu_{j}}=\nu_{j}, \qquad j=1,\ldots,g,\label{diff mu}
\end{equation}
where $\tau^{*}$ is the action of $\tau$ lifted to the space of 
holomorphic differentials: $\tau^* \omega(p) = \omega(\tau p)$ for 
any $p\in \mathcal{R}_{g}$. 
For an arbitrary canonical homology basis $(\mathbf{\mathcal{A}},\mathbf{\mathcal{B}})$, let us denote by $P_{\mathcal{A}}$ and $P_{\mathcal{B}}$ the matrices of $\mathcal{A}$ and $\mathcal{B}$-periods of the differentials $\nu_{j}$:
\begin{equation}
(P_{\mathcal{A}})_{ij}=\int_{\mathcal{A}_{i}}\nu_{j}, \qquad (P_{\mathcal{B}})_{ij}=\int_{\mathcal{B}_{i}}\nu_{j},  \qquad i,j=1,\ldots,g.  \label{matrix P}
\end{equation}
 In what follows $(\mathbf{\mathcal{A}},\mathbf{\mathcal{B}})$ denotes the Vinnikov basis.
From (\ref{diff mu}) and (\ref{hom basis}) we deduce the action of the complex conjugation on the matrices $P_{\mathcal{A}}$ and  $P_{\mathcal{B}}$:
\begin{equation}
(P_{\mathcal{A}})_{ij}\in\R,  \label{PerA Vin}
\end{equation}
\begin{equation}
\overline{P_{\mathcal{B}}}=-P_{\mathcal{B}}+\mathbb{H}P_{\mathcal{A}}. \label{PerB Vin}
\end{equation}

Denote by 
$(\mathbf{\tilde{\mathcal{A}}},\mathbf{\tilde{\mathcal{B}}})$ the 
homology basis on $\Rs_{g}$ produced by the Tretkoff-Tretkoff algorithm. 
From the symplectic transformation (\ref{transf Vinn2}) we obtain the 
following transformation law between the matrices 
$P_{\tilde{\mathcal{A}}},P_{\tilde{\mathcal{B}}}$ and $P_{\mathcal{A}},P_{\mathcal{B}}$ defined in (\ref{matrix P}):
\begin{equation}
\left(\begin{matrix}
A&B\\
C&D
\end{matrix}\right)
\left(\begin{matrix}
P_{\tilde{\mathcal{A}}}\\
P_{\tilde{\mathcal{B}}}
\end{matrix}\right) 
=
\left(\begin{matrix}
P_{\mathcal{A}}\\
P_{\mathcal{B}}
\end{matrix}\right).  \label{transf per}
\end{equation}
Therefore, by (\ref{PerA Vin}) one gets
\begin{align}
A \,\text{Re}\left(P_{\tilde{\mathcal{A}}}\right)+B\, \text{Re}\left(P_{\tilde{\mathcal{B}}}\right)&=P_{\mathcal{A}}  \label{A,B Re}\\
A \,\text{Im}\left(P_{\tilde{\mathcal{A}}}\right)+B \,\text{Im}\left(P_{\tilde{\mathcal{B}}}\right)&=0,\label{A,B Im}
\end{align}
and by (\ref{PerB Vin})
\begin{align}
C \,\text{Re}\left(P_{\tilde{\mathcal{A}}}\right)+D\, \text{Re}\left(P_{\tilde{\mathcal{B}}}\right)&=\frac{1}{2}\,\mathbb{H}P_{\mathcal{A}}  \label{C,D Re}\\
C \,\text{Im}\left(P_{\tilde{\mathcal{A}}}\right)+D \,\text{Im}\left(P_{\tilde{\mathcal{B}}}\right)&=\text{Im}\left(P_{\mathcal{B}}\right).\label{C,D Im}
\end{align}  

According to (\ref{A,B Re}), the matrix $A 
\,\text{Re}\left(P_{\tilde{\mathcal{A}}}\right)+B\, 
\text{Re}\left(P_{\tilde{\mathcal{B}}}\right)$ is invertible, since 
the matrix $P_{\mathcal{A}}$ of $\mathcal{A}$-periods of a basis of 
holomorphic differentials is always invertible (see, for instance, \cite{Bob}). 
Moreover, it is well known that the Riemann matrix 
$\mathbb{B}=2\mathrm{i}\pi\,P_{\mathcal{B}}\,(P_{\mathcal{A}})^{-1}$ 
has a (negative) definite real part, which is equal to 
$-2\pi\,\text{Im}(P_{\mathcal{B}})\,\text{Im}((P_{\mathcal{A}})^{-1})$ for the real matrix $P_{\mathcal{A}}$ here. Then, by (\ref{C,D Im}) the matrix $C \,\text{Im}\left(P_{\tilde{\mathcal{A}}}\right)+D \,\text{Im}\left(P_{\tilde{\mathcal{B}}}\right)$ is also invertible.

\begin{lemma}
The matrices $A,B,C,D\in\M_{g}(\Z)$ solving (\ref{A,B Re})-(\ref{C,D Im}) satisfy:
\begin{align}
A^{t}&=\text{Im}\left(P_{\tilde{\mathcal{B}}}\right)\left[C 
\,\text{Im}\left(P_{\tilde{\mathcal{A}}}\right)+D \,\text{Im}\left(P_{\tilde{\mathcal{B}}}\right)\right]^{-1} \label{mA}\\
B^{t}&=-\text{Im}\left(P_{\tilde{\mathcal{A}}}\right)\left[C 
\,\text{Im}\left(P_{\tilde{\mathcal{A}}}\right)+D \,\text{Im}\left(P_{\tilde{\mathcal{B}}}\right)\right]^{-1} \label{mB}\\
C^{t}&=\frac{1}{2}\,A^{t}\mathbb{H}-\text{Re}\left(P_{\tilde{\mathcal{B}}}\right)\left[A \,\text{Re}\left(P_{\tilde{\mathcal{A}}}\right)+B\, \text{Re}\left(P_{\tilde{\mathcal{B}}}\right)\right]^{-1}  \label{mC}\\
D^{t}&=\frac{1}{2}\,B^{t}\mathbb{H}+\text{Re}\left(P_{\tilde{\mathcal{A}}}\right)\left[A \,\text{Re}\left(P_{\tilde{\mathcal{A}}}\right)+B\, \text{Re}\left(P_{\tilde{\mathcal{B}}}\right)\right]^{-1}.  \label{mD}
\end{align}  \label{matrix ABCD}
\end{lemma}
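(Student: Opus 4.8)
The plan is to start from the four period relations (\ref{A,B Re})--(\ref{C,D Im}) and solve the linear system for the blocks $A,B,C,D$ of the symplectic matrix, exploiting the two facts established just above the statement: the matrix $A\,\text{Re}(P_{\tilde{\mathcal{A}}})+B\,\text{Re}(P_{\tilde{\mathcal{B}}})=P_{\mathcal{A}}$ is invertible, and the matrix $C\,\text{Im}(P_{\tilde{\mathcal{A}}})+D\,\text{Im}(P_{\tilde{\mathcal{B}}})$ is invertible. The key extra input will be the symplecticity of $\left(\begin{smallmatrix}A&B\\ C&D\end{smallmatrix}\right)$, i.e.\ $AB^{t}=BA^{t}$, $CD^{t}=DC^{t}$, $AD^{t}-BC^{t}=\mathbb{I}_{g}$; without these relations the four scalar equations do not determine the blocks.

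First I would treat the two imaginary-part equations. From (\ref{A,B Im}) we have $A\,\text{Im}(P_{\tilde{\mathcal{A}}})=-B\,\text{Im}(P_{\tilde{\mathcal{B}}})$. I would multiply (\ref{C,D Im}) on the left by $A$ and on the right by nothing, then use this identity together with $AD^{t}-BC^{t}=\mathbb{I}_{g}$ to isolate $\text{Im}(P_{\tilde{\mathcal{B}}})$. Concretely, transpose (\ref{C,D Im}) to get $\text{Im}(P_{\tilde{\mathcal{A}}})^{t}C^{t}+\text{Im}(P_{\tilde{\mathcal{B}}})^{t}D^{t}=\text{Im}(P_{\mathcal{B}})^{t}$; left-multiply by the (to-be-shown) appropriate combination and cancel using the transposed form of (\ref{A,B Im}), namely $\text{Im}(P_{\tilde{\mathcal{A}}})^{t}A^{t}+\text{Im}(P_{\tilde{\mathcal{B}}})^{t}B^{t}=0$. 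The cleanest route: from (\ref{A,B Im}) and (\ref{C,D Im}), form the $2g\times 2g$ block identity
\[
\left(\begin{matrix}A&B\\ C&D\end{matrix}\right)
\left(\begin{matrix}\text{Im}(P_{\tilde{\mathcal{A}}})\\ \text{Im}(P_{\tilde{\mathcal{B}}})\end{matrix}\right)
=
\left(\begin{matrix}0\\ \text{Im}(P_{\mathcal{B}})\end{matrix}\right),
\]
and similarly for the real parts with right-hand side $\left(\begin{smallmatrix}P_{\mathcal{A}}\\ \tfrac12\mathbb{H}P_{\mathcal{A}}\end{smallmatrix}\right)$. Inverting the symplectic matrix via $\left(\begin{smallmatrix}A&B\\ C&D\end{smallmatrix}\right)^{-1}=\left(\begin{smallmatrix}D^{t}&-B^{t}\\ -C^{t}&A^{t}\end{smallmatrix}\right)$ immediately yields $\text{Im}(P_{\tilde{\mathcal{A}}})=-B^{t}\,\text{Im}(P_{\mathcal{B}})$ and $\text{Im}(P_{\tilde{\mathcal{B}}})=A^{t}\,\text{Im}(P_{\mathcal{B}})$. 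Combining these two and using the invertibility of $C\,\text{Im}(P_{\tilde{\mathcal{A}}})+D\,\text{Im}(P_{\tilde{\mathcal{B}}})=\text{Im}(P_{\mathcal{B}})$ to eliminate $\text{Im}(P_{\mathcal{B}})$ gives $A^{t}=\text{Im}(P_{\tilde{\mathcal{B}}})[C\,\text{Im}(P_{\tilde{\mathcal{A}}})+D\,\text{Im}(P_{\tilde{\mathcal{B}}})]^{-1}$ and $B^{t}=-\text{Im}(P_{\tilde{\mathcal{A}}})[C\,\text{Im}(P_{\tilde{\mathcal{A}}})+D\,\text{Im}(P_{\tilde{\mathcal{B}}})]^{-1}$, which are (\ref{mA}) and (\ref{mB}).

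For (\ref{mC}) and (\ref{mD}) I would run the same argument on the real-part equations: the inverse symplectic matrix applied to $\left(\begin{smallmatrix}P_{\mathcal{A}}\\ \tfrac12\mathbb{H}P_{\mathcal{A}}\end{smallmatrix}\right)$ gives $\text{Re}(P_{\tilde{\mathcal{A}}})=D^{t}P_{\mathcal{A}}-\tfrac12 B^{t}\mathbb{H}P_{\mathcal{A}}$ and $\text{Re}(P_{\tilde{\mathcal{B}}})=-C^{t}P_{\mathcal{A}}+\tfrac12 A^{t}\mathbb{H}P_{\mathcal{A}}$ (here one must be careful with whether $\mathbb{H}$ commutes past $A^{t},B^{t}$: it does not, so keep the order $B^{t}\mathbb{H}$, $A^{t}\mathbb{H}$ as written). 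Solving the second for $C^{t}P_{\mathcal{A}}$ gives $C^{t}P_{\mathcal{A}}=\tfrac12 A^{t}\mathbb{H}P_{\mathcal{A}}-\text{Re}(P_{\tilde{\mathcal{B}}})$, and right-multiplying by $P_{\mathcal{A}}^{-1}=[A\,\text{Re}(P_{\tilde{\mathcal{A}}})+B\,\text{Re}(P_{\tilde{\mathcal{B}}})]^{-1}$ produces (\ref{mC}); the first equation yields (\ref{mD}) the same way.

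The main obstacle I anticipate is the bookkeeping around transposes and the noncommutativity of $\mathbb{H}$: one has to verify carefully that the symplectic inverse formula produces exactly the groupings $A^{t}\mathbb{H}$ and $B^{t}\mathbb{H}$ (not $\mathbb{H}A^{t}$ etc.), and that the right-hand-side block $\tfrac12\mathbb{H}P_{\mathcal{A}}$ coming from (\ref{C,D Re}) via (\ref{PerB Vin}) is correct including the factor $\tfrac12$. A secondary point is to double-check that (\ref{PerB Vin}) indeed feeds into (\ref{C,D Re}) in the stated form --- this uses that $P_{\mathcal{A}}$ is real, so $\text{Re}(\mathbb{H}P_{\mathcal{A}})=\mathbb{H}P_{\mathcal{A}}$ and $\text{Re}(P_{\mathcal{B}})=\tfrac12\mathbb{H}P_{\mathcal{A}}$ directly from $\overline{P_{\mathcal{B}}}=-P_{\mathcal{B}}+\mathbb{H}P_{\mathcal{A}}$. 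Once the block-inversion identity is set up, everything else is routine linear algebra and no genuine difficulty remains.
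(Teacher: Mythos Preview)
Your argument is correct and rests on the same essential input as the paper's proof: the symplecticity of $\left(\begin{smallmatrix}A&B\\ C&D\end{smallmatrix}\right)$. The paper proceeds block by block --- multiplying (\ref{A,B Im}) on the left by $C^{t}$ and using $A^{t}D-C^{t}B=\mathbb{I}_{g}$, $A^{t}C=C^{t}A$ to extract (\ref{mA}), then repeating with suitable left-multiplications for the other three identities --- whereas you package the four relations (\ref{A,B Re})--(\ref{C,D Im}) into two $2g\times g$ block equations and apply the symplectic inverse formula $\left(\begin{smallmatrix}A&B\\ C&D\end{smallmatrix}\right)^{-1}=\left(\begin{smallmatrix}D^{t}&-B^{t}\\ -C^{t}&A^{t}\end{smallmatrix}\right)$ in one stroke. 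The content is the same (the inverse formula is precisely an encoding of the relations (\ref{Symp1})--(\ref{Symp3}) the paper invokes), but your organization is cleaner: it avoids the separate case-by-case multiplications and makes the appearance of $A^{t}\mathbb{H}$ and $B^{t}\mathbb{H}$ in the correct order automatic rather than something one has to track by hand. Your caution about the derivation of (\ref{C,D Re}) from (\ref{PerB Vin}) via $\text{Re}(P_{\mathcal{B}})=\tfrac12\mathbb{H}P_{\mathcal{A}}$ is well placed and correct.
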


\begin{proof}
Recall that symplectic matrices $M=\left(\begin{matrix}
A&B\\
C&D
\end{matrix}\right)\in Sp(2g,\Z)$ are characterized by
\begin{align}
A^{t}D-&C^{t}B=\mathbb{I}_{g},  \label{Symp1}\\
A^{t}C&=C^{t}A,  \label{Symp2}\\
D^{t}B&=B^{t}D.\label{Symp3}
\end{align}
Multiplying  equality (\ref{A,B Im}) from the left by the matrix $C^{t}$, we deduce from (\ref{Symp1}) and (\ref{Symp2}) that:
\begin{align}
&C^{t}A\,\text{Im}\left(P_{\tilde{\mathcal{A}}}\right)+C^{t}B\,\text{Im}\left(P_{\tilde{\mathcal{B}}}\right)=0 \nonumber\\
&C^{t}A\,\text{Im}\left(P_{\tilde{\mathcal{A}}}\right)+(A^{t}D-\mathbb{I}_{g})\,\text{Im}\left(P_{\tilde{\mathcal{B}}}\right)=0\nonumber\\
& 
A^{t}C\,\text{Im}\left(P_{\tilde{\mathcal{A}}}\right)+A^{t}D\,\text{Im}\left(P_{\tilde{\mathcal{B}}}\right)=\text{Im}\left(P_{\tilde{\mathcal{B}}}\right),\nonumber
\end{align}
which leads to (\ref{mA}). Equality (\ref{mB}) can be checked 
analogously with (\ref{Symp1}) and (\ref{Symp3}).
To prove (\ref{mC}), multiply equality (\ref{C,D Re}) from  the 
left by the matrix $A^{t}$. Using (\ref{Symp1}) and (\ref{Symp2}) one gets:
\begin{align}
&A^{t}C \,\text{Re}\left(P_{\tilde{\mathcal{A}}}\right)+A^{t}D\, \text{Re}\left(P_{\tilde{\mathcal{B}}}\right)=\frac{1}{2}A^{t}\mathbb{H}P_{\mathcal{A}}  \nonumber\\
& C^{t}A \,\text{Re}\left(P_{\tilde{\mathcal{A}}}\right)+(\mathbb{I}_{g}+C^{t}B)\, \text{Re}\left(P_{\tilde{\mathcal{B}}}\right)=\frac{1}{2}A^{t}\mathbb{H}P_{\mathcal{A}} \nonumber\\
& C^{t}\left(A \,\text{Re}\left(P_{\tilde{\mathcal{A}}}\right)+B\, \text{Re}\left(P_{\tilde{\mathcal{B}}}\right)\right)=\frac{1}{2}A^{t}\mathbb{H}P_{\mathcal{A}}-\text{Re}\left(P_{\tilde{\mathcal{B}}}\right),\nonumber
\end{align}
which by (\ref{A,B Re}) leads to (\ref{mC}). Identity (\ref{mD}) can be proved analogously.
\end{proof}

\begin{remark}
\rm{Lemma \ref{matrix ABCD} implies that it is sufficient to know the 
matrices $A$ and $B$ (or $C$ and $D$) to determine the symplectic 
matrix in (\ref{transf per}). In practice, this means that a 
convenient ansatz for one of the matrices has to be found. The others 
then follow from the relations in Lemma \ref{matrix ABCD}.}
\end{remark}

Thus to construct these matrices one first checks which of the 
matrices $\mathrm{Re}\left(P_{\tilde{\mathcal{A}}}\right)$, 
$\mathrm{Re}\left(P_{\tilde{\mathcal{B}}}\right)$, $\mathrm{Im}\left(P_{\tilde{\mathcal{A}}}\right)$, 
$\mathrm{Im}\left(P_{\tilde{\mathcal{B}}}\right)$ are invertible. This way a matrix can be 
identified (e.g.~$A$) in terms of which the others  can be 
expressed. The task is thus reduced to provide an ansatz for this 
matrix such that the others will have entire components. We 
illustrate this approach at the example of the Trott curve below.

\begin{proposition}
Let $(\mathbf{\tilde{\mathcal{A}}},\mathbf{\tilde{\mathcal{B}}})$ 
be the canonical homology basis obtained with the Tretkoff-Tretkoff 
algorithm; we denote with a tilde the quantities expressed in this basis.
Under the change of homology basis (\ref{transf Vinn2}),  solutions 
of $n$-NLS$^{s}$ and DS equations given in (\ref{sol n-NLS}) and 
(\ref{psi DS}), respectively, which are expressed in  the  basis 
satisfying (\ref{hom basis}), transform as follows: the vector $\mathbf{d}$ appearing in the solutions becomes $(2\mathrm{i}\pi)^{-1}\,\tilde{\mathbb{K}}^{t}\,\mathbf{d}$ where $\tilde{\mathbb{K}}=2\mathrm{i}\pi A+B\,\tilde{\mathbb{B}}$, and the theta function $\Theta=\Theta_{\B}$ with zero characteristic, transforms to the theta function $\Theta_{\tilde{\B}}[\tilde{\delta}]$ with characteristic $\tilde{\delta}=[\tilde{\delta}_{1},\tilde{\delta}_{2}]$ given by
\begin{align}
\tilde{\delta}_{1}&= \frac{1}{4}\,\text{diag}\left(B^{t}\mathbb{H}B-2\,\text{Re}\left(P_{\tilde{\mathcal{A}}}\right)\tilde{\mathbb{M}}^{-1}\text{Im}\left(P_{\tilde{\mathcal{A}}}^{t}\right)\right),  \label{caract1}\\ 
\tilde{\delta}_{2}&=\frac{1}{4}\,\text{diag}\left(A^{t}\mathbb{H}A-2\,\text{Re}\left(P_{\tilde{\mathcal{B}}}\right)\tilde{\mathbb{M}}^{-1}\text{Im}\left(P_{\tilde{\mathcal{B}}}^{t}\right)\right),  \label{caract2}
\end{align}
where
\begin{equation}
\tilde{\mathbb{M}}=\text{Im}\left(P_{\tilde{\mathcal{B}}}^{t}\right)\text{Re}\left(P_{\tilde{\mathcal{A}}}\right)-\text{Im}\left(P_{\tilde{\mathcal{A}}}^{t}\right)\text{Re}\left(P_{\tilde{\mathcal{B}}}\right).  \label{M}\\
\end{equation}    
 Moreover, the real constant $h$ appearing in (\ref{phi DS}) and (\ref{N3 DS}) becomes $h+\tilde{h}$ where $\text{Im}(\tilde{h})$ is given by
\begin{equation}
\text{Im}(\tilde{h})=\frac{1}{2}\,\ln\left\{\left|\frac{\Theta_{\tilde{\mathbb{B}}}[\tilde{\delta}](\mathbf{\tilde{Z}}+\mathbf{\tilde{r}})}{\Theta_{\tilde{\mathbb{B}}}[\tilde{\delta}](\mathbf{\tilde{Z}}-\mathbf{\tilde{r}})}\right|\right\}-\text{Im}(\tilde{G}_{3}),
\end{equation}
with $\mathbf{\tilde{Z}}=\mathrm{i}(\mathbf{\tilde{W}}_{a}-\mathbf{\tilde{W}}_{b})$,
and the vectors $\mathbf{N},\mathbf{M}$ defined in (\ref{hom basis 3}) become $A^{t}\mathbf{N}+C^{t}\mathbf{M}$ and $B^{t}\mathbf{N}+D^{t}\mathbf{M}$ respectively.
             
\end{proposition}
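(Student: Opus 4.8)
The plan is to track every quantity appearing in the solution formulas (\ref{sol n-NLS}), (\ref{psi DS}), (\ref{phi DS}) through the general theta transformation law (\ref{transf theta})--(\ref{transf caract}), specializing the abstract symplectic data $(A,B,C,D)$ to the concrete transformation relating the Tretkoff-Tretkoff basis to the Vinnikov basis. The transformation of $\mathbf d$ and of the theta function itself is dictated directly by (\ref{transf theta}): since in the Vinnikov basis the argument $\mathbf Z - \mathbf d$ plays the role of $\z$, formula for $\tilde\z$ gives that $\mathbf d$ becomes $(2\mathrm i\pi)^{-1}\tilde{\mathbb K}^t\mathbf d$ with $\tilde{\mathbb K}=2\mathrm i\pi A+B\tilde{\mathbb B}$, and $\Theta_{\mathbb B}$ becomes $\Theta_{\tilde{\mathbb B}}[\tilde\delta]$; the exponential prefactor $\exp\{\tfrac12\tilde\z^t(\tilde{\mathbb K}^t)^{-1}B\tilde\z\}$ and the constant $\kappa\sqrt{\det\tilde{\mathbb K}}$ get absorbed into $|A|$, $\theta$, $h$ (they cancel in the ratio $\Theta(\cdot+\mathbf r)/\Theta(\cdot)$ up to a quadratic-in-$\mathbf r$ term that is a constant, hence contributes to $\theta$, and a linear term that shifts the exponents $E_j,F_j,G_i$ — but the proposition only claims what happens to $\mathbf d$, $\Theta$, $h$, $\mathbf N$, $\mathbf M$, so I only need to chase those). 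The transformation $\mathbf N\mapsto A^t\mathbf N+C^t\mathbf M$, $\mathbf M\mapsto B^t\mathbf N+D^t\mathbf M$ comes from the fact that $\mathbf N,\mathbf M$ are the coefficients of the periods $2\mathrm i\pi\mathbf N+\mathbb B\mathbf M$ in (\ref{hom basis 3}); rewriting the lattice vector in the tilde basis via (\ref{transf per}) and matching coefficients gives the claim, using the relation $\mathbb B=2\mathrm i\pi(2\mathrm i\pi C+D\tilde{\mathbb B})\tilde{\mathbb K}^{-1}$.

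The substantive computation is the explicit evaluation of $\tilde\delta_1,\tilde\delta_2$ in (\ref{caract1})--(\ref{caract2}). Here I would start from (\ref{transf caract}), which gives $\delta=0$ (the zero characteristic in the Vinnikov basis) in terms of $\tilde\delta$:
\begin{equation}
0=\left(\begin{array}{rr}A&-B\\-C&D\end{array}\right)\left(\begin{matrix}\tilde\delta_1\\\tilde\delta_2\end{matrix}\right)+\frac12\,\text{Diag}\left(\begin{matrix}BA^t\\DC^t\end{matrix}\right),\nonumber
\end{equation}
so that $\tilde\delta$ is obtained by inverting the symplectic matrix, i.e. $\tilde\delta = -\left(\begin{smallmatrix}D^t&B^t\\C^t&A^t\end{smallmatrix}\right)\tfrac12\text{Diag}\left(\begin{smallmatrix}BA^t\\DC^t\end{smallmatrix}\right)$ (using $\left(\begin{smallmatrix}A&-B\\-C&D\end{smallmatrix}\right)^{-1}=\left(\begin{smallmatrix}D^t&B^t\\C^t&A^t\end{smallmatrix}\right)$, which follows from the symplectic relations (\ref{Symp1})--(\ref{Symp3})). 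Then I would substitute the expressions for $C^t$ and $D^t$ from Lemma \ref{matrix ABCD} — namely $C^t=\tfrac12 A^t\mathbb H-\text{Re}(P_{\tilde{\mathcal B}})[\,A\,\text{Re}(P_{\tilde{\mathcal A}})+B\,\text{Re}(P_{\tilde{\mathcal B}})]^{-1}$ and $D^t=\tfrac12 B^t\mathbb H+\text{Re}(P_{\tilde{\mathcal A}})[\,A\,\text{Re}(P_{\tilde{\mathcal A}})+B\,\text{Re}(P_{\tilde{\mathcal B}})]^{-1}$ — together with the analogous relations (\ref{mA})--(\ref{mB}) for $A^t,B^t$, to express everything in terms of the period matrices $P_{\tilde{\mathcal A}},P_{\tilde{\mathcal B}}$. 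The key algebraic step is to recognize that $[\,A\,\text{Re}(P_{\tilde{\mathcal A}})+B\,\text{Re}(P_{\tilde{\mathcal B}})]^{-1}$ and $[\,C\,\text{Im}(P_{\tilde{\mathcal A}})+D\,\text{Im}(P_{\tilde{\mathcal B}})]^{-1}$, when combined with the $\text{Im}$ versions $A^t=\text{Im}(P_{\tilde{\mathcal B}})[\cdots]^{-1}$, collapse into expressions governed by the single matrix $\tilde{\mathbb M}=\text{Im}(P_{\tilde{\mathcal B}}^t)\text{Re}(P_{\tilde{\mathcal A}})-\text{Im}(P_{\tilde{\mathcal A}}^t)\text{Re}(P_{\tilde{\mathcal B}})$ defined in (\ref{M}); indeed $\tilde{\mathbb M}$ is essentially $(P_{\tilde{\mathcal A}}^{-t}$ times a Riemann-bilinear-relation combination), and one checks $A^t\text{Re}(P_{\tilde{\mathcal A}})+B^t\cdot(\text{something})$ type identities reduce the cross terms. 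Taking the diagonal (noting $\text{Diag}$ picks out diagonal entries and $\text{diag}(XY^t)$-type terms are what survive from $\text{Diag}(BA^t)$ etc.) yields (\ref{caract1})--(\ref{caract2}) after identifying $B^t\mathbb H B$ from the $\tfrac12 B^t\mathbb H$ piece of $D^t$ paired against $B$ in $\text{Diag}(DC^t)$, and similarly $A^t\mathbb H A$.

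For the constant $h$: in (\ref{phi DS}) the term $\tfrac h4$ and the $(\ln\Theta)_{\xi\xi}+(\ln\Theta)_{\eta\eta}$ combine; under the transformation the quadratic prefactor $\exp\{\tfrac12\tilde\z^t(\tilde{\mathbb K}^t)^{-1}B\tilde\z\}$ contributes a term linear in $\xi,\eta$ after taking $\partial_{\xi\xi}+\partial_{\eta\eta}$ of its logarithm (since $\tilde\z$ is affine in $\xi,\eta$), and this shift must be reabsorbed into $h$; computing it gives the real shift, while the claimed formula for $\text{Im}(\tilde h)$ reflects the reality normalization of $\psi$ — one imposes that the transformed $\psi$ still has the form (\ref{psi DS}) with a \emph{real} constant in front, which forces $\text{Im}(\tilde h)=\tfrac12\ln|\Theta_{\tilde{\mathbb B}}[\tilde\delta](\tilde{\mathbf Z}+\tilde{\mathbf r})/\Theta_{\tilde{\mathbb B}}[\tilde\delta](\tilde{\mathbf Z}-\tilde{\mathbf r})|-\text{Im}(\tilde G_3)$ at the reference point $\tilde{\mathbf Z}=\mathrm i(\tilde{\mathbf W}_a-\tilde{\mathbf W}_b)$; I would derive this by evaluating the modulus condition $|\psi|=|A|$ at a convenient spacetime point where $\tilde{\mathbf Z}$ takes the stated value. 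The main obstacle I anticipate is the characteristic computation: carefully inverting the symplectic matrix, substituting Lemma \ref{matrix ABCD}, and recognizing the $\tilde{\mathbb M}$-structure is a delicate bookkeeping exercise where the diagonal projections and the symplectic identities (\ref{Symp1})--(\ref{Symp3}) must be used repeatedly; keeping track of which terms are $2\mathrm i\pi$-periods (hence irrelevant mod integer/half-integer shifts of the characteristic) versus genuine contributions is where errors are most likely to creep in.
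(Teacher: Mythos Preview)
Your overall strategy matches the paper's: push every ingredient of the solution through the transformation law (\ref{transf theta})--(\ref{transf caract}) and then use Lemma~\ref{matrix ABCD} to rewrite the characteristic in terms of the computed periods. Two points deserve correction or sharpening.

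\textbf{Characteristic.} The paper does not invert the affine equation as you propose. Instead it applies (\ref{transf caract}) in the \emph{inverse} direction, i.e.\ with the symplectic matrix $\left(\begin{smallmatrix}D^{t}&-B^{t}\\-C^{t}&A^{t}\end{smallmatrix}\right)$, which immediately yields
\[
\left(\begin{matrix}\tilde\delta_{1}\\\tilde\delta_{2}\end{matrix}\right)=\tfrac12\,\text{Diag}\!\left(\begin{matrix}D^{t}B\\C^{t}A\end{matrix}\right)
\]
(modulo integers, using that $D^{t}B$ and $C^{t}A$ are symmetric). From here, substituting (\ref{mA})--(\ref{mB}) into (\ref{mC})--(\ref{mD}) gives $C^{t}A$ and $D^{t}B$ directly in terms of $\mathbb H$ and $\tilde{\mathbb M}$, producing (\ref{caract1})--(\ref{caract2}) in one line. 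Your route---computing $-\left(\begin{smallmatrix}D^{t}&B^{t}\\C^{t}&A^{t}\end{smallmatrix}\right)\tfrac12\text{Diag}\left(\begin{smallmatrix}BA^{t}\\DC^{t}\end{smallmatrix}\right)$---agrees modulo integers but leaves you with matrix--vector products (e.g.\ $D^{t}\,\text{diag}(BA^{t})+B^{t}\,\text{diag}(DC^{t})$) that do not obviously reduce to the stated diagonals; you would need extra work with the symplectic relations to collapse them.

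\textbf{The constant $\tilde h$.} There is a genuine gap here. The paper first derives the explicit transformation laws (\ref{transf q2})--(\ref{transf K2}) for $q_{1},q_{2},K_{1},K_{2}$ and reads off
\[
\tilde h=-\tilde{\mathbf V}_{a}^{t}(\tilde{\mathbb K}^{t})^{-1}B\,\tilde{\mathbf V}_{a}-\tilde{\mathbf V}_{b}^{t}(\tilde{\mathbb K}^{t})^{-1}B\,\tilde{\mathbf V}_{b}.
\]
To isolate $\text{Im}(\tilde h)$ it then invokes the reality condition $\psi^{*}=\rho\,\overline{\psi}$ from \cite{Kalla}, where $\psi^{*}$ is the companion function built with $\Theta(\mathbf Z-\mathbf d-\mathbf r)$ in the numerator; since this identity holds in the Vinnikov basis, it persists in the computed basis. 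Taking moduli at $\xi=\eta=0$, $t=2$, $\mathbf d=0$ gives the stated formula. Your proposed ``modulus condition $|\psi|=|A|$'' is not an identity that holds---you need the pairing of $\psi$ with $\psi^{*}$, not a condition on $|\psi|$ alone.
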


\begin{proof}
Under the change of the canonical homology basis (\ref{transf Vinn2}), the vector $\omega=(\omega_{1},\ldots,\omega_{g})^{t}$ of normalized holomorphic  differentials transforms as
\begin{equation}
\omega=2\mathrm{i}\pi\, (\tilde{\mathbb{K}}^{t})^{-1}\,\tilde{\omega},  \label{transf hol diff}
\end{equation}
where $\tilde{\mathbb{K}}=2\mathrm{i}\pi A+B\,\tilde{\mathbb{B}}$.  According to the transformation law (\ref{transf theta}) of theta functions, it can be checked after straightforward calculations, that under this  change of homology basis, quantities (\ref{q2})-(\ref{K2}) transform as:
\begin{align}
q_{2}(a,b)&=\tilde{q}_{2}(a,b)\,\exp\left\{-\mathbf{\tilde{r}}^{t}\,(\tilde{\mathbb{K}}^{t})^{-1}B\,\tilde{\mathbf{r}}\right\},  \label{transf q2}
\\
q_{1}(a,b)&=\tilde{q}_{1}(a,b)+\frac{1}{2}\,\mathbf{\tilde{V}}_{a}^{t}\,(\tilde{\mathbb{K}}^{t})^{-1}B\,\mathbf{\tilde{V}}_{b}, \label{transf q1}
\end{align}
\begin{equation}
K_{1}(a,b)=\tilde{K}_{1}(a,b)+\frac{1}{2}\,\left(\mathbf{\tilde{V}}_{a}^{t}\,(\tilde{\mathbb{K}}^{t})^{-1}B\,\mathbf{\tilde{r}}+\mathbf{\tilde{r}}^{t}\,(\tilde{\mathbb{K}}^{t})^{-1}B\,\mathbf{\tilde{V}}_{a}\right), \label{transf K1}
\end{equation}
\begin{equation}
K_{2}(a,b)=\tilde{K}_{2}(a,b)-\frac{1}{2}\,\left(\mathbf{\tilde{W}}_{a}^{t}\,(\tilde{\mathbb{K}}^{t})^{-1}B\,\mathbf{\tilde{r}}+\mathbf{\tilde{r}}^{t}\,(\tilde{\mathbb{K}}^{t})^{-1}B\,\mathbf{\tilde{W}}_{a}\right)-\mathbf{\tilde{V}}_{a}^{t}\,(\tilde{\mathbb{K}}^{t})^{-1}B\,\mathbf{\tilde{V}}_{a}. \label{transf K2}
\end{equation}
We deduce that solutions of the  $n$-NLS$^{s}$ and DS equations given in (\ref{sol n-NLS}) and 
(\ref{psi DS}), respectively, transform as follows: the vector $\mathbf{d}$ becomes $(2\mathrm{i}\pi)^{-1}\,\tilde{\mathbb{K}}^{t}\,\mathbf{d}$, and the theta function $\Theta=\Theta_{\B}$ with zero characteristic, transforms to the theta function $\Theta_{\tilde{\B}}[\tilde{\delta}]$ with characteristic $\tilde{\delta}$. To compute the vectors of the characteristic $\tilde{\delta}$ we consider the
inversion of the symplectic matrix in (\ref{transf Vinn2}) which leads to
\begin{equation}
\left(\begin{matrix}
\mathbf{\tilde{\mathcal{A}}}\\
\mathbf{\tilde{\mathcal{B}}}
\end{matrix}\right) 
=
\left(\begin{array}{rrr}
D^{t}&-B^{t}\\
-C^{t}&A^{t}
\end{array}\right)\left(\begin{matrix}
\mathbf{\mathcal{A}}\\
\mathbf{\mathcal{B}}
\end{matrix}\right). \nonumber
\end{equation}
Since the characteristic used in \cite{Kalla} to construct solutions (\ref{sol n-NLS}), (\ref{psi DS}) of $n$-NLS$^{s}$ and DS is zero, we get with  (\ref{transf caract})
\begin{align}
\left(\begin{matrix}
\tilde{\delta}_{1}\\
\tilde{\delta}_{2}
\end{matrix}\right)
&=
\frac{1}{2}\,\text{Diag}\left(\begin{matrix}
D^{t}B\\
C^{t}A
\end{matrix}\right) \nonumber
\end{align}
(note that $D^{t}B$ and $C^{t}A$ are symmetric matrices).
Substituting (\ref{mA}) and (\ref{mB}) in (\ref{mC}) (resp. (\ref{mD})), it can be checked that
\begin{align}
C^{t}A&=\frac{1}{2}\,\left(A^{t}\mathbb{H}A-2\,\text{Re}\left(P_{\tilde{\mathcal{B}}}\right)\tilde{\mathbb{M}}^{-1}\,\text{Im}\left(P_{\tilde{\mathcal{B}}}^{t}\right)\right),\nonumber\\
D^{t}B&=\frac{1}{2}\,\left(B^{t}\mathbb{H}B-2\,\text{Re}\left(P_{\tilde{\mathcal{A}}}\right)\tilde{\mathbb{M}}^{-1}\,\text{Im}\left(P_{\tilde{\mathcal{A}}}^{t}\right)\right),\nonumber
\end{align}
with 
\[\tilde{\mathbb{M}}=\text{Im}\left(P_{\tilde{\mathcal{B}}}^{t}\right)\text{Re}\left(P_{\tilde{\mathcal{A}}}\right)-\text{Im}\left(P_{\tilde{\mathcal{A}}}^{t}\right)\text{Re}\left(P_{\tilde{\mathcal{B}}}\right).\]

Moreover, the real constant $h$ appearing in the solutions (\ref{psi DS})-(\ref{phi DS}) of the Davey-Stewartson equations becomes $h+\tilde{h}$, where $\tilde{h}$ is given by
\begin{equation}
\tilde{h}=-\tilde{\mathbf{V}}_{a}^{t}\,(\tilde{\mathbb{K}}^{t})^{-1}B\,\tilde{\mathbf{V}}_{a}-\tilde{\mathbf{V}}_{b}^{t}\,(\tilde{\mathbb{K}}^{t})^{-1}B\,\tilde{\mathbf{V}}_{b}. \label{tilde h}
\end{equation}
Notice that the construction of the solutions (\ref{psi DS}) given in \cite{Kalla} allows to express the imaginary part of the constant $\tilde{h}$ (\ref{tilde h}) in terms of the characteristic $\tilde{\delta}$. Namely, since the reality condition
\begin{equation}
\psi^{*}=\rho\, \overline{\psi} \label{real cond}
\end{equation}
is satisfied for the Vinnikov basis, where the function $\psi^{*}(\xi,\eta,t)$ reads
\begin{equation}
\psi^{*}(\xi,\eta,t)=-\kappa_{1}\kappa_{2}\,\frac{q_{2}(a,b)}{A}\,\frac{\Theta(\mathbf{Z}-\mathbf{d}-\mathbf{r})}{\Theta(\mathbf{Z}-\mathbf{d})}\,\exp\left\{ \mathrm{i}\left(G_{1}\,\xi+G_{2}\,\eta-G_{3}\,\tfrac{t}{2}\right)\right\}, \label{psi* DS}
\end{equation}
it also holds for the computed basis. Therefore,
putting $\xi=\eta=0$, $t=2$, $\mathbf{d}=0$, and taking the modulus of each term in (\ref{real cond}) expressed in the computed basis, one gets:
\[\left|\frac{\Theta_{\tilde{\mathbb{B}}}[\tilde{\delta}](\mathbf{\tilde{Z}}-\mathbf{\tilde{r}})}{\Theta_{\tilde{\mathbb{B}}}[\tilde{\delta}](\mathbf{\tilde{Z}})}\exp\{-\mathrm{i}\,(\tilde{G}_{3}+\tilde{h})\}\right|=\left| \frac{\Theta_{\tilde{\mathbb{B}}}[\tilde{\delta}](\mathbf{\tilde{Z}}+\mathbf{\tilde{r}})}{\Theta_{\tilde{\mathbb{B}}}[\tilde{\delta}](\mathbf{\tilde{Z}})}\exp\{\mathrm{i}\,(\tilde{G}_{3}+\tilde{h})\}\right|\]
where $\mathbf{\tilde{Z}}=\mathrm{i}(\mathbf{\tilde{W}}_{a}-\mathbf{\tilde{W}}_{b})$. 
We deduce that
\begin{equation}
\text{Im}(\tilde{h})=\frac{1}{2}\,\ln\left\{\left|\frac{\Theta_{\tilde{\mathbb{B}}}[\tilde{\delta}](\mathbf{\tilde{Z}}+\mathbf{\tilde{r}})}{\Theta_{\tilde{\mathbb{B}}}[\tilde{\delta}](\mathbf{\tilde{Z}}-\mathbf{\tilde{r}})}\right|\right\}-\text{Im}(\tilde{G}_{3}).
\end{equation}
\end{proof}

\begin{remark}
In the case where the spectral curve is an M-curve, i.e. 
$\mathbb{H}=0$, the vectors of characteristic (\ref{caract1}) and 
(\ref{caract2}) do not depend explicitly on the symplectic matrix appearing in the change of homology basis and are uniquely defined by:
\begin{align}
\tilde{\delta}_{1}&= \frac{1}{2}\,\text{diag}\left(\text{Re}\left(P_{\tilde{\mathcal{A}}}\right)\left[\text{Im}\left(P_{\tilde{\mathcal{B}}}^{t}\right)\text{Re}\left(P_{\tilde{\mathcal{A}}}\right)-\text{Im}\left(P_{\tilde{\mathcal{A}}}^{t}\right)\text{Re}\left(P_{\tilde{\mathcal{B}}}\right)\right]^{-1}\text{Im}\left(P_{\tilde{\mathcal{A}}}^{t}\right)\right),  \label{caract1 bis}\\ 
\tilde{\delta}_{2}&=\frac{1}{2}\,\text{diag}\left(\text{Re}\left(P_{\tilde{\mathcal{B}}}\right)\left[\text{Im}\left(P_{\tilde{\mathcal{B}}}^{t}\right)\text{Re}\left(P_{\tilde{\mathcal{A}}}\right)-\text{Im}\left(P_{\tilde{\mathcal{A}}}^{t}\right)\text{Re}\left(P_{\tilde{\mathcal{B}}}\right)\right]^{-1}\text{Im}\left(P_{\tilde{\mathcal{B}}}^{t}\right)\right).  \label{caract2 bis}
\end{align}
\end{remark}

It would be possible to compute the theta-functional solutions in 
the  Vinnikov basis once the symplectic 
transformation between this basis and the 
basis determined by the code is known. 
However, since this symplectic transformation is not unique,  the found 
 Vinnikov basis leads  in general to a Riemann matrix for which the theta series 
converges only slowly, i.e., the value $N_{\theta}$ in 
(\ref{thetanum}) has to be chosen very large. To avoid this problem, we 
compute the theta function always in the typically more convenient 
Tretkoff-Tretkoff basis with the characteristic of the theta 
functions given by (\ref{caract1})-(\ref{M}).

\subsection{Trott curve}

The Trott curve \cite{Trott} given by the algebraic equation 
\begin{equation}
    144\,(x^{4}+y^{4})-225\,(x^{2}+y^{2})+350\,x^{2}y^{2}+81=0
    \label{trott}
\end{equation}
is an M-curve with respect to the anti-holomorphic involution $\tau$ defined by $\tau(x,y)=(\overline{x},\overline{y})$, and is of genus 3. Moreover, this curve has real branch points only (and 28 real bitangents, namely, tangents to the curve in two 
places). Our computed matrices of $\tilde{\mathcal{A}}$ and 
$\tilde{\mathcal{B}}$-periods read\footnote{For the ease of 
representation we only give 4 digits here, though at least 12 digits 
are known for these quantities.}
{\small\[P_{\tilde{\mathcal{A}}}=\left(\begin{array}{ccc}
\,0.0235 \mathrm{i}&0.0138 \mathrm{i}&\,0.0138 \mathrm{i}\\
0&0.0277 \mathrm{i}&0\\
-0.0315\,\,\,\,\,\,&0&0.0250\,
\end{array}\right),\]\\
\[P_{\tilde{\mathcal{B}}}=\left(\begin{array}{ccc}
-0.0315+0.0235 \mathrm{i}&0.0138 \mathrm{i}&-0.0250+0.0138 \mathrm{i}\\
0&-0.025+0.0277 \mathrm{i}&0.0250\\
-0.0235 \mathrm{i}\,\,&0.0138 \mathrm{i}&\,\,0.0138 \mathrm{i}
\end{array}\right).
\]}\\
The Trott curve being an M-curve, the vectors of the characteristic $\tilde{\delta}$ satisfy (\ref{caract1 bis}) and (\ref{caract2 bis}), which leads to $\tilde{\delta}=
\frac{1}{2}[\begin{smallmatrix}
    0 & 0 & 0  \\
    1 & 1 & 0
\end{smallmatrix}]^{t}
$.

A possible choice of a symplectic transformation bringing the 
computed basis to  the Vinnikov basis is:\\
{\small\[A=\left(\begin{array}{rrr}
1&0&0\\
0&1&0\\
0&0&1
\end{array}\right), \quad 
B=\left(\begin{array}{rrr}
-1&0&0\\
0&-1&0\\
0&0&0
\end{array}\right), \quad
C=\left(\begin{array}{rrr}
1&0&0\\
0&1&0\\
0&0&0
\end{array}\right), \quad 
D=\left(\begin{array}{rrr}
0&0&0\\
0&0&0\\
0&0&1
\end{array}\right).\]}\\
Note that the matrices $A,B,C,D$ are not unique since the action (\ref{hom basis}) of the anti-holomorphic involution on the basic cycles allows for permutations of $\mathcal{A}_{j}$-cycles for instance. 
These matrices can be computed as follows. Since the Trott curve is an M-curve, one has $\mathbb{H}=0$. Moreover, the matrix Im$(P_{\tilde{\mathcal{B}}})$ being invertible here, by (\ref{A,B Im}) one gets:
\begin{equation}
B=-A \,\text{Im}\left(P_{\tilde{\mathcal{A}}}\right) \left(\text{Im}\left(P_{\tilde{\mathcal{B}}}\right)\right)^{-1}.  \label{Trott B}
\end{equation}
With (\ref{Symp1}) and (\ref{Symp2}) it follows that
\begin{equation}
A^{t}\left(D+C\,\text{Im}\left(P_{\tilde{\mathcal{A}}}\right) \left(\text{Im}\left(P_{\tilde{\mathcal{B}}}\right)\right)^{-1}\right)=\mathbb{I}_{3}.  \label{Trott symp}
\end{equation}
The computed matrix $\text{Im}\left(P_{\tilde{\mathcal{A}}}\right) 
\left(\text{Im}\left(P_{\tilde{\mathcal{B}}}\right)\right)^{-1}$ 
being (within numerical precision) equal to
\[\text{Im}\left(P_{\tilde{\mathcal{A}}}\right) \left(\text{Im}\left(P_{\tilde{\mathcal{B}}}\right)\right)^{-1}=\left(\begin{array}{rrr}
-1&0&0\\
0&-1&0\\
0&0&0
\end{array}\right),\]
and with $C,D\in\M_{3}(\Z)$, 
we get from (\ref{Trott symp}) that $\det A=1$. Since 
$A\in\M_{3}(\Z)$, the condition $\det A=1$ implies $A\in Gl_{3}(\Z)$.
For any $A\in Gl_{3}(\Z)$, one can see from (\ref{Trott B}), 
(\ref{mC}) and (\ref{mD}) that $B,C,D \in \M_{3}(\Z)$, and therefore 
that the matrices $A,B,C,D$ give a solution of (\ref{A,B 
Re})-(\ref{C,D Im}). The choice $A=\mathbb{I}_{3}$ leads to the above matrices.

The Trott curve has real fibers and can thus be used to construct 
solutions to the 3-NLS equation via the projection map 
$f:(x,y)\mapsto x$, which is a real meromorphic function of degree $4$ 
on the curve. We consider the points on the  curve  stable with 
respect to $\tau$ and 
projecting to the point with $x=0.1$ in the $x$-sphere, and choose $\mathbf{d}=0$. The corresponding solution to the 
3-NLS equation can be seen in Fig.~\ref{figtrottnls}.
\begin{figure}[htb!]
\begin{center}
  \includegraphics[width=0.7\textwidth]{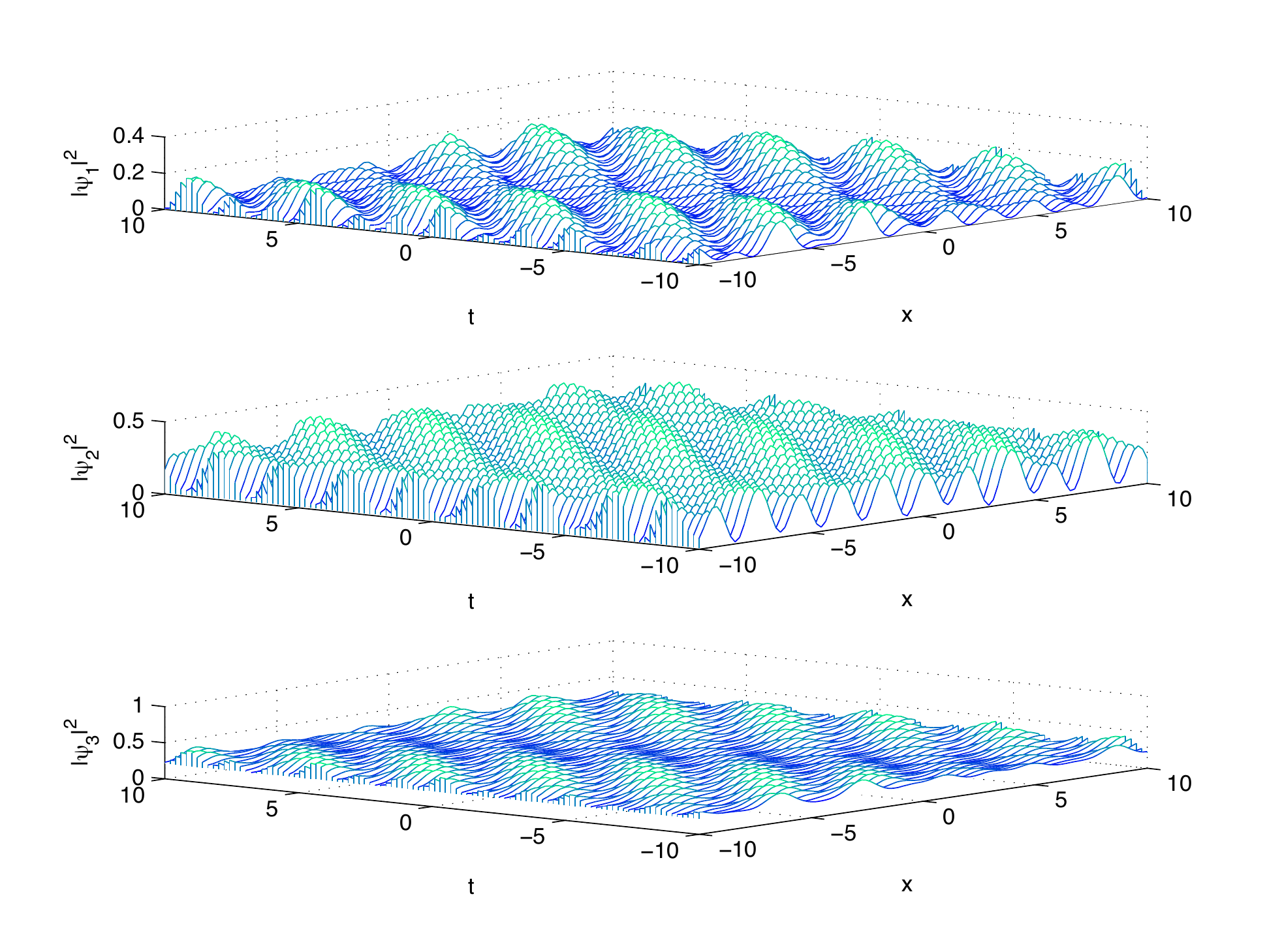}
\end{center}
 \caption{\textit{Solution (\ref{sol n-NLS}) to the 3-NLS$^{s}$  equation  on the Trott curve 
 for the points with $x=0.1$ on the $x$-sphere. The sheets are identified 
 at the points projecting to  $x=-1.0129, (0.9582\mathrm{i},- 0.9582\mathrm{i},0.1146\mathrm{i},- 
 0.1146\mathrm{i})$. The vector of signs equals $s=(1,-1,-1)$ 
 from top to bottom.}}
   \label{figtrottnls}
\end{figure}

A solution to the DS1$^{+}$ equation on this  curve can be 
constructed for points $a$ and $b$ stable with respect to the 
involution $\tau$. The solution for $a=(-0.2)^{(1)}$, $b=(0.2)^{(2)}$ and the choice $\mathbf{d}=0$ can be 
seen in Fig.~\ref{figtrottds1p}. Note that in accordance with Remark 
2.1,  one would obtain a solution of DS1$^{-}$ for the choice  $a=(-0.2)^{(1)}$ and 
$b=(0.2)^{(1)}$.

Similarly, a solution to the DS2$^{+}$ equation can be obtained for 
points $a$ and $b$ subject to $\tau a=b$. For $a=(0.1+\mathrm{i})^{(1)}$ 
and $b=(0.1-\mathrm{i})^{(1)}$ we 
get Fig.~\ref{figtrottds2p}.

\begin{figure}[htb!]
\begin{center}
\includegraphics[width=0.7\textwidth]{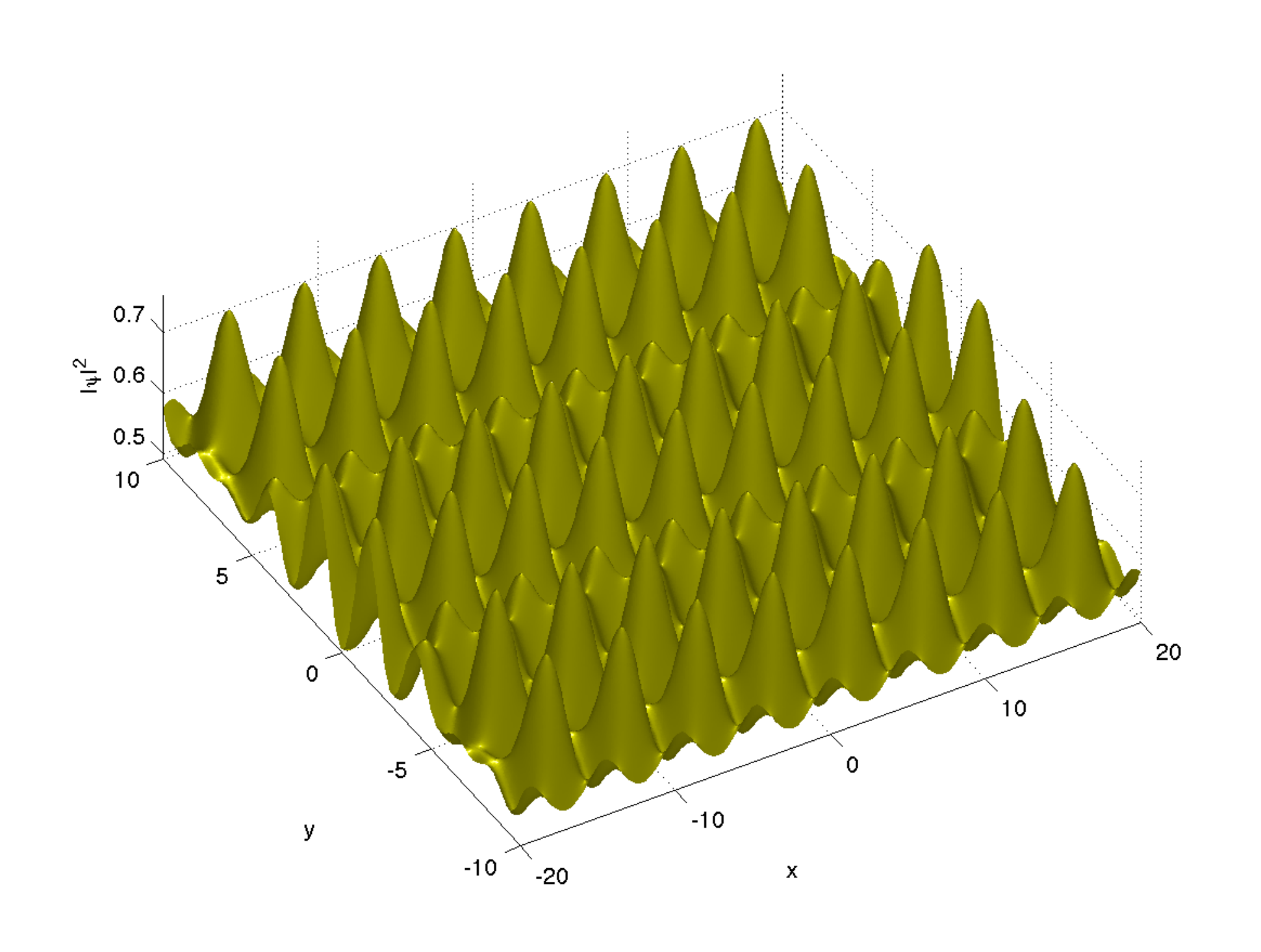}
\end{center}
 \caption{\textit{Solution to the DS1$^{+}$ equation  on the Trott curve 
 for the points $a=(-0.2)^{(1)}$ and $b=(0.2)^{(2)}$ at $t=0$.}}
 \label{figtrottds1p}
\end{figure}

\begin{figure}[htb!]
\begin{center}
\includegraphics[width=0.7\textwidth]{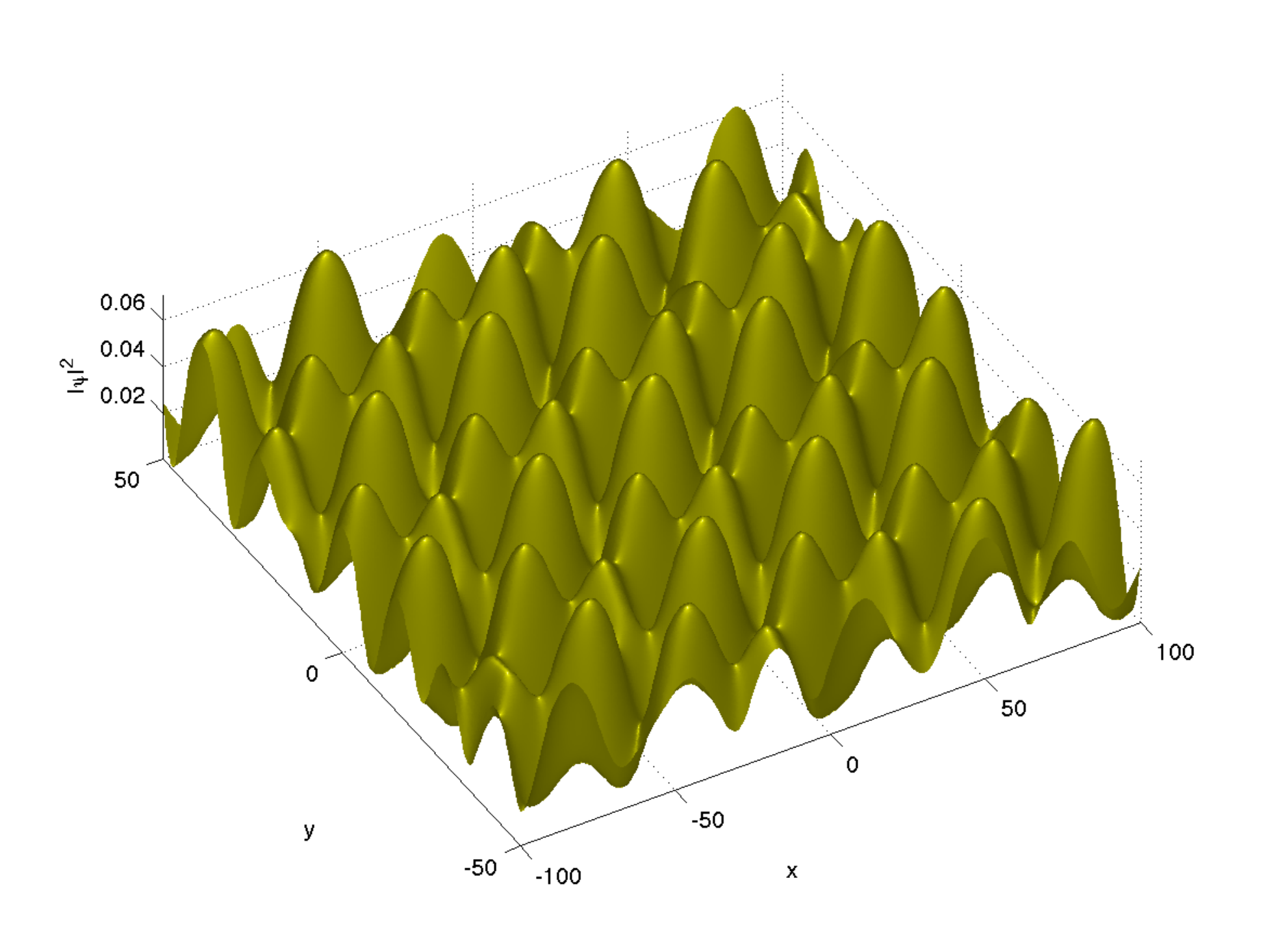}
\end{center}
 \caption{\textit{Solution to the DS2$^{+}$ equation  on the Trott curve 
 for the points $a=(0.1+\mathrm{i})^{(1)}$ and $b=(0.1-\mathrm{i})^{(1)}$ at $t=0$.}}
 \label{figtrottds2p}
\end{figure}


\subsection{Dividing curves without real branch point}

We consider

the curve given by the equation 
\begin{equation}
    30x^4-61x^{3}y+41y^2x^2-43x^2-11y^3x+42xy+y^4-11y^2+9=0
    \label{divg3}
\end{equation}
which was  studied in \cite{Dubm} and \cite{Vin}. It is a genus 3 curve, dividing with respect to the anti-holomorphic 
involution $\tau$, without real branch point. This curve 
admits two real ovals. In this case the matrix $\mathbb{H}$ has the 
form  
{\small\[\mathbb{H}=
\left(\begin{matrix}
    0 & 1 & 0  \\
    1 & 0 & 0   \\
    0 & 0 & 0
\end{matrix}\right).
\]}
The  period matrices computed by the code read
{\small\[P_{\tilde{\mathcal{A}}}=\left(\begin{array}{rrr}
-0.2721-0.0977\mathrm{i}&-0.3193+0.1914\mathrm{i}&-1.0668+0.4293\mathrm{i}\\
0.2721+0.0977\mathrm{i}&-0.3193-0.3341\mathrm{i}&-1.0668-0.4316\mathrm{i}\\
0.2721-0.0977\mathrm{i}&0.4676-0.3341\mathrm{i}&0.7992-0.4316\mathrm{i}
\end{array}\right),\]\\
\[P_{\tilde{\mathcal{B}}}=\left(\begin{array}{rrr}
-0.2721-0.2932\mathrm{i}&-0.3193+0.3341\mathrm{i}&-1.0668+0.4316\mathrm{i}\\
0.2721+0.2932\mathrm{i}&-0.3193-0.7169\mathrm{i}&-1.0668-1.2903\mathrm{i}\\
0.2721-0.0977\mathrm{i}&0.4676+0.1914\mathrm{i}&0.7992+0.4293\mathrm{i}
\end{array}\right).
\]}\\
After some calculations, one finds that the following matrices 
$A,B,C,D$ provide a solution of (\ref{A,B Re})-(\ref{C,D Im}):\\
{\small\[A=\left(\begin{array}{rrr}
-1&2&-1\\
2&-1&0\\
0&2&-1
\end{array}\right), \,
B=\left(\begin{array}{rrr}
1&0&1\\
0&1&0\\
1&0&0
\end{array}\right), \,
C=\left(\begin{array}{rrr}
1&-1&-1\\
-1&1&-1\\
0&0&1
\end{array}\right), \,
D=\left(\begin{array}{rrr}
0&1&1\\
1&0&1\\
0&0&-1
\end{array}\right).\]}\\
From (\ref{caract1}) and (\ref{caract2}) one gets for the characteristic:  $\tilde{\delta}=
\frac{1}{2}[\begin{smallmatrix}
    0 & 0 & 1  \\
    1 & 1 & 0
\end{smallmatrix}]^{t}
$.

The curve (\ref{divg3}) has real fibers and can thus be used to construct 
solutions to the focusing 3-NLS equation. We consider the  points on the  curve 
with $x=2.5$ and stable with respect to $\tau$, and we choose $\mathbf{d}=0$. The corresponding solution to the 
focusing 3-NLS equation can be seen in Fig.~\ref{figdiv3nls}.

A solution to the DS1$^{-}$ equation can be constructed by choosing 
the points $a=(-4)^{(1)}$ and $b=(-3)^{(2)}$  see 
Fig.~\ref{figdiv3ds1m}.
\begin{figure}[htb!]
\begin{center}
  \includegraphics[width=0.7\textwidth]{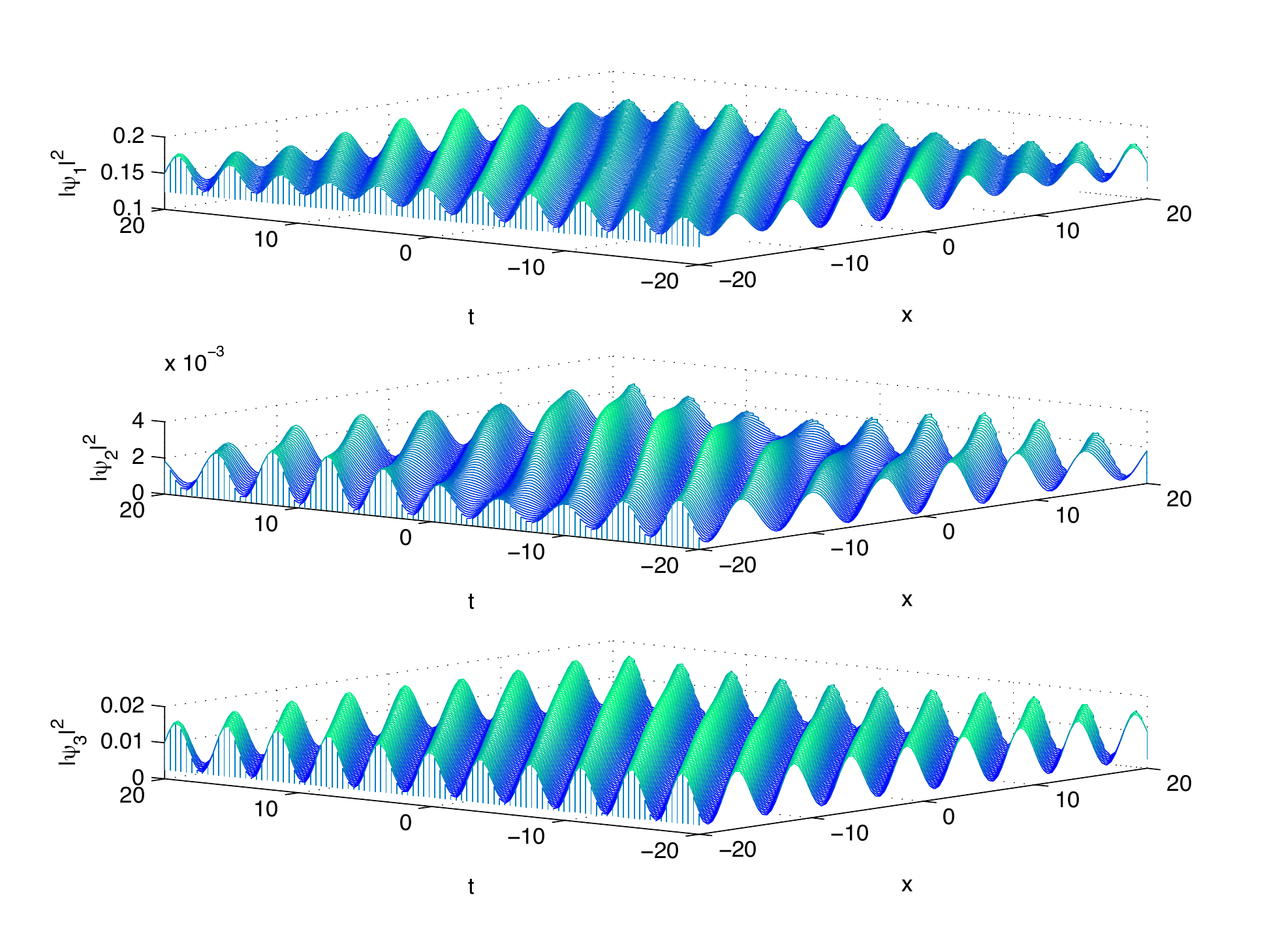}
\end{center}
 \caption{\textit{Solution to the 3-NLS$^{s}$  equation  on the dividing curve 
 (\ref{divg3}) of genus 3 
 for the points with $x=2.5$ on the $x$-sphere. The sheets are identified 
 at the fiber over $-2.1404 + 0.4404\mathrm{i}, 
 (-12.2492 + 2.0113\mathrm{i},  -5.1634 + 1.3519\mathrm{i},  -4.5915 + 0.9380\mathrm{i},  
 -1.5405 + 0.5429\mathrm{i})$. The vector of signs is $s=(1,1,1)$.}}
   \label{figdiv3nls}
\end{figure}

\begin{figure}[htb!]
\begin{center}
\includegraphics[width=0.7\textwidth]{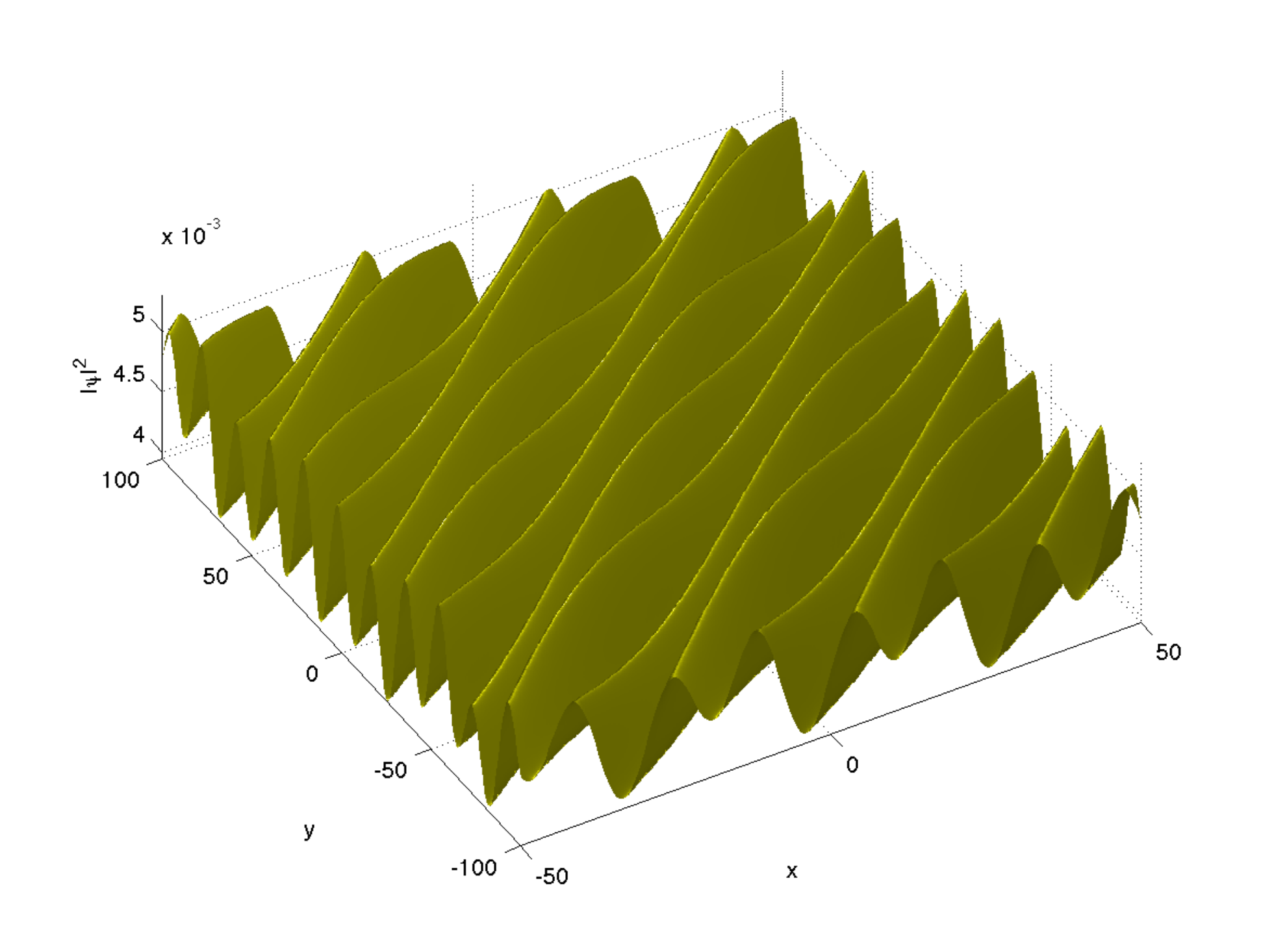}
\end{center}
 \caption{\textit{Solution to the DS1$^{-}$ equation  on the dividing curve  (\ref{divg3}) of genus 3 
 for the points $a=(-4)^{(1)}$ and $b=(-3)^{(2)}$ at $t=0$.}}
 \label{figdiv3ds1m}
\end{figure}

\subsection{Fermat curve}
The Fermat curves 
\begin{equation}
    y^{n}+x^{n}+1=0, \qquad n>2, \quad \text{$n$ even},
    \label{fermat}
\end{equation}
are real curves without real oval with respect to $\tau$. We consider here the curve with 
$n=4$ that has genus 3. The matrix $\mathbb{H}$ has the 
form  
{\small\[\mathbb{H}=
\left(\begin{matrix}
    0 & 1 & 0  \\
    1 & 0 & 0   \\
    0 & 0 & 0
\end{matrix}\right),
\]}
and we find 
$$   P_{\tilde{\mathcal{A}}}=\left(\begin{array}{ccc}
    0.9270\, &  \,- 0.9270\mathrm{i} & -0.9270\mathrm{i}\\
	     0&  0&            - 1.8541\mathrm{i}  \\
	     0.9270\mathrm{i} &        -0.9270\,& - 0.9270\mathrm{i}
	\end{array}\right),$$

	$$
	P_{\tilde{\mathcal{B}}}=\left(\begin{array}{ccc}
	0.9270 + 0.9270\mathrm{i} &  0.9270 - 0.9270\mathrm{i} & 0\,\,\\
	0&  -0.9270 + 	0.9270\mathrm{i}&      0.9270 - 0.9270\mathrm{i}       \\
		    -0.9270 &     -0.9270\mathrm{i}   & - 0.9270\mathrm{i}
	       \end{array}\right).
$$
The following matrices 
$A,B,C,D$ provide a solution of (\ref{A,B Re})-(\ref{C,D Im}):\\
{\small\[A=\left(\begin{array}{rrr}
0&1&1\\
1&0&0\\
0&0&1
\end{array}\right), \,
B=\left(\begin{array}{rrr}
-1&-2&-1\\
0&0&-1\\
-1&-1&0
\end{array}\right), \,
C=\left(\begin{array}{rrr}
0&1&0\\
0&0&1\\
1&-1&0
\end{array}\right), \,
D=\left(\begin{array}{rrr}
0&0&-1\\
0&-1&0\\
0&0&1
\end{array}\right),\]}\\
which leads to the characteristic:  $\tilde{\delta}=
\frac{1}{2}[\begin{smallmatrix}
    0 & 0 & 1  \\
    0 & 1 & 0
\end{smallmatrix}]^{t}
$.

To construct a solution of the DS2$^{-}$ 
equation on the Fermat  curve, we choose the points 
$a=(-1.5+\mathrm{i})^{(1)}$ 
and $b=(-1.5-\mathrm{i})^{(3)}$. 
The resulting solution for the choice $\mathbf{d}=0$ can be seen in 
Fig.~\ref{figDS2mfermat}.

\begin{figure}[htb!]
\begin{center}
\includegraphics[width=0.7\textwidth]{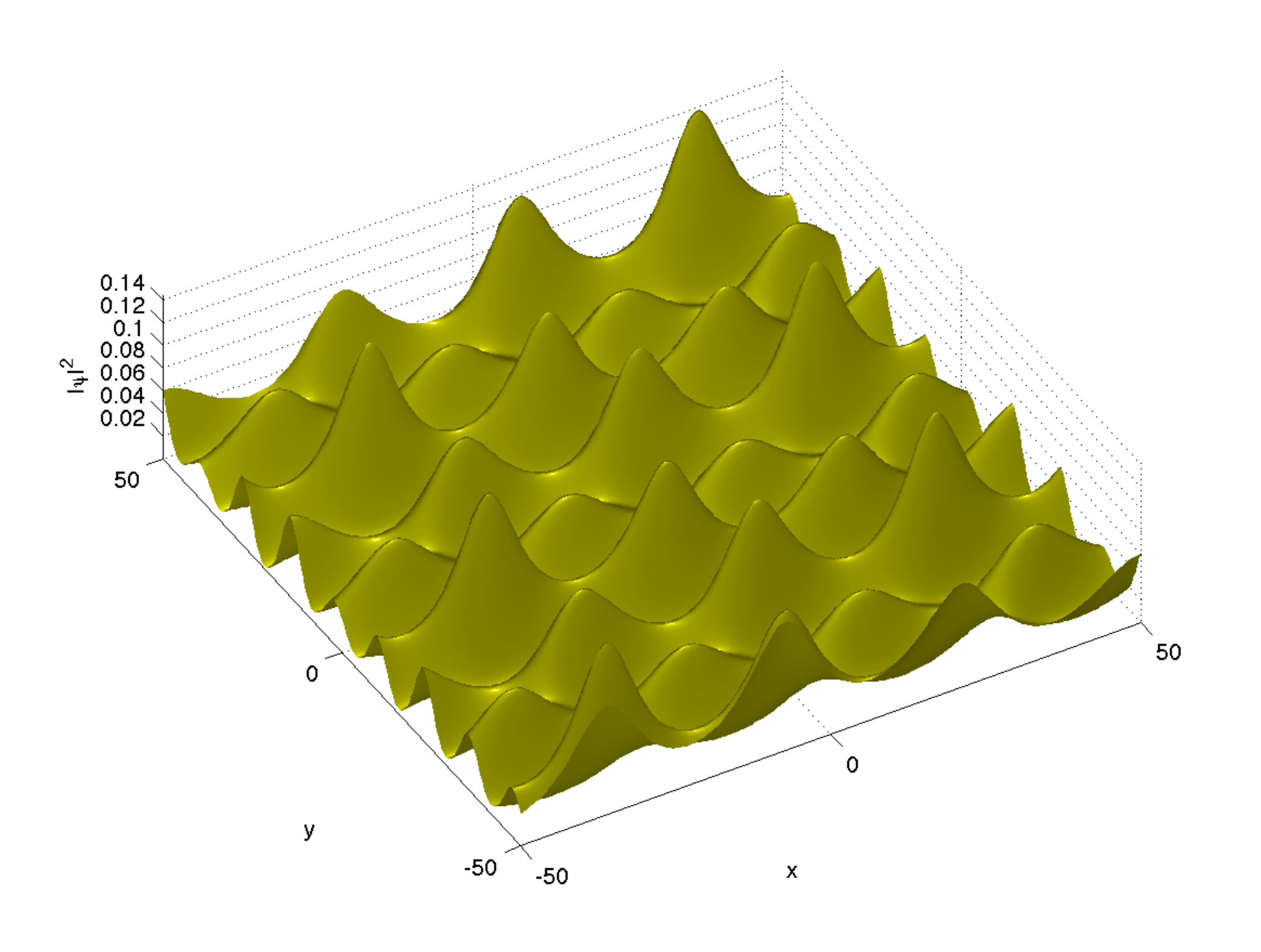}
\end{center}
 \caption{\textit{Solution to the DS2$^{-}$ equation  on the Fermat curve  
 (\ref{fermat}) of genus 3 
 for the points $a=(-1.5+\mathrm{i})^{(1)}$ 
and $b=(-1.5-\mathrm{i})^{(3)}$ at $t=0$.}}
 \label{figDS2mfermat}
\end{figure}

\section{Conclusion}
In this paper we have presented the state of the art of the numerical 
evaluation of solutions to integrable equations in terms of 
multi-dimensional theta functions associated to  real Riemann surfaces by 
using an approach via  real algebraic curves. It 
was shown that real hyperelliptic  curves parametrized by 
the branch points can be treated with machine precision for a wide 
range of the parameters. Even almost degenerate situations where the 
branch points coincide pairwise can be handled as long as at least 
one cut stays finite. This approach to real hyperelliptic curves 
\cite{FK1,FK2} is being generalized to arbitrary hyperelliptic 
curves. 

As discussed in \cite{FK}, the main difficulty  for general algebraic curves is 
the correct numerical identification of the branch points. The case of degenerations for given branch 
points has not yet been studied numerically, but is planned for the 
future. In what concerns the solutions (\ref{sol n-NLS}) to $n$-NLS$^{s}$ and 
similar solutions to the DS and the Kadomtsev-Petviashvili equations, the main problem  in the context of real Riemann surfaces is to find the symplectic 
transformation  leading to the homology basis introduced in \cite{Vin}, for which the solutions  of the studied equations, with regularity conditions,  can be conveniently formulated. This problem 
has been reduced to find a single $g\times g$-matrix for given periods and 
real ovals, the latter encoded by the matrix $\mathbb{H}$. For 
M-curves, where the matrix  $\mathbb{H}$ vanishes, a general formula 
for the characteristic (\ref{caract1})-(\ref{M}) could be 
given. In the general case, an algorithm along the lines indicated in 
the previous section to find the transformation 
will be based on a sufficiently general ansatz for one of the 
 matrices entering the symplectic transformation which is the subject of future work.

\end{document}